\crefname{equation}{}{} 
\colorlet{refkey}{orange!20}
\colorlet{labelkey}{blue!30}
\numberwithin{equation}{section}
\newtheorem{theorem}{Theorem}[section]
\newtheorem{proposition}[theorem]{Proposition}
\newtheorem{lemma}[theorem]{Lemma}
\newtheorem{claim}[theorem]{Claim}
\crefname{claim}{Claim}{Claims}
\newtheorem{corollary}[theorem]{Corollary}
\newtheorem*{question*}{Question}
\newtheorem{fact}[theorem]{Fact}
\theoremstyle{definition}
\newtheorem{definition}[theorem]{Definition}
\newtheorem*{definition*}{Definition}
\theoremstyle{remark}
\newcommand{\poly}{\mathrm{poly}}
\newcommand{\argmin}{\mathrm{argmin }}
\newcommand{\argmax}{\mathrm{argmax }}
\newcommand{\eps}{\varepsilon}
\newcommand{\R}{\mathbb{R}}
\newcommand{\Z}{\mathbb{Z}}
\renewcommand{\d}{d}
\newcommand{\fO}{\mathcal{O}}
\newcommand{\tO}{\tilde{O}}
\newcommand{\td}{\widetilde{d}}
\newcommand{\hd}{\widehat{d}}
\newcommand{\tD}{\tilde{D}}
\newcommand{\avg}{\mathrm{avg}}
\newcommand{\congest}{$\mathsf{CONGEST}\,$}
\newcommand{\1}[1]{\mathbb{1}[#1]}
\renewcommand{\tilde}{\widetilde}
\newcommand{\lf}{\lfloor}
\newcommand{\rf}{\rfloor}
\newcommand{\Nin}{N^{\text{in}}}
\newcommand{\Nout}{N^{\text{out}}}
\title{Parallel Breadth-First Search and Exact Shortest Paths\\ and Stronger Notions for Approximate Distances}
\author{
Václav Rozhoň\thanks{Supported by the European Research Council (ERC) under the European Unions Horizon 2020 research and innovation programme (grant agreement No.~853109).}\\
\small ETH Zurich \\
\small rozhonv@inf.ethz.ch\\
\and \textcircled{r}\footnote{The author ordering was randomized using \url{https://www.aeaweb.org/journals/policies/random-author-order/generator}. 
 It is requested that citations of this work list the authors separated by \texttt{\textbackslash textcircled\{r\}} instead of commas} \and
Bernhard Haeupler \thanks{Supported in part by NSF grants CCF-1814603, CCF-1910588, NSF CAREER award CCF-1750808, a Sloan Research Fellowship, funding from the European Research Council (ERC) under the European Union's Horizon 2020 research and innovation program (ERC grant agreement 949272), and the Swiss National Foundation (project grant 200021-184735).} \\
\small Carnegie Mellon University \& ETH Zurich\\
\small bernhard.haeupler@inf.ethz.ch
\and 
Anders Martinsson \\
\small ETH Zurich \\
\small anders.martinsson@inf.ethz.ch\\
\and \textcircled{r} \and
Christoph Grunau  \footnotemark[1]\\
\small ETH Zurich \\
\small cgrunau@inf.ethz.ch\\
\and \textcircled{r} \and
Goran Zuzic\\
\small ETH Zurich \\
\small goran.zuzic@inf.ethz.ch \\
}
\date{}
\begin{document}

\maketitle

\begin{abstract}
We introduce stronger notions for approximate single-source shortest-path distances, show how to efficiently compute them from weaker standard notions, and demonstrate the algorithmic power of these new notions and transformations. One application is the first work-efficient parallel algorithm for computing exact single-source shortest paths graphs --- resolving a major open problem in parallel computing. 

\medskip

Given a source vertex in a \emph{directed graph} with polynomially-bounded nonnegative integer \emph{lengths} the algorithm computes an \emph{exact shortest path} tree in $m \log^{O(1)} n$ work and $n^{1/2+o(1)}$ depth. %
Previously, no parallel algorithm improving the trivial linear depths of Dijkstra's algorithm without significantly increasing the work was known, even for the case of undirected and unweighted graphs (i.e., for computing a BFS-tree). 

\medskip

Our main result is a black-box transformation that uses $\log^{O(1)} n$ standard approximate distance computations to produce approximate distances which also satisfy the subtractive triangle inequality (up to a $(1+\eps)$ factor) and even induce an exact shortest path tree in a graph with only slightly perturbed edge lengths. These strengthened approximations are algorithmically significantly more powerful and overcome well-known and often encountered barriers for using approximate distances. In directed graphs they can even be boosted to exact distances. This results in a black-box transformation of any (parallel or distributed) algorithm for approximate shortest paths in directed graphs into an algorithm computing exact distances at essentially no cost. Applying this to the recent breakthroughs of Fineman et al. for compute approximate SSSP-distances via approximate hopsets gives new parallel and distributed algorithm for exact shortest paths.

\end{abstract}

\thispagestyle{empty}
\setcounter{page}{0}

\newpage

\tableofcontents



\newpage

\setcounter{page}{1}

\section{Introduction}

This paper gives the first \emph{work-efficient} parallel algorithms with \emph{sublinear depth} for computing the \emph{exact} distances and shortest paths in weighted graphs --- resolving major open problems in parallel computing and breaking a long-standing barrier where no such algorithm was known even for breadth-first search (BFS). The simplest version of our main results are easy to state formally:

\begin{theorem}\label{thm:main-exact-intro}
There exists a parallel algorithm that takes as input any directed or undirected graph $G=(V,E)$ with $m=|E|$ edges, $n=|V|$ nodes, a length function $\ell$ that assigns to every edge $e$ a non-negative polynomially-bounded integer lengths $\ell(e)$, and any source vertex $s \in V$. The algorithm computes for every vertex the exact distance to $s$ and an exact shortest path tree rooted at $s$. The algorithm is randomized, requires $m \log^{O(1)} n$ work, $n^{1/2+o(1)}$ depth, and works with high probability.
\end{theorem}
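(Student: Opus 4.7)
The plan is to obtain Theorem~\ref{thm:main-exact-intro} as a clean corollary of the paper's main black-box transformation (highlighted in the abstract) applied to the recent parallel $(1+\eps)$-approximate SSSP algorithm of Fineman et al. Since that transformation upgrades any approximate SSSP oracle to an exact SSSP algorithm in directed graphs using only $\log^{O(1)} n$ oracle calls, the theorem reduces to plugging in an oracle with the desired resource profile.

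First, I would fix the accuracy parameter $\eps := n^{-C}$ for a sufficiently large constant $C$, so that $\eps$ is smaller than the inverse of twice the polynomial weight bound. I would then invoke Fineman et al.'s parallel $(1+\eps)$-approximate SSSP algorithm, which has complexity polylogarithmic in $1/\eps$, as the base oracle; each call costs $m \log^{O(1)} n$ work and $n^{1/2+o(1)}$ depth. Passing this oracle to the transformation produces distance estimates that (i) are $(1+\eps)$-close to the true distances and (ii) realize exact shortest paths for some length function $\ell'$ within a $(1+\eps)$ factor of $\ell$. The transformation makes $\log^{O(1)} n$ oracle calls, so the overall work is $m \log^{O(1)} n$ and the depth is $n^{1/2+o(1)}$.

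Second, I would round the strengthened estimates to exact integer distances. Because $\ell$ is nonnegative-integer with values in $[0,\poly(n)]$, every true distance is a nonnegative integer of magnitude $\poly(n)$, and the multiplicative $(1+\eps)$ error is a sub-unit additive error; rounding to the nearest integer yields $\dist(s,v)$ exactly for every $v$. For the tree, the transformation already guarantees an exact shortest path tree for a length function $\ell'$ within a $(1+\eps)$ factor of $\ell$. Because non-tight edges in the integer-weighted graph are off from tightness by at least $1$ while the perturbation is below $\eps \cdot \poly(n) \ll 1$, the same tree remains an exact shortest path tree for $\ell$. Verifying and rooting the tree is local and adds only $O(\log n)$ depth.

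The main obstacle is not in the derivation of Theorem~\ref{thm:main-exact-intro} but rather in establishing the black-box transformation itself, i.e., producing from standard $(1+\eps)$-approximate distances a strengthened distance function satisfying the subtractive triangle inequality and inducing a shortest path tree in a near-identical graph; given that result, the present theorem follows by the composition above, with the only subtle point being the choice of $\eps$ small enough that integer weights permit unique rounding of both distances and edge-tightness tests.
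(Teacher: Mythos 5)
Your reduction hinges on the assumption that the base approximate SSSP oracle has complexity \emph{polylogarithmic} in $1/\eps$, which is what would let you set $\eps = n^{-C}$ and then win by rounding. No such oracle is known, and the paper explicitly points this out: all state-of-the-art work-efficient approximation algorithms, including the directed algorithm of Cao, Fineman and Russell~\cite{cao2020paralleldirected} that the paper actually uses, have work polynomial in $1/\eps$ (e.g.\ $\tilde{O}(\eps^{-O(1)}m)$). With $\eps = n^{-C}$ your oracle calls would cost $m \cdot \poly(n)$ work, not $m\log^{O(1)}n$, so the stated work bound fails. Indeed, if an oracle with polylog dependence on $1/\eps$ existed, \Cref{thm:main-exact-intro} would follow by rounding alone and none of the paper's machinery would be needed; the whole point of the paper is to achieve exactness while only ever calling the oracle at accuracy $(1+O(1/\log n))$. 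The paper's actual route is: use \Cref{thm:smooth} to turn $O(\log n)$ such coarse oracle calls into \emph{smoothly} $2$-approximate (or $(1+\eps)$-approximate) distances, and then apply the Klein--Subramanian boosting of \Cref{lemma:folklore_directed_smoothing}, which iteratively re-weights edges by $\ell'(e) = \ell(e) - \td(v)/\alpha + \td(u)/\alpha$ (smoothness is exactly what keeps these lengths nonnegative), shrinking the total distance mass geometrically over $O(\alpha\log(n\Delta))$ rounds until exact integer distances can be read off; plugging in \cite{cao2020paralleldirected} with $\rho$ constant then gives \Cref{cor:results_directed_pram} and hence the theorem, after the aspect-ratio/discretization technicalities handled in the appendix.

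A secondary issue: the guarantee you invoke --- an exact shortest path tree under a length function $\ell'$ within a $(1+\eps)$ factor of $\ell$ --- is the characterization of \emph{strongly} approximate distances, which requires tree-likeness in addition to smoothness. The paper only obtains this combination for \emph{undirected} graphs (the tree-constructing reduction, \Cref{thm:tree}, is undirected), so it is not available as a black box for the directed statement you are proving; for directed graphs the exact tree is recovered after the boosting step, not from a perturbed-graph tree plus a tightness argument.
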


Note that computing a breadth-first-search (BFS) tree corresponds to the much simpler special case of a shortest-path tree in an undirected graph where all edge lengths are one. 

We develop our results in a surprisingly indirect way: we develop \emph{stronger notions of approximate distances} that are algorithmically more powerful than the standard notions used throughout the literature.
We give efficient black-box reductions computing these stronger approximations using only polylogarithmic black-box calls to algorithms computing the (weaker) \emph{standard} distance approximations. 

For directed graphs, our stronger approximations can be efficiently boosted to exact distances \cite{klein1997randomized}. Using the parallel shortest-path approximation algorithm of \cite{cao2020paralleldirected} in our reductions yields \Cref{thm:main-exact-intro}. 

For undirected graphs, much faster work-efficient parallel approximation algorithms with polylogarithmic depth are known~\cite{Li20shortest_paths,andoni_stein_zhong2020shortest_paths,rozhon_grunau_haeupler_zuzic_li2022deterministic_sssp}. While we do not know how to use these results to develop fast and work-efficient exact distance algorithms on undirected graphs, we show how to compute the next best thing: an exact shortest path tree in a slightly perturbed graph.

\begin{theorem}\label{thm:main-undirected-full-intro}
There exists a parallel algorithm that takes as input any undirected graph $G=(V,E)$ with $m=|E|$ edges, $n=|V|$ nodes, a length function $\ell$ that assigns to every edge $e$ a non-negative polynomially-bounded integer lengths $\ell(e)$, any source vertex $s \in V$, and any $\eps \in (0,1]$. The algorithm computes a $(1+\eps)$-perturbed length function $\ell'$, which satisfies $\ell(e) \leq \ell'(e) \leq (1+\eps) \ell(e)$ for every edge $e \in E$, and outputs $\ell'$ together with exact distances to $s$ and an shortest path tree rooted at $s$ under this new length function $\ell'$. The algorithm is deterministic, requires near-linear work of $m \cdot \left(\frac{\log n}{\eps}\right)^{O(1)}$, and polylogarithmic depth of $\left(\frac{\log n}{\eps}\right)^{O(1)}$.
\end{theorem}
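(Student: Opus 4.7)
The plan is to invoke the paper's main black-box transformation, applied to a known fast undirected approximate-SSSP oracle, and then to read the perturbed length function $\ell'$ and the tree directly off the strengthened output. For the underlying oracle I would plug in the deterministic undirected algorithm of Rozhon--Grunau--Haeupler--Zuzic--Li~\cite{rozhon_grunau_haeupler_zuzic_li2022deterministic_sssp}, which computes $(1+\eps')$-approximate SSSP distances in $m(\log n/\eps')^{O(1)}$ work and $(\log n/\eps')^{O(1)}$ depth. The main reduction described in the introduction makes only $\log^{O(1)} n$ oracle calls and performs polylogarithmic-depth auxiliary work, so setting $\eps' = \Theta(\eps)$ keeps the total within the claimed budget.

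The transformation produces a potential $\phi\colon V \to \reals_{\ge 0}$ with $\phi(s)=0$ that satisfies the $(1+\eps)$-subtractive triangle inequality $\phi(v) - \phi(u) \le (1+\eps)\ell(u,v)$ on every edge, and such that every non-source $v$ admits a \emph{tight predecessor}: a neighbor $u$ with $\phi(v) = \phi(u) + \ell(u,v)$. From $\phi$ I define, for each undirected edge $\{u,v\}$,
\[
  \ell'(u,v) \defeq \max\!\bigl\{\ell(u,v),\ |\phi(v) - \phi(u)|\bigr\}.
\]
The bound $\ell(u,v) \le \ell'(u,v) \le (1+\eps)\ell(u,v)$ is immediate: the lower half is by definition, and the upper half follows from the subtractive triangle inequality applied in both orientations of the edge (using undirectedness). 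Since $|\phi(v) - \phi(u)| \le \ell'(u,v)$ on every edge, summing along any $s$-$v$ path yields $\dist_{\ell'}(s,v) \ge \phi(v)$; conversely, iterating tight predecessors from $v$ back to $s$ exhibits a path of length exactly $\phi(v)$. Hence the exact distances under $\ell'$ are precisely the values $\phi(v)$ the transformation already produced, and an exact shortest-path tree is obtained by orienting each non-source $v$ to a tight predecessor, breaking ties by any consistent rule on $\phi$-values and vertex identifiers to guarantee acyclicity.

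The entire post-processing --- defining $\ell'$, selecting tight predecessors, and assembling the tree --- is embarrassingly parallel and takes $O(m)$ work and $O(\log n)$ depth. The main technical obstacle I expect therefore lies entirely inside the black-box transformation itself: ensuring both the $(1+\eps)$-subtractive triangle inequality on every edge \emph{and} the existence of tight predecessors for every vertex, while invoking the underlying approximate oracle only polylogarithmically many times. This strengthening is precisely the core contribution established in the main body of the paper and is invoked here as a black box.
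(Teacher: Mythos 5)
Your proposal follows essentially the same route as the paper: combine the tree-likeness and smoothing reductions (\Cref{thm:results_undirected_tree_like,thm:results_undirected_smooth}) with the deterministic oracle of \cite{rozhon_grunau_haeupler_zuzic_li2022deterministic_sssp} to obtain a strongly $(1+\eps)$-approximate estimate $\phi$, and then read off a perturbed length function under which $\phi$ is exact; this last conversion is precisely the paper's \Cref{lem:smooth+treelike=cool}.

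One imprecision is worth flagging: the guarantee you attribute to the black box --- a \emph{tight} predecessor with $\phi(v) = \phi(u) + \ell(u,v)$ --- is stronger than what the transformation actually provides. Tree-likeness (\Cref{def:smooth+treelike}) only gives the inequality $\phi(v) \ge \phi(u) + \ell(u,v)$, and the paper explicitly remarks that the equality variant is \emph{not} preserved by its smoothing algorithm. Fortunately your construction tolerates this: with $\ell'(u,v) = \max\{\ell(u,v),\,|\phi(v)-\phi(u)|\}$ the predecessor edge receives $\ell'$-length exactly $\phi(v)-\phi(u)$, so the telescoping bound $\dist_{\ell'}(s,v) \le \phi(v)$ still goes through, while $(1+\eps)$-smoothness still caps $\ell' \le (1+\eps)\ell$; your $\ell'$ is in fact a slightly milder perturbation than the paper's, which sets $\ell'(e) = (1+\eps)\ell(e)$ on all non-predecessor edges. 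Finally, since the theorem allows zero-length edges, the predecessor chain need not make strict progress in $\phi$, so acyclicity and reachability of $s$ require contracting zero-length components or a small preprocessing perturbation (as discussed in \Cref{sec:applications_computational_model}); your tie-breaking remark gestures at this, but it should be made explicit.
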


While our headline result is for directed graphs, we expect the newly-developed techniques --- stronger notions of approximate distances together with the near-optimal algorithms to compute them --- to be conceptually and algorithmically useful even in undirected graphs. Specifically, many (undirected) graph-theoretic structures and algorithms are crucially built on top of exact distances and our concepts often allow approximate distances (satisfying our additional guarantees) to be used. This enables leveraging state-of-the-art parallel algorithms for approximate distances which are work-efficient and have near-optimal polylogarithmic depth, a quality that seems out of reach if one requires exact distances.

\subsection*{Organization}

We give some motivation and background on parallel algorithms for the breadth-first-search and shortest path problems in \Cref{sec:history}, and give a brief summary of related work in \Cref{sec:prior_work}. We explain key insights in the subtle problems making standard distance approximations hard to use algorithmically in \Cref{sec:issues}. These issues directly motivate the definitions of our stronger notions of distance approximation given in \Cref{sec:stronger_notions}. \Cref{sec:our_results} gives a more detailed account of our results. This is followed by the technical part of the paper starting with preliminaries in \Cref{sec:preliminaries}. 

\subsection{The Breadth-First Search and Shortest-Path Problems}
\label{sec:history}

Big-data analytics on petabyte-sized network representations and data-intensive computations on graphs have become pervasive and their scale keeps exploding exponentially. Indeed, many applications involve graph abstractions with billions of edges and  massively-parallel systems with millions of cores working on these graphs. 

Unfortunately even very basic graph algorithmic sub-routines can be notoriously hard to parallelize. The most poignant and common example is simple breadth-first search (BFS). While a linear-time BFS implementation for a single core taking $O(m)$ time to process any $m$-edge graph is trivial, no parallel algorithm reducing the time using more cores without drastically increasing the total work is known --- even for undirected graphs. 

The complete lack of algorithms with provable guarantees for this basic problem is particularly shocking given the decades-long continued theoretical and practical interest in this problem~\cite{hart1968formal,ullman1991high,spencer1991more,thorup1992shortcutting,subramanian1995efficient,cohen1996efficient,cohen2000polylog,thorup2005approximate,fu2006heuristic,shun2013ligra,madduri2007experimental,dhulipala2017julienne,abraham2010highway,ueno2012highly,satish2014navigating,fineman2018nearly,cao2020improved,cao2020paralleldirected,Cao2021ImprovedCONGEST,kainer2019more,elkin2019hopsets,Li20shortest_paths,andoni_stein_zhong2020shortest_paths,huang2021lower,rozhon_grunau_haeupler_zuzic_li2022deterministic_sssp,kogan2022new}. For example, the well-known Graph500 benchmark~\cite{graph500benchmark}, which is used to rank supercomputers, consists of merely two problems, firstly computing a BFS tree on graphs with up to $2^{45}$ nodes, and secondly computing single-source shortest paths (SSSP), i.e., essentially a weighted BFS.

The single-source shortest-path problem (SSSP) asks to compute the minimum-length path from a source vertex to all other vertices in a graph with positive edge lengths. It is one of the most important optimization problems in computer science and parallel SSSP algorithms have been intensely studied for decades but also particularly recently, owing to the increased practical need for graph algorithms suitable for modern massively-parallel systems. 

While Dijkstra's classic 1956 algorithm~\cite{dijkstra1959dijkstra} computes \emph{exact} distances and SSSP trees in directed graphs with lengths in essentially linear work, it does so in an inherently sequential way. Again, the only known algorithms improving over the trivial linear depth of Dijkstra's algorithm do so at the cost of significantly increasing the total work and the best known time-work trade-offs remain the algorithms of Ullman \& Yannakakis~\cite{ullman1991high} and Spencer~\cite{spencer1991more} from the 90's with $\tilde{O}(m\rho + \frac{\rho^4}{n})$ and $\tilde{O}(m + n\rho^2)$ work for a depth of $O(n/\rho)$ for any $\rho \in [1,n]$.

\subsection{Prior Work and Approximate SSSP Algorithms}
\label{sec:prior_work}

Given the difficulty of designing parallel BFS and SSSP algorithms, research and algorithm design efforts have turned to heuristics~\cite{hart1968formal,fu2006heuristic} and algorithms with provable guarantees for restricted graph classes or other special cases, including planar graphs~\cite{henzinger1997faster,fakcharoenphol2006planar}, random graphs~\cite{frieze1984parallel,meyer1998delta}, or graphs arising in road networks~\cite{abraham2010highway}. Another intensely-pursued research direction has been the design of parallel and distributed algorithms that compute approximate instead of exact distances~\cite{cohen2000polylog,rozhon_grunau_haeupler_zuzic_li2022deterministic_sssp,goranci2022universally,andoni_stein_zhong2020shortest_paths,Li20shortest_paths,haeupler2018faster,becker2021near}. Approximation algorithms for SSSP are most closely related and relevant to this paper and we focus the remaining discussion of related work on this direction. 

The intuition why approximate distances are much easier to compute is that they break up the rigid long-range interactions that seem to make BFS or exact SSSP computations so inherently sequential and hard to parallelize. In particular, even the most minor length or distance change on a single node or edge potentially causes not just all vertices its subtree to require distance updates but can completely changing the structure of the BFS or SSSP tree. Even for (unit length) undirected graphs, this makes it very hard to do any meaningful BFS computations in parts of a graph far away from the source without being sure which node will be reached first as even the most minor change can completely change the outcome. In the approximate setting one can often avoid such rigid long-range interactions by computing approximate solutions that have build-in slacks that can absorb minor updates or changes. 

Many ideas have been brought forward to utilize this flexibility. Overall, the quest for parallel and distributed SSSP approximation algorithms has been a tremendous inspiration for innovation resulting in a wide variety of fundamental and powerful paradigms and structures that have become a crucial part of the modern algorithm design toolbox. Among many others these include (approximate) hopsets~\cite{cohen2000polylog,elkin2019hopsets}, spanners and emulators~\cite{thorup2006spanners,huang2019thorup}, low-diameter decompositions~\cite{miller2013parallel,blelloch14,elkin_haeupler_rozhon_grunau2022Clusterings_LSST}, diameter-reducing shortcuts~\cite{thorup1992shortcutting,huang2021lower,kogan2022new}, and distance oracles~\cite{thorup2005approximate,chechik2014approximate}. The search for better parallel approximation algorithms for the shortest-path problem has also lead to a mutually beneficial exchange of ideas and cross-fertilization between continuous and discrete optimization. With research borrowing and further developing ideas from continuous optimization, such as the multiplicative weight methods~\cite{Li20shortest_paths,andoni_stein_zhong2020shortest_paths,becker2021near,rozhon_grunau_haeupler_zuzic_li2022deterministic_sssp,goranci2022universally}. Many of the above tools have been specifically developed for parallelizing shortest-path computations and all state-of-the-art work-efficient parallel algorithms for computing approximate distances crucially utilize many of these tools.


For undirected graphs the state of the art is given by the algorithms of  \cite{rozhon_grunau_haeupler_zuzic_li2022deterministic_sssp,Li20shortest_paths,andoni_stein_zhong2020shortest_paths} and for directed graphs the algorithm of Cao, Fineman, and Russel~\cite{cao2020paralleldirected}. These algorithms all compute $(1+\eps)$-approximations using near-linear work of $\tilde{O}(\eps^{-O(1)} m)$. The algorithm for undirected graphs achieve a near-optimal polylogarithmic depth of $(\frac{\log n}{\eps})^{O(1)}$ and the algorithm of \cite{cao2020paralleldirected} for directed graphs has depth $n^{1/2+o(1)}$. 

These algorithms of \cite{rozhon_grunau_haeupler_zuzic_li2022deterministic_sssp} and \cite{cao2020paralleldirected} also have distributed message-passing implementations (i.e., in the standard CONGEST model of distributed computing) --- another intensely studied related area of work with tremendous recent activity and progress.

We conclude our brief summary by noting that basically all existing tools and algorithms seem inherently limited to approximation algorithms (with at best a polynomial dependency on $\eps$ for $(1+\eps)$-approximations). Moreover, the vast majority of structures, tools, and algorithms seem inherently restricted to undirected graphs. These limitations explains the complete lack of work-efficient parallel or distributed algorithms for exact shortest paths as well as the large gap between what is known for approximation algorithms in undirected graphs versus directed graph.


\subsection{Issues with Standard Notions of Distance Approximations: Monotonicity and Subtractive Triangle Inequality}
\label{sec:issues}

While approximate distances provide flexiblities that makes them easier to compute, these flexibilities cause several subtle issues if we try to  use approximate distances as algorithmic tools. This holds even for quite precise $(1 + \eps)$-approximations. We discuss these issues next.

The most immediate notion of approximation when discussing the SSSP problem on a graph $G$ is to ask for a distance estimate $( \td(v) )_v$ that $\alpha$-approximates the exact distances from the source to each node $v \in V(G)$. We formalize the notion below.

\begin{definition}[Shortest-Path Distances]
Let $G=(V,E,\ell)$ be a graph with edge lengths $\ell: E \rightarrow \R_{\ge 0}$. We denote with $d_G(u,v)$ the shortest-path distance between $u$ and $v$ in the graph $G$.
\end{definition}


\begin{definition}[$\alpha$-Approximate Distance Estimate]\label{def:approx-dist}
  An arbitrary function $\td : V(G) \to \R_{\ge 0}$ is called a distance estimate. Given a graph $G = (V, E, \ell)$ and a node $s \in V$, a distance estimate $\td: V \rightarrow \R_{\ge 0}$ is $\alpha$-approximate (with respect to the source $s$) if the following two conditions are satisfied:
  \begin{align}
  \label{eq:noncontractivity}
      \forall u \in V: \ \ d_G(s, u) \le \td(u) 
  \end{align}
  \begin{align}
  \label{eq:approximation}
      \forall u \in V: \ \ \td(u)  \le \alpha \cdot d_G(s, u). 
  \end{align}
  We sometimes refer to the first condition as noncontractivity. 
  When $\alpha = 1$, the inequalities become equalities and we then call $\td$ \emph{exact} distances (from $s$).
\end{definition}

Unfortunately, while $(1+\eps)$-approximations seem might seem ``almost as good as exact'' on first glance, replacing exact distance computations by approximate ones often completely breaks algorithms and correctness proofs in subtle and unexpected ways. This is already true when restricted to unweighted and undirected graphs. We identify two main culprits that break many algorithms which build upon (exact) shortest path computations if one were to replace exact computations with $(1+\eps)$-approximate ones: subtractive triangle inequality and non-monotonicity of distances along shortest paths.

\textbf{Issue: subtractive triangle inequality.} In particular, if we label each node $u$ with its exact distance $d(u)$ to $s$, then for any edge $e = \{u,v\}$ we have that $|d(u)-d(v)| \leq 1$ in unweighted graphs and at most $\ell(e)$ in weighted graphs. For $(1+\eps)$-approximate distances $\td$, the value $|\td(u) - \td(v)|$ can essentially be unboundedly large (e.g., as large as $\eps n$ in unweighted graphs). The subtractive triangle inequality is crucially used in many algorithms: examples include the MPX algorithm~\cite{miller2013parallel} for low-diameter decompositions (see \Cref{subsec:mpx} for a discussion), boosting approximate to exact distances in directed graphs~\cite{klein1997randomized}, or Bourgain's algorithm for embedding a graph into the $\ell_1$ metric with low distortion~\cite{linial1995geometry}.


\textbf{Issue: non-monotonicity.} The other problem is lack of monotonicity along shortest paths. In particular, approximate distances can have local minima and can be non-monotone along every single path from $s$ to $u$. We illustrate these issues on the following simple example.

\textbf{Example: ball growing.} Let $G$ be an undirected (and even unweighted) graph, $s \in V(G)$ be a node, and $R > 0$ be some radius. Let $r$ be uniformly randomly chosen in $[R/2,R]$ and let $B(s, r) = \{u \in V \mid d_G(s, u) \leq r\}$ be the ball around $s$ with radius $r$. It is easy to see that $B$ is indeed a ``nice ball'': e.g., the induced subgraph $G[B]$ is connected and has diameter $r$. Furthermore any (unit) edge $\{u, v\}$ in the graph is cut by this ball with probability at most $\frac{|d_G(s, u) - d_G(s, v)|}{R/2} = O(1/R)$, giving us a convenient way to bound the number of cut edges.

If we replace $d$ by a $(1+\eps)$-approximate distance $\td$ then $B$ can be disconnected and connected components in $G[B]$ can have arbitrarily large diameter due to the non-monotonicity. There is also no guarantee that an edge is cut with a small probability: an edge $\{u, v\}$ is cut with probability at most $\frac{|\td(u) - \td(v)|}{R/2}$ which can be as large as $\Theta(\frac{\eps n}{R})$ due to the absence of triangle inequality. Hence, the expected number of cut edges might be much larger compared to the exact case. 

These above issues with approximate distances in general and the ball growing example in particular have been noted and pointed out in many prior works on approximate SSSP including among others in \cite{becker_emek_lenzen2020blurry_ball_growing,forster2018faster,Li20shortest_paths}. The most explicit treatment is the paper by Becker, Emek, and Lenzen~\cite{becker_emek_lenzen2020blurry_ball_growing} which solely and  explicitly concerned with the problems the lack of an (approximate) subtractive triangle inequality causes. They show, with lot of effort, that weaker versions of the above ``ball cutting'' can be achieved algorithmically by designing a randomized ``blurry ball growing'' technique which can tolerate approximate distances. A slightly simpler and deterministic equivalent of this was presented in \cite{elkin_haeupler_rozhon_grunau2022Clusterings_LSST}. The notions and algorithms of this paper can be seen as a more  general and powerful way to solve problems like the above (see \Cref{subsec:level_cuts} for details).

\subsection{Stronger Notions of Distance Approximations}
\label{sec:stronger_notions}

An important conceptual contribution of this paper is the analysis of the following strengthened notions of approximate distance aimed squarely at avoiding the just discussed issues. We first introduce two properties, smoothness and tree-likeness, that strengthen the conditions \Cref{eq:approximation} and \Cref{eq:noncontractivity} in the definition of  $\alpha$-approximate distance estimates in \Cref{def:approx-dist}. 

\begin{definition}[Smoothness and Tree-Likenes] \label{def:smooth+treelike}
  Given a graph $G = (V, E, \ell)$ and (so-called source) $s \in V$, we say a distance estimate $\td : V \to \R_{\ge 0}$ is:
  \begin{itemize}
  \item \textbf{$\alpha$-smooth} (w.r.t. $s$) iff both $\td(s) = 0$ and $\forall u,v \in V: \ \ \td(v) - \td(u) \leq \alpha \cdot d_G(u,v)$,
  \item \textbf{tree-like} (w.r.t. $s$) iff both $\td(s) = 0$ and $\forall v \neq s: \ \exists e=(u,v) \in E: \ \ \td(u) \leq \td(v) - \ell(e)$.
  \end{itemize}
\end{definition}

Next, we define the following strengthened notions of $\alpha$-approximate distances. 
\begin{definition}[$\alpha$-approximate distances] \label{def:strongly}
  Given a graph $G = (V, E, \ell)$ and (so-called source) $s \in V$, we say a distance estimate $\td : V \to \R_{\ge 0}$ is:
  \begin{itemize}
  \item \textbf{smoothly $\alpha$-approximate} (w.r.t. $s$) if we replace \Cref{eq:approximation} in \Cref{def:approx-dist} by the $\alpha$-smooth condition. 
  \item \textbf{strongly $\alpha$-approximate} (w.r.t. $s$) if we replace both \Cref{eq:noncontractivity} by the tree-like condition and \Cref{eq:approximation} by the $\alpha$-smooth condition in \Cref{def:approx-dist}. 
  \end{itemize}
\end{definition}

An example showing how smoothly $\alpha$-approximate distances strengthen $\alpha$-approximate distances is given in \Cref{fig:example}. \Cref{fig:smooth_path} shows another diagram explaining the definitions from \Cref{def:strongly}.

\begin{figure}
    \centering
    \includegraphics[width = \textwidth]{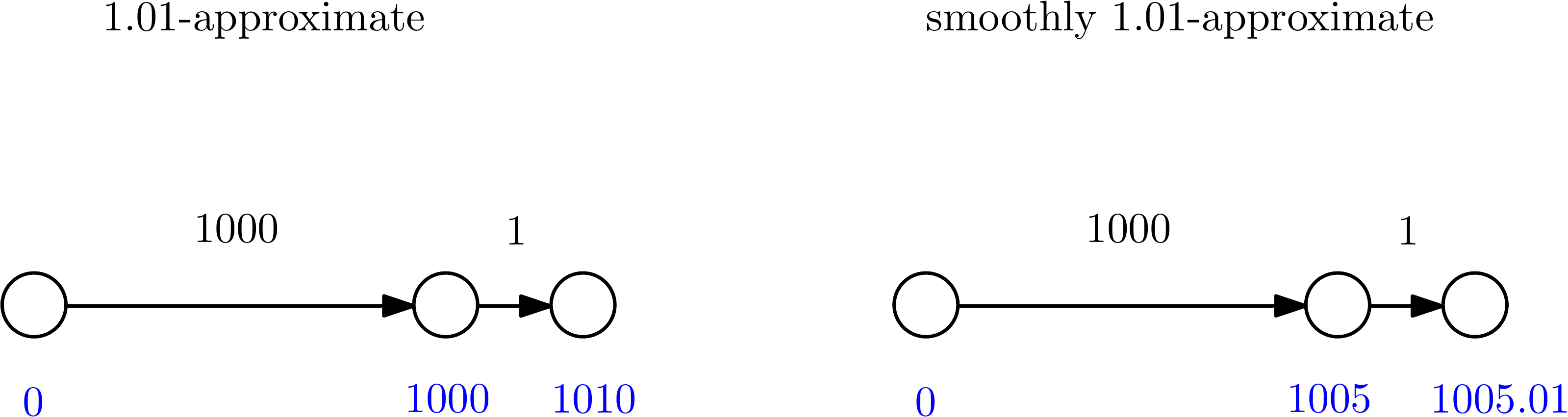}
    \caption{Example showing possible $\alpha$-approximate and smoothly $\alpha$-approximate distance estimates (estimates are blue). Note that in the $\alpha$-approximate case (left picture), the distance estimate can increase abruptly over a short edge. This is forbidden if the estimate is smoothly $\alpha$-approximate (right picutre). }
    \label{fig:example}
\end{figure}

\begin{figure}
    \centering
    \includegraphics[width = \textwidth]{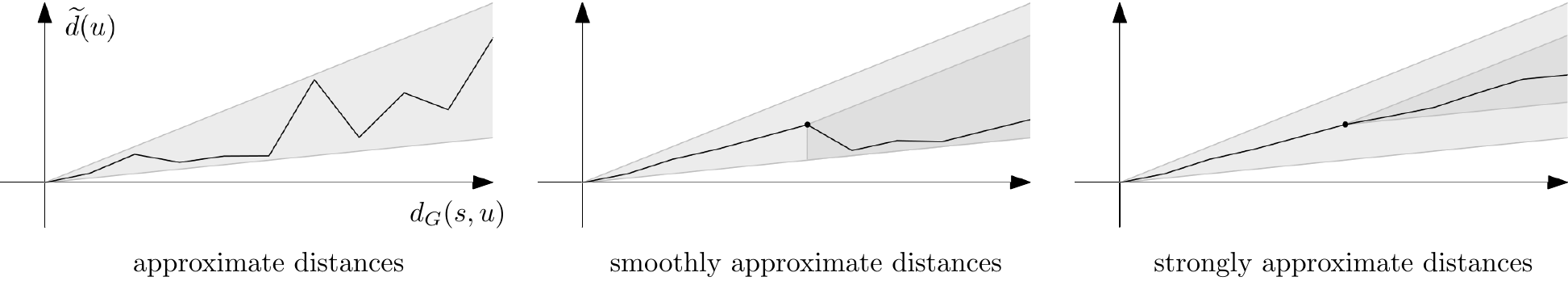}
    \caption{The diagram shows an example of approximate distance estimates in case the input graph is an unweighted oriented path and the source is its first node. 
    In all three cases we highlight two rays of slopes $1$ and $\alpha$, distance estimates need to lie within the gray cone that the rays define. \\
    1) In the first picture ($\alpha$-approximate estimate), the values are arbitrary within the grey cone. \\
    2) In the second picture (smoothly $\alpha$-approximate estimate), the distances cannot increase abruptly. Namely, the values of distances after the higlighted node can lie only in the darker area. Distance estimate can, however, drop abruptly. \\
    3) In the third picture (strongly $\alpha$-approximate estimate), the distances cannot decrease abruptly and they also increase at least at the rate of exact distances (see the cone after a highlighted node). Observe that if we perturb the edge lengths of the input graph multiplicatively by at most $\alpha$, the distance estimate becomes exact (see \Cref{lem:smooth+treelike=cool}).  }
    \label{fig:smooth_path}
\end{figure}

We remark that for our results for directed graphs (e.g., \Cref{thm:main-exact-intro}), only the $\alpha$-smoothness property is required. On the other hand, for our results in undirected graphs it is crucial that a single distance function simultaneously satisfies both properties of a strongly $\alpha$-approximate distance simultaneously. Only then does one obtain the powerful characterization of strongly $\alpha$-approximate distances used in \Cref{thm:main-undirected-full-intro}, i.e., strongly $\alpha$-approximate approximate distances are exactly equivalent to exact distances induced by a slightly perturbed edge-length function. A detailed discussion and formal treatment of these notions and how they compare, e.g., to standard notions of approximation, is provided in \Cref{sec:distances}.



\subsection{Our Results}
\label{sec:our_results}

Our main technical contribution is a procedure for computing smoothly $(1+\eps)$-approximate distance estimates using $O(\log n)$ calls to an approximate distance oracle.
The procedure works both for directed and undirected graphs. However, the implications of this transformation are quite different in the directed and undirected setting. We therefore split the result section into a directed and undirected part. Our reductions from this section are also explained in \Cref{fig:definitions}. 

\begin{figure}
    \centering
    \includegraphics[width = \textwidth]{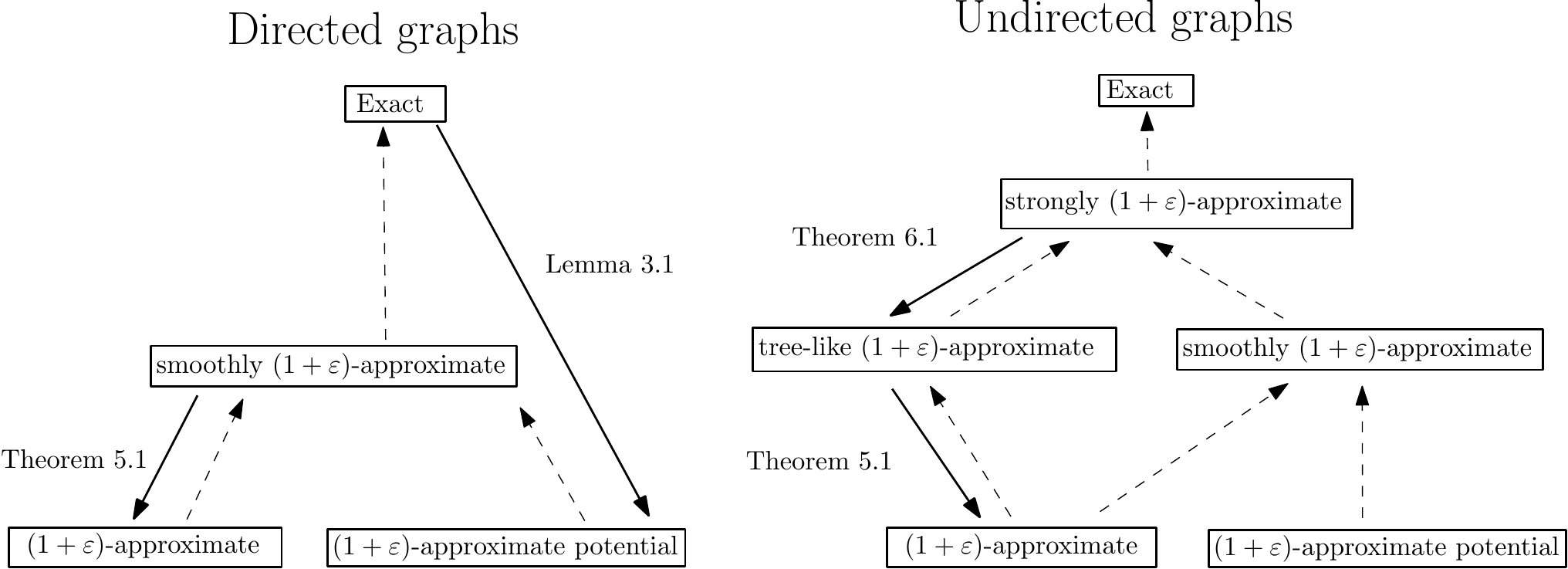}
    \caption{The figure shows the relations between the various stronger approximate distance notions that we consider in this paper. The dashed arrows show the trivial inclusions while the solid arrows show reductions of stronger notions to simpler ones. \\
    \textbf{Directed graphs:} \cite{klein1993linear} (\Cref{lemma:folklore_directed_smoothing}) shows how to reduce the exact distance computation to the computation of approximate potential. 
    Since smoothly approximate distances strengthen approximate potentials, our reduction from smoothly approximate distances to approximate distances in \Cref{thm:smooth} then reduces the exact distance computation to approximate one. \\
    \textbf{Undirected graphs:} 
    We define the notion of strongly approximate distances that combine smoothness with tree-likeness. \Cref{thm:smooth} reduces the computation of strongly approximate distances to the computation of tree-like approximate distances. Next, \Cref{thm:tree} reduces the latter problem further down to the problem of computing approximate distances. 
    }
    \label{fig:definitions}
\end{figure}

\subsubsection{Directed Graphs}

Our main result is an algorithm that turns approximate distances into smoothly approximate ones. 

\begin{theorem}[Computing smoothly $(1+\eps)$-approximate distance estimates in directed graphs]
 \label{theorem:results_directed_smooth}
 Given a directed graph $G$ with (real) weights in $[1,\text{poly}(n)]$, a source $s \in V(G)$, and accuracy $\eps \in (0,1]$, \Cref{alg:smoothing} computes smoothly $(1+\eps)$-approximate distance estimates from $s$ in $G$ using $O(\log n)$ calls to a $(1 + O(\eps/\log n))$-approximate distance oracle $\fO$ on directed graphs. 
\end{theorem}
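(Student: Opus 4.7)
My plan is to iteratively refine an approximate distance estimate over $O(\log n)$ oracle calls via a super-source reduction. First I call the oracle once on $(G, s)$ to obtain an initial estimate $\td_0$. For $i = 1, \ldots, O(\log n)$, I construct a graph $H_i$ by attaching a virtual super-source $s_i^+$ to $G$ with directed edges $s_i^+ \to v$ of weight $\td_{i-1}(v)$ for each $v \in V$, and set $\td_i$ to the oracle's output on $(H_i, s_i^+)$. The final output is $\td := \td_{O(\log n)}$.

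The structural property motivating this reduction is that, with an \emph{exact} oracle, the identity $\td_i(v) = \min_u(\td_{i-1}(u) + d_G(u, v))$ would be automatically $1$-smooth, since it is the lower envelope of the $1$-Lipschitz translates $u \mapsto \td_{i-1}(u) + d_G(u, \cdot)$. Non-contractivity is preserved at every step because $d_{H_i}(s_i^+, v) \ge \min_u(d_G(s, u) + d_G(u, v)) = d_G(s, v)$. The approximation factor degrades by at most $(1 + O(\eps/\log n))$ per round, because taking $u = v$ in the minimum yields $d_{H_i}(s_i^+, v) \le \td_{i-1}(v)$, so $\td_i \le (1 + O(\eps/\log n)) \td_{i-1}$; after $O(\log n)$ rounds the accumulated factor is $(1 + O(\eps/\log n))^{O(\log n)} = 1 + O(\eps)$, which matches the required $(1+\eps)$ approximation.

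The main obstacle, and the heart of the argument, is propagating smoothness through the oracle's multiplicative slack. A single round yields only the weaker bound
\[
\td_i(v) - \td_i(u) \;\le\; (1 + \eps/\log n) \cdot d_G(u, v) + (\eps/\log n) \cdot \td_{i-1}(u),
\]
whose additive residual is not automatically $\le \eps \cdot d_G(u, v)$ for pairs $(u, v)$ that are close to each other but far from $s$. I would close this gap by tracking the non-smoothness potential $\Phi_i := \max_{u, v} [\td_i(v) - \td_i(u) - (1+\eps) d_G(u, v)]_+$ and showing that it contracts geometrically across rounds, leveraging both (i) the tightening of the non-contractive gap $\td_i(u) - d_G(s, u)$ caused by successive super-source reductions and (ii) the fact that the residual slack in round $i$ depends on $\td_{i-1}$, which itself is already $(1+O(\eps/\log n))$-close to $d_G(s, \cdot)$ after the initial oracle call. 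Once $\Phi_i$ is driven below the minimum edge length (which is $\ge 1/\poly(n)$ relative to the diameter, thanks to polynomially-bounded weights), the per-edge smoothness inequality $\td(v) - \td(u) \le (1+\eps) \ell(u, v)$ follows for every edge $(u, v) \in E$, and smoothness for all pairs then follows by summing along shortest paths.
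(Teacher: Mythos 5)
Your setup (initial oracle call, then iterated super-source reductions), the preservation of noncontractivity, and the per-round $(1+O(\eps/\log n))$ loss in the upper bound are all fine, and your final absorption step (once the additive defect is below the minimum edge length, per-edge smoothness gives smoothness for all pairs by summing along shortest paths) is also sound and mirrors what the paper does at the end. The genuine gap is the step you yourself flag as the heart of the argument: the claim that the non-smoothness potential $\Phi_i$ contracts geometrically is false for the iteration as you defined it. Each round the oracle's multiplicative slack is applied to $d_{H_i}(s_i^+,v)\approx d_G(s,v)$, because your super-source edges carry the \emph{full} lengths $\td_{i-1}(v)$; so every round re-injects an additive error of order $(\eps/\log n)\cdot d_G(s,v)$, and there is no mechanism forcing this to shrink. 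Concretely, take a path $s\to a\to b$ with $d_G(s,a)=L$ and $\ell(a,b)=1$, and an adversarial oracle that always returns $\td(a)=L$ and $\td(b)=(1+\eps')(L+1)$ with $\eps'=\Theta(\eps/\log n)$. This is a valid $(1+\eps')$-approximate answer on every graph $H_i$ your algorithm builds (since $d_{H_i}(s_i^+,a)=L$ and $d_{H_i}(s_i^+,b)=L+1$ for these values), so it is a fixed point of your iteration; yet $\td(b)-\td(a)\approx \eps' L$, which exceeds $(1+\eps)\,d_G(a,b)=1+\eps$ as soon as $L\gg (\log n)/\eps$. Hence $\Phi_i$ stagnates at $\Theta(\eps' \cdot \Delta)$ and never drops below the minimum edge length for polynomially large distances.

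The paper closes exactly this hole with two ingredients absent from your proposal. First, it tracks a relaxed $(\alpha,\delta)$-smoothness ($\td(v)\le\td(u)+\alpha d_G(u,v)+\delta$) and runs each oracle call on a copy of $G$ whose edges are \emph{slowed down} by a factor $\approx(1+\eps)\alpha$; the slowdown guarantees that the new estimate strictly undercuts the old one only along ``smoothing paths,'' which are provably short ($\ell_G(P)=O(\delta/(\alpha\eps))$), so far-away pairs that are already smooth are never broken. Second, instead of attaching the super-source with edge lengths $\td_{i-1}(v)$, it partitions vertices into levels of width $\omega=\Theta(\delta/\eps)$ according to $\td_{i-1}$, attaches the super-source only with the \emph{fractional part} $\{\td_{i-1}(v)\}_i\le\omega$ (adding $\lfloor\td_{i-1}(v)\rfloor_i$ back afterwards), and uses two level decompositions shifted by $\omega/2$ so that every smoothing path lies inside a single level of one of them. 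This caps the quantities to which the oracle's relative error applies at $O(\omega)$, so the fresh additive error per round is $O(\eps\omega)=O(\delta)$, yielding the contraction $(\alpha,\delta)\mapsto((1+\eps)\alpha,\delta/2)$ with $O(1)$ oracle calls per round and hence the claimed $O(\log n)$ total. Without something playing the role of the slowdown and the level-capping of super-source edge lengths, no potential argument of the kind you sketch can succeed, as the example above shows.
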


By using a reduction of Klein and Subramanian \cite{klein1997randomized}, one can compute exact distances with $O(\log n)$ calls to a smoothly $2$-approximate distance estimate oracle.
To make this paper self-contained and to better match the language we use in this paper, we give a self-contained treatment of this reduction in \Cref{sec:boosting-sssp}.
By combining this reduction from exact distances to smoothly approximate distance estimates with our reduction from smoothly approximate distance estimates to approximate distance estimates, we obtain an efficient reduction reducing exact distances to the computation of $(1 + O(1/\log n))$-approximate distance estimates, modulo some low-level technical details that we handle and discuss in more detail in \Cref{sec:applications_computational_model}. This reduction allows us to turn any parallel algorithm for computing $(1 + O(1/\log n))$-approximate distance estimates in directed graphs into a parallel algorithm for computing exact distances with almost no overhead. 
In a recent breakthrough, Cao, Fineman and Russell gave a parallel algorithm for computing $(1 + O(1/\log n))$-approximate distances with $\tO(m \rho^2 + n \rho^4)$ work and $n^{1/2 + o(1)}/\rho$ depth, for a given parameter $\rho \in [1,n^{1/2}]$ \cite{cao2020paralleldirected}. We therefore obtain as a corollary the following result.

\begin{restatable}{corollary}{directedpram}[Directed SSSP, PRAM]
    \label{cor:results_directed_pram}
    There exists a parallel algorithm that takes as input any directed graph $G$ with nonnegative polynomially bounded integer edge lengths, any source vertex $s \in V(G)$ and a parameter $\rho \in [1,n^{1/2}]$. The algorithm computes for every vertex the exact distance to $s$ and an exact shortest path tree rooted at $s$. The algorithm is randomized, requires $\tilde{O}(m \rho^2 + n \rho^4)$ work and $n^{1/2+o(1)}/\rho$ depth, and works with high probability.
\end{restatable}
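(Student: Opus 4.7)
The plan is to compose the two reductions highlighted in \Cref{fig:definitions} with the recent parallel approximate directed SSSP algorithm of Cao, Fineman, and Russell~\cite{cao2020paralleldirected}. Specifically, \Cref{lemma:folklore_directed_smoothing} (the Klein--Subramanian boosting) computes exact SSSP distances using $O(\log n)$ calls to a smoothly $2$-approximate distance oracle, while \Cref{theorem:results_directed_smooth} implements a smoothly $(1+\eps)$-approximate distance oracle via $O(\log n)$ calls to a standard $(1+O(\eps/\log n))$-approximate distance oracle. Chaining these --- with $\eps$ fixed to a constant in the outer reduction --- yields an algorithm computing exact SSSP by making $O(\log^2 n)$ black-box calls to a $(1+O(1/\log n))$-approximate directed SSSP algorithm.

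The second step is to instantiate that approximate oracle with the CFR algorithm. For accuracy $1+O(1/\log n)$ and parameter $\rho \in [1,n^{1/2}]$, CFR runs in $\tilde{O}(m\rho^2 + n\rho^4)$ work and $n^{1/2+o(1)}/\rho$ depth. Since only $O(\log^2 n)$ calls are made, and since the glue code between calls (scaling edge weights, taking pointwise maxima, etc.) is easily implemented in near-linear work and polylogarithmic depth, the total work remains $\tilde{O}(m\rho^2 + n\rho^4)$ (the $\tilde{O}$ absorbs the $\log^2 n$ overhead) and the depth remains $n^{1/2+o(1)}/\rho$ (the $n^{o(1)}$ factor absorbs the $\log^{O(1)} n$ overhead). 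A union bound over $O(\log^2 n)$ invocations of a high-probability subroutine preserves the high-probability success guarantee.

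The main obstacle --- and the reason the corollary needs more than a single line --- is that \Cref{theorem:results_directed_smooth} is phrased as a black-box reduction to an abstract oracle, whereas CFR is a concrete parallel algorithm whose input/output interface must be matched carefully. In particular, each oracle call is made on a graph with slightly modified edge lengths derived from the current distance estimate, and these perturbations must remain nonnegative, polynomially bounded, and representable in the computational model without blowing up either work or depth. Verifying that the reduction indeed only requires such well-behaved inputs, and that all intermediate data can be maintained in parallel with the claimed resources, is the technical heart of this composition and is deferred to \Cref{sec:applications_computational_model}. Once this bookkeeping is in place, the corollary follows immediately from plugging the CFR bounds into the composed reduction.
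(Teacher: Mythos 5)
Your proposal is correct and follows essentially the same route as the paper: chain the boosting reduction of \Cref{lemma:folklore_directed_smoothing} with the smoothing reduction of \Cref{theorem:results_directed_smooth} to get exact SSSP from $O(\log^2 n)$ calls to a $(1+O(1/\log n))$-approximate oracle, then instantiate with the Cao--Fineman--Russell algorithm. The interface issues you flag are exactly the ones the paper handles (by adding $1/\poly(n)$ to edge lengths, rescaling, and discretizing so every oracle call is on nonnegative polynomially bounded integer weights), so your deferred step matches the paper's own treatment.
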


In a similar vein, using the state-of-the-art \congest algorithm for computing $(1+O(1/\log n))$-approximate distance estimates by Cao, Fineman and Russell \cite{Cao2021ImprovedCONGEST} and observing that their algorithm works if the source node is a virtual node connected to all other nodes in the graph, we obtain the following result.

\begin{restatable}{corollary}{directedcongest}[Directed SSSP, CONGEST]
    \label{cor:results_directed_congest}
    There exists a distributed algorithm that takes as input any directed graph $G$ with nonnegative polynomially bounded integer edge lengths and any source vertex $s \in V(G)$. The algorithm computes for every vertex the exact distance to $s$ and an exact shortest path tree rooted at $s$. The algorithm is randomized, runs in $\tO(n^{1/2} + D + n^{2/5 + o(1)}D^{2/5})$ rounds where $D$ denotes the undirected hop-diameter of $G$, and works with high probability.
\end{restatable}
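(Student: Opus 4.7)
The plan is to chain together three reductions and then plug in the known CONGEST approximate-SSSP algorithm as the base oracle. Starting from the exact SSSP goal, first invoke the Klein--Subramanian-style reduction presented in \Cref{sec:boosting-sssp}, which reduces one exact SSSP computation on a directed graph to $O(\log n)$ calls to a smoothly $2$-approximate distance estimate oracle. Next, each such smoothly approximate call is handled by \Cref{theorem:results_directed_smooth} (applied with a constant $\eps$ so that $2$-smoothness suffices), which itself reduces to $O(\log n)$ calls to a directed $(1+O(1/\log n))$-approximate distance oracle $\fO$. Composing the two layers leaves us with $\polylog(n)$ calls to such a $\fO$ on directed graphs.

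For the base oracle, we invoke the recent CONGEST algorithm of Cao, Fineman, and Russell \cite{Cao2021ImprovedCONGEST}, which computes $(1+O(1/\log n))$-approximate SSSP in a directed graph in $\tO(n^{1/2} + D + n^{2/5+o(1)}D^{2/5})$ rounds, where $D$ is the undirected hop-diameter. Multiplying by the $\polylog(n)$ oracle calls and absorbing the factor into the $\tO(\cdot)$ notation yields precisely the claimed round complexity. Randomization and the high-probability guarantee are inherited from the base oracle (and a union bound over the $\polylog(n)$ invocations).

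The main subtleties, as the statement itself flags, are of a model-theoretic rather than combinatorial nature. First, the Klein--Subramanian reduction invokes the approximate oracle not on $G$ itself but on auxiliary graphs that include a virtual source connected by zero-length edges to all vertices. One must therefore verify that the \cite{Cao2021ImprovedCONGEST} algorithm continues to work within the stated bound when the source is such a virtual node; this is the observation noted in the statement. Second, \Cref{theorem:results_directed_smooth} is stated in an abstract oracle model, so we need to check that every manipulation performed between oracle calls --- mild edge-length rescalings, broadcasting a current distance estimate, collecting new estimates into the next call --- is implementable in CONGEST with only $\polylog(n)$ rounds of overhead per call, so that it is absorbed by the $\tO(\cdot)$.

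I expect this second point to be the main (if somewhat routine) obstacle: one must argue that \Cref{alg:smoothing} and the Klein--Subramanian reduction can be simulated in CONGEST using only $\polylog(n)$-round primitives over the \emph{undirected} communication graph, even though the distances being computed are with respect to the \emph{directed} edges. This is the reason the bound features the \emph{undirected} hop-diameter $D$ rather than anything directed, and it is also why the low-level details are deferred to \Cref{sec:applications_computational_model}. Once these model-level checks are in place, the round complexity follows immediately from composing the three reductions with the Cao--Fineman--Russell oracle.
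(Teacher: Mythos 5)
Your proposal follows essentially the same route as the paper: the paper also composes the boosting reduction of \Cref{sec:boosting-sssp} with the smoothing reduction (\Cref{theorem:results_directed_smooth}), answers the resulting $(1+O(1/\log n))$-approximate oracle calls with the CONGEST algorithm of \cite{Cao2021ImprovedCONGEST}, and relies on exactly the two observations you flag, namely that this oracle still works when the source is a virtual node and that the reductions themselves are implementable in \congest with negligible overhead. The only point the paper spells out a bit more explicitly is that the intermediate graphs produced by the boosting step must be perturbed by an additive $1/\poly(n)$ and then rescaled/discretized so that the oracle is only ever invoked on nonnegative, polynomially bounded \emph{integer} edge lengths; this is subsumed under your remark about mild edge-length rescalings.
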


For some range of parameters, this result improves upon the previous fastest exact shortest path \congest algorithm with a round complexity of $\tilde{O}(\sqrt{n}D^{1/4} + D)$ \cite{chechik2020single} by a polynomial factor. 

\subsubsection{Undirected Graphs}

We next present our results on undirected graphs. As mentioned above, our procedure for computing smoothly $(1+\eps)$-approximate distance estimates also works for undirected graphs. Moreover, the reduction preserves tree-likeness. That is, if the oracle returns tree-like approximate distances, then the final distance estimate will also be tree-like, on top of being smoothly $(1+\eps)$-approximate, and therefore by definition strongly $(1+\eps)$-approximate.

\begin{theorem}[Computing smoothly $(1+\eps)$-approximate distance estimates in undirected graphs while preserving tree-likeness]
\label{thm:results_undirected_smooth}
 Given an undirected graph $G$ with (real) weights in $[1,poly(n)]$, a source $s \in V(G)$, and accuracy $\eps \in (0,1]$. 
 \Cref{alg:smoothing} computes smoothly $(1+\eps)$-approximate distance estimates from $s$ in $G$ using $O(\log n)$ calls to a $(1 + O(\eps/\log n))$-approximate distance oracle $\fO$ on undirected graphs. 
 Moreover, if $\fO$ returns tree-like $(1+\eps)$-approximate distance estimates from $s$ in $G$, then $\Cref{alg:smoothing}$ computes strongly $(1+\eps)$-approximate distance estimates from $s$ in $G$.
\end{theorem}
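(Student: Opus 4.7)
The smoothness portion of the claim follows almost directly from Theorem~\ref{theorem:results_directed_smooth}. The plan is to view the undirected graph $G$ as a directed graph $\vec{G}$ obtained by replacing each undirected edge $\{u,v\}$ with two antiparallel directed edges of the same length. Shortest-path distances from $s$ are preserved under this reduction, so any undirected distance oracle on $G$ immediately serves as a directed oracle on $\vec{G}$ with the same approximation guarantee. Running Algorithm~\ref{alg:smoothing} on the undirected instance therefore produces the same numerical output as it would on $\vec{G}$, and Theorem~\ref{theorem:results_directed_smooth} gives that this output is smoothly $(1+\eps)$-approximate in $\vec{G}$, which is the same as being smoothly $(1+\eps)$-approximate in $G$.

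The substantive new content is the preservation of tree-likeness. The guiding observation I would lean on is that \emph{pointwise minimum preserves tree-likeness}: if $\td_1, \ldots, \td_k$ are each tree-like from $s$ in $G$, then so is $\td(v) := \min_i \td_i(v)$. Indeed, $\td(s) = 0$, and for any $v \neq s$, letting $i^\star$ attain $\td_{i^\star}(v) = \td(v)$ and letting $e = (u,v) \in E$ be the tree-like witness edge for $\td_{i^\star}$ at $v$, we get $\td(u) \leq \td_{i^\star}(u) \leq \td_{i^\star}(v) - \ell(e) = \td(v) - \ell(e)$, so the same edge $e$ witnesses tree-likeness of $\td$ at $v$.

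Given this observation, the proof reduces to a structural inspection of Algorithm~\ref{alg:smoothing}. Concretely, I would go through its pseudocode and verify that the running distance estimate $\td^{(i)}$ maintained by the algorithm is at every step updated only by taking a pointwise minimum with the output of a fresh oracle invocation (with the base case $\td^{(0)}$ itself being such an oracle output). An immediate induction using the observation above then shows that $\td^{(i)}$ is tree-like from $s$ in $G$ whenever every oracle call returns a tree-like estimate. Combined with the smoothness statement already established, the final estimate is simultaneously $\alpha$-smooth and tree-like, i.e., strongly $(1+\eps)$-approximate by Definition~\ref{def:strongly}.

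The step I expect to be the main obstacle is handling the possibility that Algorithm~\ref{alg:smoothing} invokes the oracle not on $G$ itself but on a modified instance (for instance, a graph with a virtual super-source attached by weighted edges, or with a rescaled length function, both of which are natural in smoothing reductions of this type). In that case a tree-like witness edge returned by the oracle lives in the modified graph and one must argue it projects back to a valid witness edge in $G$ with its \emph{original} length $\ell(e)$. For auxiliary super-source constructions, this should be handled by chaining: a witness edge that points into the virtual source can be replaced by the tree-like witness for the previous iterate at the endpoint, and this chaining terminates at $s$. For multiplicative length rescalings, tree-likeness is preserved only when the rescaling factor is absorbed into the min-combination with an unmodified oracle call on $G$; I would verify by reading the algorithm that this is indeed how the combination is done. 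Once these bookkeeping checks are in place the proof closes.
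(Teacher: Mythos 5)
Your plan has the right shape for the tree-likeness half, but two steps do not hold up as stated. First, the smoothness claim cannot be obtained by a black-box appeal to \Cref{theorem:results_directed_smooth} via bidirecting $G$. \Cref{alg:smoothing} does not call the oracle on $G$ (or $\vec G$) alone: each invocation of \Cref{alg:smoothing_lemma} calls $\fO$ on the level graphs $H_1,H_2$ of \Cref{def:level_graph}, which have a virtual super-source $\sigma$ attached by weighted edges, edge lengths rescaled by $(1+\eps)\alpha$, and all edges crossing level boundaries deleted. In the directed formulation the $\sigma$-edges are directed, so these auxiliary graphs are not of the form $\vec G$ and an undirected oracle cannot be substituted for the directed one; consequently your assertion that running the algorithm with an undirected oracle ``produces the same numerical output'' as the directed run is unjustified (the oracle is adversarial within its guarantee and is being queried on different graphs). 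The paper closes exactly this gap by making the $\sigma$-edges undirected and observing that the entire analysis (e.g.\ \Cref{prop:sandwich}) only ever uses $d_{H_i}(\sigma,\cdot)$ and distances inside $H_i[V(G)]$, neither of which changes when the $\sigma$-edges become undirected, since revisiting $\sigma$ never shortens a path with nonnegative lengths. Some statement of this kind is the actual content of the undirected claim and is missing from your argument.

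Second, your ``immediate induction'' for tree-likeness presumes the running estimate is updated by a pointwise minimum with fresh oracle outputs \emph{on $G$}. The real update is $\td'=\min\bigl(\td,\ \lfloor\td\rfloor_1+\hd_1,\ \lfloor\td\rfloor_2+\hd_2\bigr)$, where $\hd_i$ is the oracle output from $\sigma$ on $H_i$, so your min-closure observation (which is \Cref{lem:closure}) does not apply directly: one must show that at any $v$ where the minimum is attained by $\lfloor\td\rfloor_i+\hd_i$, a valid witness edge in $G$ exists. The needed case analysis is the one you only gesture at: if the tree-like witness of $\hd_i$ at $v$ is the edge from $\sigma$, then $\hd_i(v)\ge\{\td(v)\}_i$ forces $\td_i(v)\ge\td(v)$, so $\td'(v)=\td(v)$ and the old witness of $\td$ serves (your ``fall back to the previous iterate'' idea, which is correct and needs only one step, not a chain terminating at $s$); otherwise the witness is an edge of $G$ inside one level, where $\lfloor\td(u)\rfloor_i=\lfloor\td(v)\rfloor_i$ lets the floor offsets cancel and $\ell_{H_i}(u,v)\ge\ell_G(u,v)$ makes the rescaling harmless (no ``absorption'' is needed, contrary to your worry). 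So your sketch points in the direction of the paper's proof of the ``Moreover'' part of \Cref{thm:smooth}, but as written it defers precisely the steps that constitute the proof, and the smoothness-for-undirected-oracles step is argued incorrectly.
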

We note that we can get a similar result in the more complicated all pairs shortest paths setting, we leave the statement and the proof to \cref{sec:APSP}. 

We also give a procedure for computing tree-like $(1+\eps)$-approximate distance estimates which only relies on approximate distance estimates.

\begin{theorem}[Computing tree-like $(1+\eps)$-approximate distance estimates in undirected graphs]
\label{thm:results_undirected_tree_like}
Given an undirected graph $G$ with (real) weights in $[1,poly(n)]$, a source $s \in V(G)$, and accuracy $\eps \in (0,1]$. 
\Cref{alg:tree_constructing} computes tree-like $(1+\eps)$-approximate distance estimates from $s$ in $G$ using $O(\log^2 n)$ calls to a $(1 + O(\eps/\log n))$ approximate distance oracle $\fO$ on undirected graphs.
\end{theorem}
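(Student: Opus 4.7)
My plan is to iteratively refine the oracle's output into a tree-like distance, using the smoothing procedure of \Cref{thm:results_undirected_smooth} as a subroutine and proceeding in $O(\log n)$ outer phases. Since each smoothing invocation itself costs $O(\log n)$ oracle calls, the total budget is $O(\log^2 n)$. Throughout, I would pick the per-call accuracy as $\eps' = \Theta(\eps / \log n)$ so that the accumulated approximation factor $(1 + \eps')^{O(\log n)}$ remains $1 + O(\eps)$.

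The starting structural observation is that a smoothly $(1+\eps')$-approximate $\td$ already contains tree-like feasible edges ``hidden'' along shortest paths. Concretely, for every $v \ne s$, summing local differences along the shortest $s$-to-$v$ path $s = u_0, u_1, \dots, u_k = v$ telescopes to
\[
  \sum_{j=0}^{k-1} \bigl(\td(u_{j+1}) - \td(u_j) - \ell(u_j, u_{j+1})\bigr) = \td(v) - d_G(s, v) \ge 0,
\]
so at least one edge on the path satisfies the tree-like inequality $\td(u_{j+1}) - \td(u_j) \ge \ell(u_j, u_{j+1})$. However, this feasible edge need not be incident to $v$ itself, which is the core obstacle. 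My algorithm would maintain a growing set $S$ of vertices already assigned valid tree-like predecessors (initially $S = \{s\}$); in each phase it (i) re-invokes the smoothing procedure to obtain a fresh smoothly $(1+\eps')$-approximate $\td$, (ii) admits to $S$ every $v \notin S$ that has an incident feasible edge $(u,v)$ with $u \in S$ and $\td(u) + \ell(u,v) \le \td(v)$, and (iii) for stubborn remaining vertices raises their $\td$-value via a single Bellman-Ford relaxation so that a valid parent edge is forced to exist. A charging argument based on the telescoping identity — every shortest $s$-to-$v$ path must cross the $S$-boundary, and some crossing edge must be tree-like feasible — should show that a constant fraction of $V \setminus S$ joins $S$ per phase, so that $S = V$ after $O(\log n)$ phases.

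The main obstacle I anticipate is precisely this constant-fraction progress bound: the telescoping argument only guarantees the existence of some feasible edge along each shortest path, and converting this into an assertion that many distinct boundary vertices each gain an incident feasible edge requires more care. I expect to exploit the undirected symmetry of smoothness, which forces $|\td(u) - \td(v)| \le (1+\eps') \ell(u,v)$ on every edge and so prevents the telescoped excess $\td(v) - d_G(s,v)$ from concentrating on few edges, together with a clean amortization over shortest paths crossing the $S$-boundary. Secondary concerns — preserving noncontractivity after the Bellman-Ford relaxation, and ensuring the total approximation across all $O(\log n)$ phases stays below $(1+\eps)$ — are handled by the choice of $\eps'$ above and by the monotonicity of Bellman-Ford relaxations.
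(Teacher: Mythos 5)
Your plan hinges on two claims that are not established, and I believe at least one of them is false as stated. First, the constant-fraction progress bound: the telescoping identity only guarantees that \emph{some} edge on each shortest $s$--$v$ path is tree-like feasible, and smoothness does not force these feasible edges to spread out. A smoothly $(1+\eps')$-approximate estimate may accumulate slack of order $\eps' \cdot d_G(s,v)$ early on a path and then increase by strictly less than $\ell(e)$ (or even decrease) across a stretch of $\Theta(\eps' n)$ consecutive vertices; on such a stretch no vertex has an incident feasible edge, and your step (iii) relaxation $\td(v) \leftarrow \min_u(\td(u)+\ell(u,v))$ only creates feasible parent edges at the two ends of the stretch (after one raise, interior vertices again all sit at the same level and differ by less than the edge length). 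This is exactly Bellman--Ford-type propagation, advancing $O(1)$ vertices per stretch per phase, so $O(\log n)$ phases do not suffice without a genuinely new idea. Second, the phases do not compose: tree-likeness is a property of a \emph{fixed} estimate, and re-invoking the smoothing procedure in step (i) replaces $\td$ by a pointwise minimum with fresh oracle-derived values. Lowering $\td(v)$ below the value for which you certified a parent edge destroys that certificate, and the ``preserves tree-likeness'' clause of \Cref{thm:results_undirected_smooth} cannot help you here because it requires the \emph{oracle} to return tree-like estimates --- which is precisely the object you are trying to construct. Your raising step (iii) and the minimum-taking smoothing step (i) also directly work against each other, and no invariant is given that survives both.

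For contrast, the paper's proof of this statement does not go through smoothing at all. \Cref{alg:tree_constructing} builds an actual approximate shortest-path tree by recursion on the distance scale: a ball-cutting subroutine (\Cref{lem:cutting}, solving \Cref{def:cutting}) produces a set $S$ containing every vertex within distance $D/2$ of $s$ such that distances inside $G[S]$ are $(1+\eps')$-preserved (this is what rules out shortest paths weaving in and out of the cut); the edges leaving $S$ are then contracted onto $s$ with their weights reduced by roughly $D/2$, the algorithm recurses on the resulting graph of half the radius, and the trees are stitched together. There are $O(\log(nD)) = O(\log n)$ recursion levels, each costing one high-accuracy oracle call plus $O(\log n)$ calls to a $1.1$-approximate oracle for the ball cutting, giving the $O(\log^2 n)$ bound, and tree-likeness of the induced estimate is automatic because the output is a genuine spanning tree (\Cref{lem:tree-like_vs_tree}). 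If you want to salvage your incremental approach, the missing piece is a provable potential/charging argument showing geometric decay of the set of vertices without feasible incident edges under an operation that never invalidates existing certificates; as written, that is the gap.
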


By combining the two reductions, we obtain a procedure for turning approximate distance estimates into strongly approximate distance estimates, modulo some low-level details that we discuss in \Cref{sec:applications_computational_model}.

By using the recent deterministic parallel and distributed approximate shortest path algorithms of Rozhon \textcircled{r} Grunau \textcircled{r} Haeupler \textcircled{r} Zuzic \textcircled{r} Li \cite{rozhon_grunau_haeupler_zuzic_li2022deterministic_sssp}, we then obtain the following two corollaries.

\begin{restatable}{corollary}{undirectedpram}[Undirected Strong Distance Estimates, PRAM]
\label{cor:results_undirected_pram}
  There exists a parallel algorithm that takes as input any undirected graph $G$ with nonnegative polynomially bounded integer edge lengths, any source vertex $s \in V(G)$ and a parameter $\eps \in (0,1]$. The algorithm computes a strongly $(1+\eps)$-approximate distance estimate from $s$. The algorithm is deterministic, requires $m \cdot \left(\frac{\log n}{\eps}\right)^{O(1)}$ work and depth $\left( \frac{\log n}{\eps}\right)^{O(1)}$.
\end{restatable}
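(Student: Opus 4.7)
The plan is to chain the two reductions from \Cref{thm:results_undirected_smooth} and \Cref{thm:results_undirected_tree_like} on top of the deterministic parallel approximate SSSP algorithm of \cite{rozhon_grunau_haeupler_zuzic_li2022deterministic_sssp}. At the outermost layer I invoke \Cref{thm:results_undirected_smooth} with parameter $\eps$; it makes $O(\log n)$ calls to a tree-like $(1+O(\eps/\log n))$-approximate distance oracle and, because tree-likeness is preserved under \Cref{alg:smoothing}, returns a strongly $(1+\eps)$-approximate estimate. Each of these tree-like oracle calls is realized by \Cref{thm:results_undirected_tree_like} with accuracy parameter $\eps' = \Theta(\eps/\log n)$, which in turn makes $O(\log^2 n)$ calls to a plain $(1+O(\eps'/\log n)) = (1+O(\eps/\log^2 n))$-approximate distance oracle. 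Finally, each such plain oracle is instantiated with the Rozhon--Grunau--Haeupler--Zuzic--Li algorithm, which is deterministic and runs in $m\cdot(\log n/\eps)^{O(1)}$ work and $(\log n/\eps)^{O(1)}$ depth for any $1/\eps=\poly(n)$.

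For the resource analysis, observe that each call to the base approximate SSSP algorithm is invoked with accuracy $\Theta(\eps/\log^2 n)$, so the per-call cost is still $m\cdot(\log n/\eps)^{O(1)}$ work and $(\log n/\eps)^{O(1)}$ depth, as the extra $\polylog n$ factor is absorbed by the polynomial overhead. The total number of oracle invocations across both reductions is $O(\log n)\cdot O(\log^2 n)=O(\log^3 n)$, giving overall $m\cdot(\log n/\eps)^{O(1)}$ work and, treating the calls as sequentially composed, $(\log n/\eps)^{O(1)}$ depth. Determinism is preserved through composition since the underlying approximate SSSP algorithm is deterministic and both reductions in \Cref{thm:results_undirected_smooth} and \Cref{thm:results_undirected_tree_like} perform only black-box oracle calls together with deterministic glue code.

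The main obstacle will be the ``low-level technical details'' that the excerpt explicitly defers to \Cref{sec:applications_computational_model}. Concretely, I need to verify that (i) any modified graphs fed into inner oracle calls (e.g.\ auxiliary edges or reweightings introduced inside \Cref{alg:smoothing} and \Cref{alg:tree_constructing}) can be constructed within the claimed work/depth budget and remain nonnegative-weighted polynomially bounded inputs accepted by the base algorithm; (ii) the non-oracle work inside each reduction (e.g.\ aggregating minima/maxima over candidate estimates, verifying tree-like parent certificates, relabeling edges) fits in $m\cdot(\log n/\eps)^{O(1)}$ work and $(\log n/\eps)^{O(1)}$ depth; and (iii) the nested accuracy bookkeeping is tight, so that the multiplicative $(1+O(1/\log n))$ slack eaten at each layer composes to exactly $(1+\eps)$ at the top. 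Once these items are settled, the corollary follows by direct composition, matching the arrows in \Cref{fig:definitions}.
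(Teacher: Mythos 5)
Your proposal matches the paper's own argument: the paper likewise answers each oracle call of \Cref{alg:smoothing} (via \Cref{thm:results_undirected_smooth}, which preserves tree-likeness) using \Cref{alg:tree_constructing} (via \Cref{thm:results_undirected_tree_like}), instantiates the innermost $(1+O(\eps/\log n))$-approximate oracle with the deterministic algorithm of \cite{rozhon_grunau_haeupler_zuzic_li2022deterministic_sssp}, and handles exactly the weight-discretization/scaling and accuracy-composition technicalities you flag in \Cref{sec:applications_computational_model}. So the proposal is correct and essentially the same as the paper's proof.
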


\begin{restatable}{corollary}{undirectedcongest}[Directed Strong Distance Estimates, CONGEST]
\label{cor:results_undirected_congest}
  There exists a distributed algorithm that takes as input any undirected graph $G$ with nonnegative polynomially bounded integer edge lengths, any source vertex $s \in V(G)$ and a parameter $\eps \in (0,1]$. The algorithm computes a strongly $(1+\eps)$-approximate distance estimate from $s$. The algorithm is deterministic and runs in $(\sqrt{n} + D)\left(\frac{\log n}{\eps}\right)^{O(1)}$ rounds.
\end{restatable}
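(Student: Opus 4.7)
The plan is to compose the three deterministic ingredients already in hand: the approximate-distance CONGEST algorithm of Rozhoň \textcircled{r}\ Grunau \textcircled{r}\ Haeupler \textcircled{r}\ Zuzic \textcircled{r}\ Li as the base oracle, the tree-likeness reduction of \Cref{thm:results_undirected_tree_like}, and the smoothing reduction of \Cref{thm:results_undirected_smooth} (in the form that preserves tree-likeness). Concretely, I would first invoke \Cref{thm:results_undirected_tree_like} to build an oracle $\fO_{\mr{tree}}$ returning tree-like $(1+\eps')$-approximate distance estimates, and then feed $\fO_{\mr{tree}}$ into \Cref{thm:results_undirected_smooth} to obtain distance estimates that are simultaneously tree-like and smoothly $(1+\eps)$-approximate, i.e., strongly $(1+\eps)$-approximate by \Cref{def:strongly}.

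The accuracy budget is tracked by unrolling the reductions. To obtain a strongly $(1+\eps)$-approximate output via \Cref{thm:results_undirected_smooth}, the tree-like oracle $\fO_{\mr{tree}}$ must be accurate to a factor of $1+O(\eps/\log n)$; producing such an $\fO_{\mr{tree}}$ via \Cref{thm:results_undirected_tree_like} in turn requires a base approximate-distance oracle of accuracy $1+O(\eps/\log^2 n)$. Calling the algorithm of \cite{rozhon_grunau_haeupler_zuzic_li2022deterministic_sssp} with this accuracy runs in $(\sqrt{n}+D)\cdot(\log n/\eps)^{O(1)}$ rounds in CONGEST and is deterministic. Since \Cref{thm:results_undirected_tree_like} makes $O(\log^2 n)$ oracle calls and \Cref{thm:results_undirected_smooth} makes $O(\log n)$ calls to $\fO_{\mr{tree}}$, the overall round count is
\[
O(\log^3 n)\cdot (\sqrt{n}+D)\cdot (\log n/\eps)^{O(1)} \;=\; (\sqrt{n}+D)\cdot (\log n/\eps)^{O(1)},
\]
matching the claimed bound. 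Determinism is preserved throughout because all three ingredients are deterministic.

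The main obstacle I anticipate is not the bookkeeping above but rather the \emph{implementability} of the two reductions in the CONGEST model with only polylogarithmic overhead per oracle call; in particular, one must argue that all auxiliary bookkeeping (aggregations, rescalings, and the short local computations interleaved between oracle invocations in \Cref{alg:smoothing} and \Cref{alg:tree_constructing}) can be performed in $\tilde{O}(\sqrt{n}+D)$ rounds so that they are absorbed into the oracle cost. These low-level model-specific details are exactly what is deferred to \Cref{sec:applications_computational_model}, and assuming the reductions are shown there to be CONGEST-implementable with the stated overhead, the corollary follows immediately from the composition and round-complexity calculation above.
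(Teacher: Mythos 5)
Your proposal is correct and takes essentially the same route as the paper: the paper likewise answers each oracle call of \Cref{alg:smoothing} with \Cref{alg:tree_constructing} (so the output is simultaneously tree-like and smooth, hence strongly approximate) and instantiates the base oracle with the deterministic CONGEST algorithm of \cite{rozhon_grunau_haeupler_zuzic_li2022deterministic_sssp}, deferring the same low-level issues (weight rescaling/discretization and the fact that the oracle must also be callable from a virtual supersource) to \Cref{sec:applications_computational_model}. Your accuracy and call-count bookkeeping ($O(\log^3 n)$ base calls at accuracy $1+O(\eps/\log^2 n)$) matches the paper's intended composition, so nothing essential is missing.
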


\section{Related Work}\label{sec:relatedwork}

The related work for the shortest path problem is vast, hence this section only tries to focus on the most relevant papers.


\textbf{Exact and Directed SSSP.} The current state of literature suggests that techniques for computing exact techniques are intricately tied to directed graphs (up to gaps that we close in this paper). For the sequential case, Dijkstra's algorithm gives an essentially-optimal solution. In the parallel setting, work-span trade-offs were given by Ullman\&Yannakakis~\cite{ullman1991high} and Spencer~\cite{spencer1991more} from the 90's with $\tilde{O}(m\rho + \frac{\rho^4}{n})$ and $\tilde{O}(m + n\rho^2)$ work for a depths of $O(n/\rho)$ for any $\rho \in [1,n]$. This was improved by Klein and Subramanian~\cite{klein1997randomized} who gave a parallel $\tilde{O}(\sqrt{n})$-depth and $\tilde{O}(m\sqrt{n})$-work exact SSSP algorithm by showing how to boost the approximate variant of the problem to the exact one. More recently, there have been improvements in construction for directed hopsets~\cite{fineman2018nearly,jambulapati2019parallel,cao2020paralleldirected}, an important combinatorial structure used for most parallel and distributed algorithms, culminating in a parallel $(1+\eps)$-approximate SSSP with work-span tradeoff of $\tilde{O}(m \rho^2 + n\rho^4)$ work and $n^{1/2+o(1)} / \rho$ depth for $\rho \in [1, \sqrt{n}]$. Exciting progress in directed hopset construction started by Kogan and Parter~\cite{kogan2022new,kogan_et_al:LIPIcs.ICALP.2022.82,bernstein2022closing} has recently broken the $\sqrt{n}$-threshold, with their construction being the bottleneck that prevents them from improving on the state-of-the-art.

In the distributed setting (specifically, for the standard message-passing model CONGEST), the first sublinear algorithm for exact directed SSSP was given by Elkin\todo{add citation} with a $\tilde{O}(n^{5/6} + n^{2/3} D^{1/3})$-round CONGEST algorithm, where $D$ is the hop-diameter of the graph. Forster and Nanongkai~\cite{forster2018faster} gave two algorithms taking $\tilde{O}(\sqrt{nD})$ and $\tilde{O}(\sqrt{n}D^{1/4} + n^{3/5} + D)$ rounds, while Ghaffari and Li~\cite{ghaffari2018improved} gave a $\tilde{O}(n^{3/4+o(1)} D^{1/6} + D)$-round and $\tilde{O}(n^{6/7} + D)$-round algorithms, both improving on Elkin's bound. Chechik and Muhtar~\cite{chechik2020single} gave a $\tilde{O}(\sqrt{n} D^{1/4} + D)$-round algorithm for exact directed SSSP. Finally, for directed $(1+\eps)$-approximate SSSP, Cao, Russell, and Fineman~\cite{Cao2021ImprovedCONGEST} gave a $\tilde{O}(\sqrt{n} + n^{2/5+o(1)} D^{2/5} + D)$-round algorithm.

\textbf{Approximate Undirected SSSP.} The progress for approximate distance computation on undirected graphs has significantly outpaced the progress on directed or exact ones. The breakthrough work of Cohen~\cite{cohen2000polylog} gave an $m^{1 + o(1)}$-work and $\poly(\log n)$-depth algorithm. This was the first truly sublinear work-efficient parallel algorithm (somewhat mirroring our result), and was achieved by introducing the (undirected) \emph{hopset}, a combinatorial structure that spurred significant research interest~\cite{elkin2019hopsets,huang2019thorup,ben2020new,elkin2020near}. The state-of-the-art for $(1+\eps)$-SSSP in PRAM is given by the randomized algorithms of Andoni etc~\cite{andoni_stein_zhong2020shortest_paths} and Li~\cite{Li20shortest_paths} for undirected graphs with $\tilde{O}(\frac{1}{\eps^2} m)$ work and $(\frac{\log n}{\eps})^{O(1)}$ depth. A deterministic algorithm with the same guarantees was given by Rozhon \textcircled{r} Grunau \textcircled{r} Haeupler \textcircled{r} Zuzic \textcircled{r} Li \cite{rozhon_grunau_haeupler_zuzic_li2022deterministic_sssp}. In the distributed setting (specifically, for the standard message-passing model CONGEST), Becker, Forster, Karrenbauer, and Lenzen~\cite{becker2021near} gave an existentially-optimal $\tilde{O}(D + \sqrt{n})$ algorithm for $(1+\eps)$-SSSP, where $D$ is the hop-diameter of the graph. This was improved to a universally-optimal algorithm (one that is as fast as possible on any network topology) by Zuzic \textcircled{r} Goranci \textcircled{r} Ye \textcircled{r} Haeupler \textcircled{r}Sun~\cite{goranci2022universally}, and made deterministic in the aforementioned \cite{rozhon_grunau_haeupler_zuzic_li2022deterministic_sssp}.

\textbf{Work-efficient parallel algorithms in special graph classes.} For Erdos-Renyi random graphs, Crauser, Mehlhorn, Meyer, and Sanders~\cite{crauser1998parallelization} gave a work-efficient $\tilde{O}(n^{1/3})$-depth exact SSSP. For planar and genus-bounded directed graphs, Klein and Subramanian~\cite{klein1993linear} gave a parallel work-efficient $\tilde{O}(\sqrt{n})$-depth exact SSSP algorithm. For bounded-treewidth graphs, Chaudhuri and Zaroliagis~\cite{chaudhuri1998shortest} gave a work-efficient $O(\log^2 n)$-depth algorithm for exact SSSP.


\subsection{Parallel and Distributed Algorithms for Approximate Shortest Paths}

The investigation into efficient computation of approximate distances in the parallel and distributed setting has flourished and been a tremendous source of fundamental ideas. These ideas include





\section{Preliminaries and Notation}
\label{sec:preliminaries}

Let $G$ be a directed or an undirected graph with $n := |V(G)|$ nodes and $m := |E(G)|$ edges. We use $\Nin_G(v), \Nout_G(v)$ to denote the in- and out-neighborhood of a node $v$ in $G$. For undirected graphs, $\Nin_G(v) = \Nout_G(v)$.
All our graphs are weighted. More precisely, they come with a length function $\ell_G: E(G) \mapsto \R_{\ge 0}$ which assigns each edge $e \in E(G)$ a nonnegative length $\ell_G(e)$; if $e = (u,v)$, we simplify the notation and write $\ell_G(u,v)$. However, most of our results need to assume that the input graph has only \textbf{positive integer} lengths that are polynomially bounded, albeit this property might be lost throughout the algorithm. We will make it clear in our formal statements when do we assume this is the case.
We also define the \textbf{maximum distance} $\Delta(G) := \max_{u, v \in V(G)} d_G(u, v)$, or just $\Delta$ when $G$ is clear from context.


Unless otherwise stated, all of our results hold for both directed and undirected graphs. In the latter case, we think of undirected graphs as directed graphs where each undirected edge is replaced by two opposite directed edges with the same weight.




\subsection{Boosting for directed SSSP}\label{sec:boosting-sssp}

In this section, we show how to \emph{boost} smoothly $\alpha$-approximate distance estimates to exact distances with $O(\alpha \log n)$ calls to the approximate oracle (and an insignificant amount of additional processing). In other words, finding an exact solution for SSSP on directed graphs is essentially of the same hardness as finding an (appropriate) approximate solution. Moreover, this reduction is fairly general in that it works in (at least) sequential, parallel, and even distributed models.

The core of the method is the following graph transformation: given a directed graph $G$ and a smoothly $\alpha$-approximate  distance estimate $\td$ (with respect to a source $s$), we define a new directed graph $G'$ by changing edge length of each edge $e = (u,v)$ to $\ell'(e) := \ell(e) - \td(v)/\alpha + \td(u)/\alpha$. It is easy to show that the new graph $G'$ retains nonnegative edge weights due to $\alpha$-smoothness of $\td$. Furthermore, the shortest path of $G'$ tree is also the same as the one for $G$. Various versions of this method have appeared throughout the literature~\cite{hart1968formal,johnson1977efficient,bernstein_nanogkai_wulffnilsen2022negative_sssp}.

\Cref{alg:folklore_directed_smoothing} uses the transformation to boost any approximate oracle to an exact solution: we show that repeatedly finding a smoothly $\alpha$-approximate distance estimate $\td$ via the oracle and transforming the graph using $\td$ yields a (nonnegative) graph where the shortest paths from the source $s$ to all other nodes are $0$, making the (exact) shortest path tree trivial to find. Since all of the graphs share a common shortest path tree (property of the transformation), this is sufficient to solve exact SSSP.
\begin{algorithm}[h]
	\caption{Constructing exact distances from $\alpha$-approximate ones for directed graphs}
	\label{alg:folklore_directed_smoothing}
	{\bf Input:} A directed graph $G$ with weights $\Z_{\ge 0}$ and maximum distance upper bound $\Delta$, source $s \in V(G)$, and an approximate oracle $\fO$. \\
        {\bf Oracle:} $\fO(G, s)$ returns a smoothly $\alpha$-approximate estimate $\td : V(G) \to \R_{\ge 0}$ from $s$ in $G$. \\
	{\bf Output:} Exact distances $\td' : V(G) \to \Z_{\ge 0}$ from $s$ in $G$.
	\begin{algorithmic}[1]
          \State $\ell_1 \gets \ell_G$ be the edge lengths of $G = (V, E, \ell_G)$
          \For{$i \in 1, 2, \ldots, I$, where $I \gets O(\alpha \log ( n \Delta ) )$}
          \State $\td_i \gets \fO( (V, E, \ell_i), s)$
          \State $\forall e = (u, v) \in E$,\ $\ell_{i+1}(e) \gets \ell_{i}(e) - \td_i(v)/\alpha + \td_i(u)/\alpha$
          \EndFor
          \State \Return $\td'$ defined by $\td'(v) := \sum_{i=1}^I \td_i(v) / \alpha$, rounded to the nearest integer.
	\end{algorithmic}
\end{algorithm}

\begin{lemma}[Implicit in \cite{klein1997randomized}]\label{lemma:folklore_directed_smoothing}
  Given an $n$-node directed graph $G$ with nonnegative integer weights and maximum distance $\Delta$, a source $s \in V(G)$, and access to an oracle $\fO$ which returns smoothly $\alpha$-approximate distance estimates from $s$, \Cref{alg:folklore_directed_smoothing} computes the exact distances from $s$ to each node in $G$. The oracle $\fO$ is called $O(\alpha \log (n \Delta))$ times. 
\end{lemma}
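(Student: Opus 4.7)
The proof will rest on a single reweighting identity that both preserves shortest paths and geometrically shrinks distances from $s$. I will set $G_i := (V,E,\ell_i)$ and argue the following three invariants by induction on $i$.

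\emph{Step 1: Nonnegativity of $\ell_{i+1}$.} Since $\td_i$ is $\alpha$-smooth with respect to $s$ in $G_i$, for every edge $e=(u,v) \in E$,
\[
  \td_i(v) - \td_i(u) \;\le\; \alpha \cdot d_{G_i}(u,v) \;\le\; \alpha \cdot \ell_i(u,v),
\]
so $\ell_{i+1}(u,v) = \ell_i(u,v) - \td_i(v)/\alpha + \td_i(u)/\alpha \ge 0$. Thus $G_{i+1}$ is a valid instance for the next oracle call.

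\emph{Step 2: Distance telescoping.} For any path $P$ from $s$ to $v$ in $G_i$, the reweighted length telescopes:
\[
  \sum_{(u,w) \in P} \ell_{i+1}(u,w) \;=\; \sum_{(u,w) \in P} \ell_i(u,w) \;-\; \td_i(v)/\alpha + \td_i(s)/\alpha \;=\; \Bigl(\sum_{(u,w) \in P} \ell_i(u,w)\Bigr) - \td_i(v)/\alpha,
\]
using $\td_i(s)=0$ (part of $\alpha$-smoothness). This constant offset implies $d_{G_{i+1}}(s,v) = d_{G_i}(s,v) - \td_i(v)/\alpha$ and that every shortest $s$-$v$ path in $G_i$ is a shortest $s$-$v$ path in $G_{i+1}$, so the shortest-path tree is preserved across all iterations.

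\emph{Step 3: Geometric decay and rounding.} Noncontractivity gives $\td_i(v) \ge d_{G_i}(s,v)$, hence by Step 2,
\[
  d_{G_{i+1}}(s,v) \;\le\; \bigl(1 - 1/\alpha\bigr)\, d_{G_i}(s,v).
\]
Iterating from $G_1 = G$ and using $d_G(s,v) \le \Delta$ yields $d_{G_{I+1}}(s,v) \le (1-1/\alpha)^I \Delta \le e^{-I/\alpha}\Delta$. Choosing the hidden constant in $I = O(\alpha \log(n\Delta))$ large enough makes this strictly less than $1/2$. Telescoping Step 2 over $i=1,\dots,I$ gives
\[
  \td'(v) \;=\; \sum_{i=1}^I \td_i(v)/\alpha \;=\; d_G(s,v) - d_{G_{I+1}}(s,v),
\]
so $\td'(v) \in (d_G(s,v) - 1/2,\; d_G(s,v)]$. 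Because $d_G(s,v)$ is a nonnegative integer (the original weights are integers and path lengths are integer sums), rounding $\td'(v)$ to the nearest integer recovers $d_G(s,v)$ exactly.

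\emph{Main obstacles.} The argument itself is short and conceptual; there are no hard technical steps. The only subtlety I expect is bookkeeping: verifying that the oracle's guarantees (nonnegativity of $\ell_i$, polynomially-bounded weights so the oracle is even callable, $\td_i(s)=0$) continue to hold across all $I$ iterations, and checking that the bound on $I$ really produces residual distance below $1/2$ rather than some other threshold needed for rounding. This is where it matters that the original weights are integers (giving an integer target) and polynomially bounded together with $\Delta$ (keeping $\log(n\Delta) = O(\log n)$ so the total oracle-call count behaves as advertised).
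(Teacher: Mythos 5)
Your proof is correct and follows essentially the same route as the paper: nonnegativity of the reweighted lengths via $\alpha$-smoothness, the telescoping identity $d_{G_{i+1}}(s,v) = d_{G_i}(s,v) - \td_i(v)/\alpha$, geometric decay from noncontractivity, and rounding the accumulated integer distance. The only (cosmetic) difference is that you apply the $(1-1/\alpha)$ decay pointwise to each $d_{G_i}(s,v)$ whereas the paper applies it to the sum $\sum_{u} d_{G_i}(s,u)$; both give the stated $O(\alpha\log(n\Delta))$ bound on oracle calls.
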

\begin{proof}
  We note that all, throughout the algorithm, edge lengths are nonnegative since $\ell'(e) = \ell(e) - \td(v)/\alpha + \td(u)/\alpha \ge 0$ can be rewritten as $\td(v) - \td(u) \le \alpha \cdot \ell(e)$, which is implied by $\alpha$-smoothness. Moreover, note that for any path from $s$ the identity $\ell_{i+1}(u, v) = \ell_i(u, v) - \td_i(v)/\alpha + \td_i(u)/\alpha$ telescopes. Formally, writing $G(i) := (V, E, \ell_i)$ and telescoping across the shortest path, we get:
\begin{align}
\label{eq:rado}
d_{G(i+1)}(s, u) = d_{G(i)}(s, u) - \td_i(u)/\alpha + \td_i(s)/\alpha = d_{G(i)}(s, u)  - \td_i(u) / \alpha. 
\end{align}
Using the fact that $\td$ is $\alpha$-approximate we have $\td_i(u) \ge d_{G(i)}(s, u)$ (for this proof, we only use noncontractivity of $\td$, with $\alpha$-smoothness this is equivalent to being $\alpha$-approximate). We argue that $\sum_{u \in V} d_{G{i}}(s, u)$ drops by a factor of $(1 - 1/\alpha)$ in each iteration:
\begin{align*}
  \sum_{u \in V} d_{G(i+1)}(s, u) = \sum_{u \in V} [ d_{G(i)}(s, u)  - \td_i(u)/\alpha ] \le (1 - \frac{1}{\alpha}) \sum_{u \in V} d_{G(i)}(s, u) .
\end{align*}
Therefore, after $I = O(\alpha \cdot \log (n \Delta))$ steps, we have $\sum_{u \in V} d_{G(I)}(s, u) < 1/3$, implying $d_{G(I)}(s, u) < 1/3$ since all distances are nonnegative. Then, by summing up \Cref{eq:rado} over $i$, we get $d_G(s, u) = d_{G(I)}(s, u) + \td'(u)$. This implies $|d_G(s, u) - \td'(u)| \le d_{G(I)}(s, u) < 1/3$, hence rounding $\td'(v)$ to the nearest integer gives the correct (integer) distance $d_G(u)$.
\end{proof}
We remark that the oracle is always called on the same graph $G$, with the exception that the lengths of the edges are changed in-between calls. Notably, these changed cause the altered graph to generally become directed (i.e., has different lengths on anti-parallel edges), even if the original graph was undirected.
Finally, we remark that in many parallel or distributed models (and many others), if the model supports an efficient implementation of $\fO$, then \Cref{alg:folklore_directed_smoothing} can also be implemented efficiently. Therefore, exact and (smoothly) approximate distances on (integer weighted) directed graphs are essentially equivalent.

\section{Different Notions of Approximate Distances}
\label{sec:distances}

This section is devoted to the exposition of all kinds of different notions of approximate distances that are either standard or are considered in this paper. We also prove fundamental results for the smooth and tree-like property and their combination. 
All of the notions discussed in this section work for directed graphs. However, we note that the notions of tree-like, strong, and to some extent also smooth distances are interesting mostly for the undirected case, since in the directed graphs we can use the boosting technique from \Cref{sec:boosting-sssp} to get even exact shortest paths.  

The roadmap for this section is as follows. 
In \Cref{sec:potentials} we first briefly review the standard notion of graph potentials. 
In \Cref{sec:def_smooth} we discuss smoothly approximate distances and explain in which sense smooth distances strenthen both approximate distances and potentials. Next, in \Cref{subsec:def_treelike} we discuss the notion \emph{tree-like} distances, i.e., those coming from an underlying approximate shortest path tree. Finally, in \Cref{sec:strong-distances} we argue that having these two properties together leads to strong distances that have a simple alternative definition.

\subsection{Graph Potentials}
\label{sec:potentials}

A notion closely related to that of approximate distances is that of graph potentials. These are distance estimates (i.e., functions $V(G) \to \R_{\ge 0}$) that are dual to approximate distances as can be formalized by the following linear program (this program will not play a role in the subsequent discussion). 

\renewcommand{\arraystretch}{1.5}
\begin{center}
  \begin{tabular}{l @{\hskip 5mm} | @{\hskip 5mm} l @{\hskip 5mm} | @{\hskip 5mm} l }    
    & \textbf{(Primal)} & \textbf{(Dual)} \\   
  \textbf{Variables:} & $f : E \to \R_{\ge 0}$ & $\phi : V \to \R$ \\
  \textbf{Optimize:} & $\displaystyle \max.\ \sum_{e \in E} \ell(e) f(e)$ & $\displaystyle \min.\ \phi(t) - \phi(s)$ \\
  \textbf{Such that:} & $\displaystyle \sum_{v \in \Nout(u)} f(u, v) - \sum_{v \in \Nin(u)} f((v,u)) =$ & $\displaystyle \phi(v) - \phi(u) \le \ell(u, v)$ \\
  & $\qquad = \1{u=s}-\1{u=t}\quad \forall u \in V$ & $\qquad \forall (u, v) \in E$
\end{tabular}
\end{center}

Formally, we define $\beta$-approximate potential as follows. 
\begin{definition}[$\beta$-approximate potential]
\label{def:potential}
A function $\phi: V(G) \mapsto \R_{\ge 0}$ is a potential with respect to $s$ if $
    \phi(s) = 0$
and
\begin{align}
  \label{eq:pot_def1}
  \forall (u,v) \in E(G): \ \ \phi(v)  \leq \phi(u) +  \ell_G(u, v).
\end{align}
A potential $\phi$ is $\beta$-approximate (with respect to a source $s$) for some $\beta \ge 1$ if
\begin{align}
  \label{eq:pot_def2}
    \beta \cdot \sum_{u \in V(G)} \phi(u) \geq  \sum_{u \in V(G)} d_G(s,u).
\end{align}
\end{definition}

Again, note that for $\beta = 1$ we recover exact shortest-path distances. 
In general, potentials always underapproximate the actual distances in the sense that $\phi(v) - \phi(u) \le d_G(u, v)$. Hence, they are $\beta$-approximate iff they do not underapproximate by more than a $\beta$ factor, on average. 

We note that the boosting result \Cref{lemma:folklore_directed_smoothing} from \Cref{sec:boosting-sssp} in fact does not require to start with smooth distance functions. It can boost any $\beta$-approximate potential to the exact shortest-path distance function in $O(\beta \log n)$ iterations.

\subsection{Smoothly Approximate Distances}
\label{sec:def_smooth}

The most important concept analysed in this paper is that of \emph{smoothly} $\alpha$-approximate distances, as defined next. 
The discussion of their applications is deferred to \Cref{sec:application}. For convenience we now repeat the definition of the $\alpha$-smooth condition that was given in \Cref{def:smooth+treelike}. 

\begin{definition}[Smooth distances] 
\label{def:smooth}
  Given a graph $G = (V, E, \ell)$ and a source $s \in V$, we say a distance estimate $\td : V \to \R_{\ge 0}$ is \emph{$\alpha$-smooth} if $
  \td(s) = 0
  $
  and
  \[
  \forall u,v \in V: \ \ \td(v) - \td(u) \leq \alpha \cdot d_G(u,v),
  \]
  Moreover, we say that a distance estimate is smoothly $\alpha$-approximate if it is both $\alpha$-smooth and noncontractive, i.e., it moreover satisfies
    \begin{align}
    \forall u \in V: \ \ d_G(s, u) \le \td(u). 
  \end{align}
\end{definition}

Observe that smoothly $\alpha$-approximate distances strengthen $\alpha$-approximate distances. 
Indeed, if we set $u = s$ in the above $\alpha$-smoothness definition, we exactly recover the bound $\td(v) \le \td(s) + \alpha \cdot d_G(s, v) = \alpha \cdot d_G(s, v)$ for $\alpha$-approximate distance estimate. 

\textbf{Subtractive triangle inequality.} The $\alpha$-smoothness requirement $\td(v) \leq \td(u) + \alpha \cdot d_G(u,v)$ can be seen as requiring an \emph{approximate triangle inequality} to hold for distances to $s$. 
In particular, for exact distances (i.e., $\alpha = 1$) the triangle inequality stipulates correctly that $d_G(s, v) \le d_G(s,u) + d_G(u,v)$. However, this inequality completely fails to hold --- even approximately --- if exact distances are replaced by approximate ones (see \Cref{fig:smooth_path}). Our strengthening demands that this inequality remains to hold $\alpha$-approximately. 
This interpretation explains most easily why smoothly approximate distances can replace exact distances in some contexts: Any use of the above triangle inequality by a proof of correctness still holds approximately if smoothly approximate distances are used to replace exact distances (e.g., see \Cref{sec:application} for an example).

\textbf{Smoothness vs. potentials.} Next, we discuss the relationship of smoothly approximate functions with potentials. 
Observe having the $\alpha$-smooth property is, up to rescaling, the same as being a potential. 

\begin{fact}\label{lemma:smooth-to-potential}
  A distance estimate $\td : V(G) \to \R_{\ge 0}$ is $\alpha$-smooth if and only if $\td / \alpha$ is a potential.
\end{fact}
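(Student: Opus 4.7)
The plan is to prove both directions of the equivalence by essentially unfolding definitions, with a single nontrivial observation: the $\alpha$-smoothness condition is really the telescoped closure of the local ``potential inequality'' along shortest paths.

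For the forward direction, I would start by assuming $\td$ is $\alpha$-smooth and set $\phi := \td/\alpha$. The boundary condition $\phi(s) = \td(s)/\alpha = 0$ is immediate. Then, for any edge $(u,v) \in E(G)$, I would note that $d_G(u,v) \le \ell_G(u,v)$ because the single edge $(u,v)$ already constitutes a path from $u$ to $v$. Applying $\alpha$-smoothness and dividing by $\alpha$ gives $\phi(v) - \phi(u) \le \ell_G(u,v)$, which is exactly \Cref{eq:pot_def1}.

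For the reverse direction, suppose $\phi := \td/\alpha$ is a potential. Again $\td(s) = \alpha \phi(s) = 0$. For arbitrary $u,v \in V$, I would pick a shortest $u \to v$ path $u = w_0, w_1, \ldots, w_k = v$ with $\sum_{i=0}^{k-1} \ell_G(w_i, w_{i+1}) = d_G(u,v)$. Applying the potential inequality edge-by-edge and telescoping yields
\[
\phi(v) - \phi(u) = \sum_{i=0}^{k-1} \big( \phi(w_{i+1}) - \phi(w_i) \big) \le \sum_{i=0}^{k-1} \ell_G(w_i, w_{i+1}) = d_G(u,v).
\]
Multiplying both sides by $\alpha$ yields the $\alpha$-smoothness condition $\td(v) - \td(u) \le \alpha \cdot d_G(u,v)$.

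The ``main obstacle'' is trivial here — really just recognizing that the pointwise smoothness condition over all pairs $(u,v)$ is equivalent to its restriction to edges once one telescopes. The only very minor subtlety is making sure a finite shortest path actually exists in the sense needed; if $v$ is unreachable from $u$, then $d_G(u,v) = +\infty$ and the smoothness bound is vacuous, so there is nothing to check in that case.
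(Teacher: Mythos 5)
Your proof is correct and follows essentially the same route as the paper: the paper's one-line argument rests on the standard fact that potentials satisfy $\phi(v) - \phi(u) \le d_G(u,v)$ for all pairs, plus the rescaling by $\alpha$, which is exactly what you establish. You merely make explicit the telescoping along a shortest path and the edge-case $d_G(u,v) = \ell_G(u,v) \ge d_G(u,v)$ restriction that the paper leaves implicit, which is fine.
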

\begin{proof}
Recall that potentials satisfy $\phi(v) - \phi(u) \le d_G(u,v)$ for all $u,v \in V(G)$. Then the result follows because $\td(v) - \td(u) \leq \alpha \cdot d_G(u,v)$ is equivalent to $\frac{\td(v)}{\alpha} \le \frac{\td(u)}{\alpha} + d_G(u,v)$.
\end{proof}

A helpful view of smoothly $\alpha$-approximate distances is that they get the ``best of the both worlds'': they are exactly the distances that strengthen both $\alpha$-approximate distances and $\alpha$-approximate potentials. 

\begin{lemma}
  $\td : V(G) \to \R_{\ge 0}$ is a smoothly $\alpha$-approximate distance estimate (w.r.t. $s$) if and only if
  \begin{enumerate}
        \item $\td$ is an $\alpha$-approximate distance estimate (w.r.t. $s$), and
        \item $\td/\alpha$ is an $\alpha$-approximate potential function (w.r.t. $s$). 
  \end{enumerate}
\end{lemma}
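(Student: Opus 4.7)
The plan is to prove both directions by unpacking the definitions carefully. Most of the content is bookkeeping: the three ingredients of "smoothly $\alpha$-approximate" (namely $\td(s)=0$, pairwise $\alpha$-smoothness, and pointwise noncontractivity) and the ingredients of the two conditions on the right-hand side overlap substantially. The only nontrivial step is bridging the per-edge potential inequality with the per-pair smoothness inequality, which is done by a standard shortest-path telescoping argument.

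For the forward direction, assume $\td$ is smoothly $\alpha$-approximate. Instantiating $\alpha$-smoothness with $u=s$ yields $\td(v) \le \alpha \cdot d_G(s,v)$, which together with noncontractivity $d_G(s,v) \le \td(v)$ makes $\td$ an $\alpha$-approximate distance estimate, giving item (1). For item (2), the condition $\td(s) = 0$ already says $(\td/\alpha)(s) = 0$. For any edge $(u,v) \in E$, $d_G(u,v) \le \ell_G(u,v)$, so $\alpha$-smoothness gives $\td(v) - \td(u) \le \alpha \cdot \ell_G(u,v)$, i.e. $(\td/\alpha)(v) \le (\td/\alpha)(u) + \ell_G(u,v)$, which is the local potential inequality \cref{eq:pot_def1}. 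Finally, noncontractivity implies the averaged bound $\alpha \cdot \sum_u (\td/\alpha)(u) = \sum_u \td(u) \ge \sum_u d_G(s,u)$, which is \cref{eq:pot_def2} with $\beta = \alpha$. Thus $\td/\alpha$ is an $\alpha$-approximate potential.

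For the backward direction, assume (1) and (2). Condition (2) supplies $\td(s) = 0$ and the per-edge inequality $\td(v) - \td(u) \le \alpha \cdot \ell_G(u,v)$ for every $(u,v) \in E$. Condition (1) supplies noncontractivity $d_G(s,u) \le \td(u)$. It remains only to upgrade the per-edge inequality to the per-pair $\alpha$-smoothness condition. Fix arbitrary $u,v \in V$ and any $u$-to-$v$ path $P = (u = w_0, w_1, \ldots, w_k = v)$ in $G$. Summing the per-edge inequality along $P$ telescopes to
\[
\td(v) - \td(u) = \sum_{i=0}^{k-1} \bigl( \td(w_{i+1}) - \td(w_i) \bigr) \le \alpha \sum_{i=0}^{k-1} \ell_G(w_i, w_{i+1}).
\]
Taking the infimum over all such paths $P$ yields $\td(v) - \td(u) \le \alpha \cdot d_G(u,v)$, which is exactly the $\alpha$-smooth condition.

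The main obstacle, if any, is the telescoping step, and it is routine. The only thing to verify is that no additional hypothesis is needed, in particular that the averaging part $\alpha \cdot \sum_u (\td/\alpha)(u) \ge \sum_u d_G(s,u)$ from condition (2) is not secretly used in the backward direction; indeed, it is redundant, being implied pointwise by noncontractivity from condition (1). Conversely, the forward direction genuinely uses both noncontractivity (for the lower bound of $\alpha$-approximation and for the averaged potential condition) and $\alpha$-smoothness (for the upper bound of $\alpha$-approximation and for the per-edge potential inequality), so both items (1) and (2) are required to capture the full content of "smoothly $\alpha$-approximate."
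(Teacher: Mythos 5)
Your proof is correct and follows essentially the same route as the paper: the paper packages your telescoping step into its earlier observation that potentials satisfy $\phi(v)-\phi(u)\le d_G(u,v)$ for all pairs (\Cref{lemma:smooth-to-potential}), whereas you spell it out explicitly, and both arguments otherwise reduce to the same bookkeeping. Your added remark that the averaged condition \cref{eq:pot_def2} is redundant in the backward direction (being implied by noncontractivity) is accurate and slightly more explicit than the paper's treatment.
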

\begin{proof}
The ``only if'' direction follows from our observations that smoothly $\alpha$-approximate distances are $\alpha$-approximate and \Cref{lemma:smooth-to-potential}. 
To prove the ``if'' direction, we need to prove that $\td$ is noncontractive (\Cref{eq:noncontractivity}) and $\alpha$-smooth (\Cref{def:smooth}). The first property follows from the fact that $\td$ is $\alpha$-approximte. The second property follows from \Cref{lemma:smooth-to-potential}. 
\end{proof}

To summarize, the  smoothly approximate distances generalize approximate distances since they additionally satisfy ``subtractive triangle inequality''. But, rescaled by $\alpha$-factor, they also generalize approximate potentials, since instead of satisfying the aggregate condition $\sum_{u \in V(G)} \td(u) \ge \sum_{u \in V(G)} d_G(s,u) / \alpha$, they in fact satisfy a stronger, individual condition: for each $u$ they have $\td(u) / \alpha \ge d_G(s, u) /\alpha$. 

\subsection{Tree-Like Distances}
\label{subsec:def_treelike}

In this subsection we will discuss the $\alpha$-approximate shortest path trees --- the object that many approximate shortest path algorithms recover together with  $\alpha$-approximate distance functions. 
Formally, $\alpha$-approximate shortest path trees are defined as follows. 
\begin{definition}[$\alpha$-approximate shortest path tree]
  \label{def:tree}
  A spanning tree $T$ of a graph $G$ (where $T \subseteq G$) and source node $s$ is an $\alpha$-approximate shortest path tree for some $\alpha \geq 1$ if 
$$\forall u: \ \ d_T(s,u) \leq \alpha \cdot d_G(s,u).$$
\end{definition}
We remark that any subtree-induced distances can only overapproximate the true distances, hence $T$ being an ($\alpha$-approximate) shortest path tree automatically implies that the function $d_T(s, \cdot)$ is noncontractive. 

We find it useful to define \emph{tree-like} distance functions --- roughly speaking these are the distance functions such that we can simply recover their shortest path tree. The usefulness of this concept stems from the fact that it is sometimes easier to work with a simple algebraic condition than to work with a special tree-structure on top of the distance function. Also, we will see in \Cref{sec:strong-distances} that being tree-like strengthens being noncontractive from \Cref{eq:noncontractivity} in \Cref{def:approx-dist} similarly to how being $\alpha$-smooth strengthens the property \Cref{eq:approximation}. 
For convenience, we now repeat the definition of the tree-like property from \Cref{def:smooth+treelike}. 

\begin{definition}[Tree-Like Distances] \label{def:treelike}
  Given a graph $G = (V, E, \ell)$ and a source $s \in V$, we say a distance estimate $\td : V \to \R_{\ge 0}$ is: \emph{tree-like} (w.r.t. $s$) if $\td(s) = 0$ and $$\forall v \neq s: \ \exists e=(u,v) \in E: \ \ \td(u) \leq \td(v) - \ell(e).$$
\end{definition}

Observe that tree-like condition strengthens noncontractivity of \Cref{eq:noncontractivity} in \Cref{def:approx-dist} since from any node $v$ we can use the tree-like condition repeatedly to construct a sequence of nodes starting from $v = v_1, v_2, \dots, v_k = s$ with $\td(v_{i+1}) \le \td(v_i) - \ell(v_i, v_{i+1})$. Together with the requirement that $\td(s) = 0$, this implies $\td(v) \ge d_G(s, v)$. 

Note that above argument requires that all edge lengths are strictly positive. 
However, one can often simply transform a graph with nonnegative edge-lengths so that its edges then have only positive lengths: we can add a very small length to every edge or, in case of undirected graphs, we first find connected components of $0$-length edges and contract them to get exclusively positive lengths.

Let us now show in which sense tree-like distance functions correspond to approximate shortest path trees. 

\begin{lemma}
\label{lem:tree-like_vs_tree}
Let $G$ be a graph with edges of positive weights. 
\begin{enumerate}
    \item Let $T$ be an $\alpha$-approximate spanning tree of $G$ rooted at $s$. Then $d_T(s,\cdot)$ is a tree-like, $\alpha$-approximate distance estimate. 
    \item Let $\td$ be a tree-like distance estimate and define a tree $T$ as follows. 
    For each $v \in V(G), v \not = s$, consider an arbitrary $u \in \Nin_G(v)$ satisfying $\td(u) \le \td(v) - \ell_G(u,v)$; add $(u,v)$ to $E(T)$. 
    
    We have $d_T(s,v) \le \td(v)$ and, in particular, if $\td$ is $\alpha$-approximate then $T$ is an $\alpha$-approximate shortest path tree. 

\end{enumerate}

\end{lemma}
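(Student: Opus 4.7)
My plan is to handle the two parts of \Cref{lem:tree-like_vs_tree} separately; both are fairly short but part (2) requires a small argument that the constructed edge set is actually a spanning arborescence.

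For part (1), given an $\alpha$-approximate shortest path tree $T$, I would verify the tree-like property by pointing out that for each $v \neq s$ the parent $u$ of $v$ in $T$ (rooted at $s$) provides an incoming edge $(u,v) \in E(T) \subseteq E(G)$ satisfying $d_T(s,u) = d_T(s,v) - \ell(u,v)$, which is even tighter than the required inequality. For the $\alpha$-approximate property, noncontractivity $d_G(s,v) \le d_T(s,v)$ is immediate since $T \subseteq G$ (any $s$-$v$ walk in $T$ is a walk in $G$), and the upper bound $d_T(s,v) \le \alpha \cdot d_G(s,v)$ is exactly the hypothesis of \Cref{def:tree}.

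For part (2), I first want to check that the constructed edge set $E(T)$ is indeed (the edge set of) a spanning tree rooted at $s$. Each non-root vertex receives exactly one in-edge, giving $n-1$ edges total, so it suffices to rule out cycles. Here I would use positivity of edge weights together with tree-likeness: if $(u,v) \in E(T)$ then $\td(u) \le \td(v) - \ell_G(u,v) < \td(v)$, so following parent pointers strictly decreases $\td$. As there are finitely many vertices and $\td(s) = 0$ is minimal on the traced sequence, every backward traversal terminates at $s$ without revisiting a vertex, showing $T$ is a spanning in-arborescence rooted at $s$.

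Given this, the inequality $d_T(s,v) \le \td(v)$ follows by telescoping along the unique $s$-to-$v$ path $s = v_0, v_1, \dots, v_k = v$ in $T$: each edge satisfies $\ell_G(v_{i},v_{i+1}) \le \td(v_{i+1}) - \td(v_i)$, and summing yields $d_T(s,v) \le \td(v_k) - \td(v_0) = \td(v)$. Finally, if $\td$ is additionally $\alpha$-approximate, then combining with $\td(v) \le \alpha \cdot d_G(s,v)$ gives $d_T(s,v) \le \alpha \cdot d_G(s,v)$, so $T$ is an $\alpha$-approximate shortest path tree. The main thing to be careful about is the acyclicity step, which is where the positivity-of-weights assumption (stated in the lemma) is actually used; everything else is bookkeeping.
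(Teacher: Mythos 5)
Your proof is correct and follows essentially the same route as the paper: part (1) via the parent edge in $T$ giving $d_T(s,u) = d_T(s,v) - \ell(u,v)$, and part (2) by noting that positivity of weights makes the parent-pointer construction a well-defined tree (strict decrease of $\td$) and then telescoping along the unique tree path. You merely spell out the acyclicity and termination argument more explicitly than the paper does, which is fine.
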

Note that we are again requiring positive and not nonnegative edge weights here. 
\begin{proof}
We start with the first claim. By definition, for every $v \not= s$ the predecessor $u$ of $v$ in $T$ satisfies $d_T(s, u) = d_T(s, v) - \ell_G(u, v)$ and hence $d_T(s, \cdot)$ is tree-like.
  
We continue with the second claim. First, note that $T$ is a well defined tree since we assume there are no $0$-length edges. The property $d_T(s, v) \le \td(v)$ then holds by induction on the unique path from $s$ to $v$ in $T$. Whence, whenever $\td(v) \le \alpha \cdot d_G(s, v)$ for all $v \in V(G)$, we conclude that $d_T(s, v) \le \alpha d_G(s, v)$ and $T$ is an $\alpha$-approximate shortest path tree. 
\end{proof}


Finally, let us observe that the tree-like property does not perfectly match the intuition that the function corresponds to a shortest path tree. It may seem that a better definition of tree-likeness would strengthen $\le$ to $=$ in \Cref{def:treelike}, i.e., perhaps we should require that
\begin{align}
  \forall v : \exists u \in \Nin(v) : \td(u) = \td(v) - \ell_G(u, v). \label{eq:tree_option2}
\end{align}
Above property would allow us to write $d_T(s, v) = \td(v)$ in the second part of \Cref{lem:tree-like_vs_tree}, i.e., such a function would be directly induced by $T$. However, this alternative definition does not seem to behave as nicely as our definition. For example, the property defined by \Cref{eq:tree_option2} is not closed under element-wise min-operations, unlike tree-likeness (\Cref{lem:closure}). Also, our main technical contribution, the smoothing algorithm from \Cref{sec:smooth}, preserves the tree-like property but does not preserve \Cref{eq:tree_option2}.




\subsection{Strongly approximate distances}\label{sec:strong-distances}
When a single distance estimate $\td$ has both the  $\alpha$-smooth property and is tree-like, we give it a special label of being a \emph{strongly} $\alpha$-approximate distance. 
This is to emphasize the synergy between the two properties, which we will discuss in this section. 

\begin{definition}\label{def:strong}
We say that a distance estimate is strongly $\alpha$-approximate if it has both the $\alpha$-smooth property and is tree-like. 
\end{definition}
Strong distances are especially suited for applications --- a case study is given in \Cref{sec:application}.

We argue that strong distances are ``as close as it gets'' to exact distances. In fact, we can make this claim precise: a strong distance estimate in $G$ is, in fact, the exact distance in $G$ with edge lengths perturbed by at most $\alpha$.
\begin{lemma}
\label{lem:smooth+treelike=cool}
Let $G = (V, E, \ell)$ be a graph with positive lengths, $s \in V(G)$, and $\td : V(G) \to \R_{\ge 0}$ be a distance estimate. The following two claims are equivalent.
\begin{itemize}
\item The distance estimate $\td$ is strongly $\alpha$-approximate.
\item There exist edge multipliers $1 \le \lambda(e) \le \alpha$ for every $e \in E(G)$ such that the following holds. Let $G' = (V, E, \ell')$ with $\ell'(e) := \lambda(e) \ell(e)$ for every $e \in E(G)$ be the graph $G$ with edges multiplicatively perturbed by $( \lambda(e) )_{e \in E(G)}$. Then, $\td$ is an \emph{exact} distance function on $G'$ in the sense that $\td(u) = d_{G'}(s, u)$ for all $u \in V(G)$.
\end{itemize}
\end{lemma}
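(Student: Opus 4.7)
The plan is to prove the two implications separately, with the $(\Leftarrow)$ direction being a straightforward verification and the $(\Rightarrow)$ direction requiring the construction of the multipliers $\lambda(e)$ using both the $\alpha$-smooth and tree-like properties in complementary ways.

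For the $(\Leftarrow)$ direction, I would assume $\td(u) = d_{G'}(s,u)$ for some $\lambda(e) \in [1,\alpha]$ and verify both conditions of strong $\alpha$-approximation. For $\alpha$-smoothness, for any $u,v$ we have $\td(v) - \td(u) = d_{G'}(s,v) - d_{G'}(s,u) \le d_{G'}(u,v) \le \alpha \cdot d_G(u,v)$ using the triangle inequality in $G'$ and $\lambda \le \alpha$. For tree-likeness, for any $v \neq s$ we take the first edge $(u,v)$ on a shortest $s$-$v$ path in $G'$, which satisfies $\td(u) = \td(v) - \ell'(u,v) = \td(v) - \lambda(u,v)\ell(u,v) \le \td(v) - \ell(u,v)$ using $\lambda \geq 1$.

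For the $(\Rightarrow)$ direction, the natural candidate for the multiplier on edge $e = (u,v)$ is
\[
\lambda(u,v) := \max\!\left(1,\ \frac{\td(v) - \td(u)}{\ell(u,v)}\right).
\]
The first step is to check $\lambda(e) \in [1,\alpha]$: the lower bound is by construction, and the upper bound follows from $\alpha$-smoothness applied to the endpoints of $e$, which gives $\td(v) - \td(u) \le \alpha \cdot d_G(u,v) \le \alpha \cdot \ell(u,v)$. The second step is to observe that $\td$ is a valid potential in $G'$: by construction $\ell'(u,v) = \lambda(u,v)\ell(u,v) \ge \td(v) - \td(u)$ for every edge, hence relaxation along any $s$-to-$v$ path in $G'$ proves $\td(v) \le d_{G'}(s,v)$.

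The matching lower bound $\td(v) \ge d_{G'}(s,v)$ is where the tree-like property enters. Starting from any $v$, iteratively apply tree-likeness to obtain a sequence $v = v_0, v_1, v_2, \dots$ with $\td(v_{i+1}) \le \td(v_i) - \ell(v_i, v_{i+1})$; since all lengths are positive and $\td$ is nonnegative with $\td(s) = 0$, this sequence strictly decreases in $\td$-value and must terminate at $s$. Along each such edge $(v_{i+1}, v_i)$, the inequality $\td(v_i) - \td(v_{i+1}) \ge \ell(v_{i+1}, v_i)$ forces $\lambda(v_{i+1}, v_i) = (\td(v_i) - \td(v_{i+1}))/\ell(v_{i+1}, v_i)$, so $\ell'(v_{i+1}, v_i) = \td(v_i) - \td(v_{i+1})$. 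Telescoping along the reversed path yields an $s$-to-$v$ path in $G'$ of length exactly $\td(v) - \td(s) = \td(v)$, which gives $d_{G'}(s,v) \le \td(v)$. Combining both inequalities completes the proof.

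The main conceptual step is recognizing the dual roles of the two properties: $\alpha$-smoothness upper bounds $\lambda(e)$ by $\alpha$ and certifies that $\td$ is a feasible potential in $G'$, while tree-likeness exhibits an explicit path in $G'$ achieving length $\td(v)$. The only subtle point is that the construction of the path witnessing the upper bound relies on strict positivity of the $\ell$-values to guarantee termination at $s$; this is already part of the hypotheses of the lemma.
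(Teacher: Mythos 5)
Your proof is correct and follows essentially the same route as the paper: in the forward direction the paper picks, for each $v \neq s$, the tree-like parent edge and sets its multiplier so that $\ell'(e) = \td(v) - \td(u)$ while assigning $\lambda(e) = \alpha$ to all remaining edges, whereas your uniform choice $\lambda(e) = \max\bigl(1, (\td(v)-\td(u))/\ell(e)\bigr)$ yields the same two facts (that $\td$ is a potential on $G'$ and that the tree-like path has $G'$-length exactly $\td(v)$) without the case split --- a purely cosmetic difference. One small wording slip: in the backward direction the edge certifying tree-likeness at $v$ is the \emph{last} edge (the parent edge of $v$) of a shortest $s$--$v$ path in $G'$, not the first, though the identity you write for it is the correct one.
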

\begin{proof}
  We prove the result for directed graphs, it is simple to adapt it to undirected graphs. We first prove that the first claim implies the second. We are going to define the multipliers in such a way that the tree given by the tree-like property of $\td$ becomes the exact shortest path tree in $G'$. 
  
  More concretely, we define the multipliers $\lambda$ as follows. First, we use that $\td$ is tree-like and let every node $v$ choose its neighbor $u$ such that $\td(u) \le \td(v) - \ell_G(e)$ where $e = (u,v)$.  
  The smoothness of $\td$ implies that $\td(u) \ge \td(v) - \alpha \ell_G(e)$. 
  Putting the two properties together, we infer that 
  \[
  \ell_G(e) \le \td(v) - \td(u) \le \alpha\ell_G(e). 
  \]
  Thus we can choose $\lambda(e)$ such that $\ell_{G'}(e) = \lambda(e) \ell_G(e) = \td(v) - \td(u)$. 

    For all other edges $e = (u,v)$ we choose $\lambda(e) = \alpha$; note that this implies $$\td(v) - \td(u) \le \alpha \ell_G(e) = \ell_{G'}(e).$$ 

  We claim $\td$ are exact distances on $G'$ which we show by proving that $\d_{G'}(s, v) \le \td(v) \le \d_{G'}(s, v) $. To observe that $\td(v) \le d_{G'}(s, v)$ for all $v \in V(G)$, note that by construction we have that $\td(v) - \td(u) \le \ell_{G'}(e)$. That is, $\td$ is a potential function on $G'$ and in particular $\td(v) \le d_{G'}(s,v)$ for all $v\in V(G)$. 
  
    On the other hand, for every $v$ there is $u$ with $\td(u) < \td(v)$ (as all edges have nonzero lengths) where by construction we have $\td(v) = \td(u) + \ell_{G'}(u,v)$. 
  Hence, we can inductively find a path from $s$ to $v$ such that $\td(v) = d_{P \cap G'}(s, v)$. Consequently, we infer that $\td(v) \ge d_{G'}(s, v)$ (i.e., is noncontractive on $G'$) and this completes the proof of the first part of the statement. 

 It remains to prove that the second claim implies the first one in the lemma statement. Since $\td$ is exact on $G'$, for an edge $(u, v) \in E(G)$ we have $\td(v) - \td(u) \le \ell'(u, v) \le \alpha \cdot \ell(u, v)$. Hence, $\td$ has the $\alpha$-smooth property. 
 Furthermore, let $T'$ be the (exact) shortest path tree from the source $s$ in $G'$. 
 For each node $v$ let $(u, v) \in E(G)$ be the parent edge of $v$ in $T'$. By definition, we have $\td(u) = \td(v) - \ell'(u, v) \ge \td(v) - \ell(u, v)$ since $\ell(u,v) \le \ell'(u, v)$, thus $\td$ is tree-like w.r.t. $s$. Therefore, $\td$ is strongly $\alpha$-approximate, as needed.
\end{proof}


\textbf{Closure properties.} We observe that the discussed distance notions have useful closure properties, which will be useful later.

\begin{lemma}
\label{lem:closure}
Let $\td_1, \td_2$ be two distance estimates. 
Let us define $\td_{\min} = \min(\td_1, \td_2)$, $\td_{\max} = \max(\td_1, \td_2)$, $\td_{\avg} = \lambda \td_1 + (1-\lambda) \td_2$ for some $0 \le \lambda \le 1$ (element-wise). We have:
\begin{enumerate}
    \item If both $\td_1, \td_2$ are $\alpha$-approximate, so are $\td_{\min}, \td_{\max}, \td_{\avg}$. 
    \item If both $\td_1, \td_2$ have the $\alpha$-smooth property so does $\td_{\min}, \td_{\max}, \td_{\avg}$.
    \item If both $\td_1, \td_2$ are tree-like, so is $\td_{\min}$.
\end{enumerate}
\end{lemma}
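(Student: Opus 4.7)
The plan is to treat the three claims in order of increasing subtlety, noting that all three boil down to pointwise manipulations of the defining inequalities.

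For claim (1), the approach is entirely pointwise. The $\alpha$-approximate condition at a vertex $u$ is a pair of inequalities $d_G(s,u) \le \td_i(u) \le \alpha \cdot d_G(s,u)$ that sandwich $\td_i(u)$ between two fixed numbers. Both bounds are preserved by taking minima, maxima, and convex combinations of $\td_1(u)$ and $\td_2(u)$, which immediately gives all three cases. I expect this to be routine and I would write it in a single line.

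For claim (2), the $\alpha$-smooth condition involves two vertices, so the argument is slightly more delicate but still standard. For $\td_{\avg}$, linearity of the inequality $\td(v)-\td(u) \le \alpha \cdot d_G(u,v)$ in $\td$ does the job directly. For $\td_{\min}$, I would fix $u,v$, let $i \in \{1,2\}$ be the index achieving the minimum at $v$, and bound $\td_{\min}(v) - \td_{\min}(u) \le \td_i(v) - \td_i(u) \le \alpha \cdot d_G(u,v)$, using that $\td_{\min}(u) \le \td_i(u)$. The case $\td_{\max}$ is symmetric, choosing the index $i$ that maximizes at $u$. Verifying $\td(s)=0$ is trivial in all three cases since $\td_1(s)=\td_2(s)=0$.

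For claim (3), the tree-like condition is an existential statement, so the min case works while max and average generally do not (which is why the lemma only states closure of $\td_{\min}$). The plan: fix $v \neq s$ and pick $i \in \{1,2\}$ with $\td_{\min}(v) = \td_i(v)$. Tree-likeness of $\td_i$ yields an edge $e = (u,v)$ with $\td_i(u) \le \td_i(v) - \ell(e)$, and then
\[
\td_{\min}(u) \;\le\; \td_i(u) \;\le\; \td_i(v) - \ell(e) \;=\; \td_{\min}(v) - \ell(e),
\]
so the same edge witnesses tree-likeness of $\td_{\min}$ at $v$.

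The main conceptual point --- and the only place where one has to be careful --- is claim (3): it is tempting to hope the property also propagates through $\max$ or $\avg$, but the witness edge chosen for $\td_1$ may differ from the witness for $\td_2$, and neither need work for the combined estimate. The lemma correctly states closure only for $\td_{\min}$, and the proof above exploits the fact that for the minimum we can pick the coordinate achieving the minimum at $v$ and transfer its witness upward to the combined function.
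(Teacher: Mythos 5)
Your treatment of claims (1) and (3) is correct and matches the paper's proof: (1) is the pointwise sandwich $d_G(s,v)\le \td_{\min}(v)\le\td_{\avg}(v)\le\td_{\max}(v)\le\alpha\, d_G(s,v)$, and for (3) you pick the coordinate achieving the minimum at $v$ and transfer its witness edge, exactly as the paper does. The $\td_{\avg}$ case of (2) by linearity is also fine.

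However, in claim (2) your choice of index is reversed for both $\td_{\min}$ and $\td_{\max}$, and with that choice the intermediate inequality is false in general. For $\td_{\min}$ you fix $i$ with $\td_{\min}(v)=\td_i(v)$ and claim $\td_{\min}(v)-\td_{\min}(u)\le\td_i(v)-\td_i(u)$ ``using that $\td_{\min}(u)\le\td_i(u)$''; but since $\td_{\min}(u)$ appears with a minus sign, $\td_{\min}(u)\le\td_i(u)$ gives $\td_{\min}(v)-\td_{\min}(u)=\td_i(v)-\td_{\min}(u)\ge\td_i(v)-\td_i(u)$, i.e.\ the opposite direction. Concretely, with $\alpha=2$, $d_G(u,v)=1$, $\td_1(u)=5,\td_1(v)=2,\td_2(u)=1,\td_2(v)=3$, your chain would assert $1\le -3$. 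The correct choice (and the one in the paper) is the other endpoint: for $\td_{\min}$ pick $i$ with $\td_{\min}(u)=\td_i(u)$, so that $\td_{\min}(v)-\td_{\min}(u)\le\td_i(v)-\td_i(u)\le\alpha\, d_G(u,v)$ using $\td_{\min}(v)\le\td_i(v)$; symmetrically, for $\td_{\max}$ pick $i$ with $\td_{\max}(v)=\td_i(v)$ and use $\td_{\max}(u)\ge\td_i(u)$. (Your ``symmetric'' max case, choosing the maximizer at $u$, fails for the same sign reason.) The mnemonic is: for a difference $\td(v)-\td(u)$ you must pin down the term you cannot afford to enlarge --- the subtracted term for $\min$, the leading term for $\max$ --- so that the remaining comparison goes in the favorable direction. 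With this swap the argument is exactly the paper's proof; the rest of your write-up, including the observation that $\td_{\max}$ and $\td_{\avg}$ need not be tree-like, is correct.
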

\begin{proof}
The item (1) follows from the chain of inequalities $d_G(s, v) \le \td_{\min}(v) \le \td_{\avg}(v) \le \td_{\max}(v) \le \alpha d_G(s, v)$. 

We continue with item (2). We start with $\td_{\min}$, let $u,v$ be any two nodes of $G$ and without loss of generality assume that $\td_{\min}(u) = \td_1(u)$. We then have $\td_{\min}(v) - \td_{\min}(u) \le \td_1(v) - \td_1(u) \le \alpha d_G(u, v)$ as needed. 
For $\td_{\max}$, let us similarly without loss of generality assume $\td_{\max}(v) = \td_1(u)$. 
We then have $\td_{\max}(v) - \td_{\max}(u) \le \td_1(v) - \td_1(u) \le \alpha d_G(u,v)$ as needed. Finally, for $\td_{\avg}$ we can write $$\td_{\avg}(v) - \td_{\avg}(u) = \lambda\left( \td_1(v) - \td_1(u) \right) + (1 - \lambda)\left( \td_2(v) - \td_2(u) \right) \le (\lambda + (1-\lambda)) \cdot \alpha d_G(u,v)$$ and we are done. 

We finish with item (3). Fix a node $v$ and without loss of generality assume $\td_{\min}(v) = \td_1(v)$. Consider the node $u$ such that $\td_1(u) \le \td_1(v) - d_G(u, v)$. We have $\td_{\min}(u) \le \td_1(u) \le \td_1(v) - d_G(u,v) = \td_{\min}(v) - d_G(u,v)$ as needed. 
\end{proof}
We note that $\td_{\avg}, \td_{\max}$ are not necessarily tree-like even if $\td_1, \td_2$ are. This can happen e.g. whenever $v$ has two predecessors $u_1, u_2$ such that $\td_1(u_1) \le \td_1(v) - d_G(u,v)$ but $\td_1(u_1) > \td_1(v) - d_G(u,v)$ and analogously for $\td_2$.

\textbf{Local checkability.} 
What do $\alpha$-smoothness and tree-likeness have in common? We observe that $\alpha$-smoothness is a locally checkable proof that a distance function satisfies the property \Cref{eq:approximation}, whereas tree-likeness is a locally checkable proof that a distance function satisfies the noncontractiveness property \Cref{eq:noncontractivity}. Therefore, strongly $\alpha$-approximate distances are $\alpha$-approximate distances where both required properties can be checked locally. 

We briefly discuss the topic of local checkability: Given a distance estimate $\td$ that is presumably $\alpha$-approximate, it is often convenient to be able to prove its approximation guarantee without trusting the algorithm; or, e.g., see whether it still holds after a graph changes to avoid re-running the computation. One way to make this certification precise is to look for local algorithms that verify the property (see \cite{Feuilloley2019local_certification_survey} for an introduction to the topic of local certification). More precisely, each node is given the values in its (small-hop) neighborhood, and outputs YES/NO; the property holds if and only if all nodes output YES.

It is not possible to directly check whether $\td$ is $\alpha$-approximate without computing the exact distances first. 
On the other hand, one can easily check whether a distance estimate $\td$ is strongly $\alpha$-approximate via a local algorithm. 
Concretely, for the $\alpha$-smoothness property, each node $u$ checks for all outgoing edges $(u,v)$ whether $\td(v) \le \td(u) + \alpha d_G(u,v)$ (outputs NO if any fail). 
This works since requiring that the $\alpha$-smooth condition holds for all edges $(u,v)$ already implies it holds for any pair $u,v$. 
Tree-likeness is similar; together, we conclude that one can locally verify whether $\td$ is strongly $\alpha$-approximate. 

The fact that a property can be easily checked is often helpful. For example, it allows us to turn any randomized Monte Carlo algorithm for strongly approximate distance computation into a Las Vegas one. 

\subsection{Do Known Algorithms give Strong Distance Guarantees?}\label{sec:nobody-gets-strong}

Many important algorithms use exact shortest path as an algorithmic building block (e.g., low-diameter decompositions, $\ell_1$-embeddings, hopsets, emulators, various versions of min-cost maximum flows, etc.). However, obtaining exact distances is hard in many settings (e.g., parallel or distributed), and replacing exact distances with an arbitrary $(1+\eps)$-approximation is insufficient. For this reason, there is a long list of papers which obtain approximate distances with additional guarantees. However, as far as the authors as aware, only a few approaches obtain strong distances, and they only do it in very specific settings that cannot be readily generalized. In this section, we review several algorithms from the parallel and distributed literature and clarify their properties.


\textbf{Hopsets.} A hopset with hopbound $\beta \ge 1$ of a graph $G = (V, E, \ell)$ is a (sparse) graph $H = (V, E', \ell')$ such that $\beta$-hop paths in $G \cup H$ $(1+\eps)$-approximate shortest path of $G$. More precisely, $d_G(u, v) \le d_{G \cup H}^{(\beta)}(u, v) \le (1 + \eps) d_G(u, v)$ for all $u, v \in V$, where $d^{(\beta)}(u, v)$ is the shortest path using paths of at most $\beta$ hops~\cite{cohen2000polylog,elkin2019hopsets,huang2019thorup,cao2020paralleldirected}. Simple $O(\beta)$-time algorithms like Bellman-Ford can compute $\td(v) := d_{G \cup H}^{(\beta)}(s, v)$, which $(1+\eps)$-approximate SSSP distances from $s$. Moreover, every edge in $H$ corresponds to a path in $G$, hence it is not hard to prove that such $\td$ is tree-like. On the other hand, $\td$ is generally not smooth. Hence, hopsets do not directly give strong distances.

\textbf{Algorithms computing approximate potentials.} Computing approximate potentials (i.e., smooth and approximate distance) can be used to boost the approximation guarantees for distances in both directed~\cite{klein1997randomized} and undirected settings~\cite{sherman2017generalized,zuzic2021simple}. Therefore, many papers develop tools to compute approximate potentials. For example, the state-of-the-art $(1+\eps)$-approximate undirected shortest path papers in parallel and distributed setting~\cite{Li20shortest_paths,becker2021near,rozhon_grunau_haeupler_zuzic_li2022deterministic_sssp,goranci2022universally} all can compute a primal-dual pair $(f, \phi)$ where $f$ is a $(1+\eps)$-approximate shortest path tree, and $\phi$ is a $(1+\eps)$-approximate potential. However, these approaches do not yield a single distance estimate $\td$ that is strong (i.e., has primal-dual guarantees)---the primal tree is not smooth, and the potentials are not tree-like.

\textbf{Emulator approach.} Several papers do obtain strong approximate distances using emulators, which are graph $H$ such that $d_H(u, v) \le d_G(u, v) \le \alpha \cdot d_H(u, v)$ for all $u, v$ (i.e., an emulator is an $\alpha$-spanner without the subgraph requirement). These papers construct an emulator with additional structural properties in a way that it is possible to compute \emph{exact} distances on them. These are strong distances per \Cref{lem:smooth+treelike=cool}. For example, Andoni, Stein, and Zhong~\cite{andoni_stein_zhong2020shortest_paths} compute a $\poly(\log n)$-emulator where $O(\log \log n)$-hop paths exactly match the shortest paths, hence can be computed in their parallel setting. In another approach, Forster and Nanongkai~\cite{forster2018faster} (and several other papers that follow this line of work) randomly sample about $\sqrt{n}$ nodes, compute a $2$-emulator on it, and then compute exact distances on this emulator (which can be extended to all nodes using small number of hops). Unfortunately, it is unclear how to make these approaches work outside of their respective settings.

\section{The Smoothing Algorithm}
\label{sec:smooth}
This section is devoted to the proof of our main techincal result, \Cref{thm:smooth} that we restate here for convenience. 

\begin{restatable}{theorem}{smooth}{Obtaining smoothness}
\label{thm:smooth}
Given a graph $G$ with (real) weights in $[1, \poly(n)]$, a source $s \in V(G)$, accuracy $\eps \in (0,1]$, and an approximate oracle $\fO$, \Cref{alg:smoothing} computes $(1+\eps)$-approximate distances that are $(1+\eps)$-smooth $\td$ in a directed graph $G$ using $O(\log n)$ calls to a $(1+O(\eps/\log n))$-approximate distance oracle $\fO$ on directed graphs. Moreover, if $\fO$ returns tree-like distances, the algorithm returns strongly $(1+\eps)$-approximate distances. Moreover, if $G$ is undirected, it suffices if $\fO$ works in undirected graphs.  
\end{restatable}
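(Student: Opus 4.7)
The plan is to run $T = O(\log n)$ phases of a ``virtual super-source'' oracle call, combining them by pointwise minimum. Set $\delta := c\eps/\log n$ for a small absolute constant $c$, and initialize $\td^{(0)} := \fO(G, s)$, which is a noncontractive $(1+\delta)$-approximation. In phase $i \in \{1, \dots, T\}$, form an auxiliary graph $G^{(i)}$ by adjoining a fresh virtual source $s^\ast$ with an edge $(s^\ast, v)$ of length $\td^{(i-1)}(v)$ for every $v \in V(G)$; let $\hat\td^{(i)} := \fO(G^{(i)}, s^\ast)$; and set $\td^{(i)} := \min(\td^{(i-1)}, \hat\td^{(i)})$ pointwise. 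Return $\td^{(T)}$.

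The starting observation is that the exact distance in $G^{(i)}$ is
\[
d_{G^{(i)}}(s^\ast, v) \;=\; \min_{u}\bigl(\td^{(i-1)}(u) + d_G(u,v)\bigr),
\]
which is by construction exactly $1$-smooth on $G$ and, using noncontractivity of $\td^{(i-1)}$, is sandwiched between $d_G(s, v)$ and $\td^{(i-1)}(v)$. Hence each $\hat\td^{(i)}$ is a $(1+\delta)$-multiplicative approximation of an already exactly smooth function. By \Cref{lem:closure}, pointwise minima preserve noncontractivity, $\alpha$-smoothness, and tree-likeness; so $\td^{(i)}$ is monotone nonincreasing in $i$, remains noncontractive, and remains $(1+\delta)$-approximate throughout.

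The crux, and the main obstacle, is the smoothness analysis. The clean per-phase bound I would establish is
\[
\hat\td^{(i)}(v) - \hat\td^{(i)}(u) \;\le\; d_G(u,v) + \delta\,\td^{(i-1)}(v),
\]
which follows from $\hat\td^{(i)}(v) \le (1+\delta)\, d_{G^{(i)}}(s^\ast, v)$, $\hat\td^{(i)}(u) \ge d_{G^{(i)}}(s^\ast, u)$, and the exact $1$-smoothness of $d_{G^{(i)}}(s^\ast, \cdot)$. The plan is then a potential/scale argument: fix any pair $(u,v)$, and charge the remaining ``non-smoothness'' to the largest dyadic scale of $\td^{(i-1)}(v)$ that still exceeds $d_G(s,v)$ by more than a $(1+\eps)$ factor. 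A single phase either brings $\td^{(i)}(v)$ within $(1+\eps)$ of $d_G(s, v)$ (after which the additive term $\delta\,\td^{(i-1)}(v)$ is absorbed into $\eps\, d_G(u,v)$) or halves the ``slack'' $\td^{(i-1)}(v) - d_G(s,v)$ via an intermediate vertex on the shortest path from $s$ to $v$. Since both the slack and the range of $d_G(s,\cdot)$ lie in $[1, \poly(n)]$, $T = O(\log n)$ phases suffice, and the accumulated error is $O(\delta T)\cdot d_G(u,v) = O(\eps)\cdot d_G(u,v)$. I expect this geometric decay, and especially the need to handle non-smoothness at every scale simultaneously via a single sequence of calls, to be the delicate part.

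For the moreover clauses the arguments are easier. If $\fO$ is tree-like, then $\hat\td^{(i)}$ has a tree-like predecessor for every $v$ in $G^{(i)}$; if that predecessor edge lies in $E(G)$ it directly certifies tree-likeness of $\hat\td^{(i)}$ on $G$, while if it is the virtual edge $(s^\ast, v)$ we get $\hat\td^{(i)}(v) = \td^{(i-1)}(v)$, so combined with $\hat\td^{(i)} \le \td^{(i-1)}$ pointwise we can lift the tree-like predecessor of $\td^{(i-1)}$ at $v$ to $\hat\td^{(i)}$. Item~(3) of \Cref{lem:closure} then propagates tree-likeness through the final $\min$, yielding strongly $(1+\eps)$-approximate distances. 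Finally, for undirected $G$ the virtual edges $(s^\ast, v)$ can be oriented undirected without affecting distances from $s^\ast$ in $G^{(i)}$ (since $s^\ast$ is the source and has the unique smallest distance to itself), so an undirected oracle $\fO$ suffices for the whole construction.
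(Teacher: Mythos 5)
There is a genuine gap, and it sits exactly at the point you flag as the crux. In your auxiliary graph $G^{(i)}$ the virtual edge to each $v$ has length $\td^{(i-1)}(v)$, but since $\td^{(i-1)}$ is a $(1+\delta)$-approximate estimate from $s$ we have $\td^{(i-1)}(s)=0$ and $\td^{(i-1)}(u)\ge d_G(s,u)$ for all $u$, so $d_{G^{(i)}}(s^\ast,v)=\min_u\bigl(\td^{(i-1)}(u)+d_G(u,v)\bigr)=d_G(s,v)$ \emph{exactly} (the sandwich you mention collapses). Consequently every phase is informationally equivalent to a fresh $(1+\delta)$-approximate oracle call on $(G,s)$: the adversarial oracle may return the very same non-smooth answer in every phase, and the pointwise minimum then never changes. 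Concretely, on a unit-length path $s=v_0,v_1,\dots,v_n$ the oracle can always answer $\hat\td(v_j)=(1+\delta)j$ for odd $j$ and $\hat\td(v_j)=j$ for even $j$; this is a valid $(1+\delta)$-approximation of the true distances in every $G^{(i)}$, is stable under your update, and violates $(1+\eps)$-smoothness on the edge $\{v_j,v_{j+1}\}$ by an additive $\delta j\approx \eps\, n/\log n \gg \eps$. Your per-phase bound $\hat\td^{(i)}(v)-\hat\td^{(i)}(u)\le d_G(u,v)+\delta\,\td^{(i-1)}(v)$ is correct, but the additive term is really $\delta\, d_G(s,v)$ and it does not decay with $i$; the proposed ``slack halving'' has no mechanism behind it, because after a single call the slack $\td(v)-d_G(s,v)$ is already at most $\delta\, d_G(s,v)$ — the obstruction was never approximation quality but the fact that a multiplicative error proportional to the \emph{full} distance scale is re-introduced at every call.

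This is precisely what the paper's construction is engineered to avoid, and the differences are not cosmetic. The paper (i) works with the relaxed $(\alpha,\delta)$-smoothness and halves the additive parameter $\delta$ in each of $O(\log n)$ rounds; (ii) in each round builds \emph{level graphs} of width $\omega=\Theta(\delta/\eps)$ in which the large offsets $\lfloor\td(\cdot)\rfloor_i$ bypass the oracle entirely (they are added back exactly in $\td_i(v)=\lfloor\td(v)\rfloor_i+\hd_i(v)$), so the oracle's multiplicative error only acts on residual distances of magnitude $O(\omega)$, giving additive error $O(\eps\omega)=O(\delta)$ that fits into the halved budget $\delta/2$; and (iii) slows down the original edges by a factor $(1+\eps)\alpha$ so that the min-update cannot ``unfix'' pairs far from the region being repaired (\Cref{lemma:leave-far}, \Cref{cl:short_smoothing_paths}). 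Your single super-source with full-length virtual edges $\td^{(i-1)}(v)$ destroys ingredient (ii) — the multiplicative error applies to the full distance — and you also have no analogue of (i) or (iii), so no progress measure survives. (Your treatment of the tree-likeness and undirected clauses is essentially the paper's argument and would be fine, but it cannot rescue the main claim.) To repair the proof you would need to re-introduce the scale separation: call the oracle only on subinstances whose diameter is comparable to the current additive slack, which is exactly the level-graph device of \Cref{alg:smoothing_lemma}.
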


First, in \Cref{sec:smooth_intuition} we build the intuition behind the algorithm. Next, in \Cref{sec:directed_algorithm} we prove \Cref{thm:smooth} formally. 

\subsection{Intuition Behind the Algorithm}
\label{sec:smooth_intuition}

Let us now explain the intuition behind our algorithm that uses calls to $(1+\eps)$-approximate distance oracle to compute smooth approximate distances in directed graphs. 
We will build the algorithm in three steps.

\begin{figure}
    \centering
    \includegraphics[width = \textwidth]{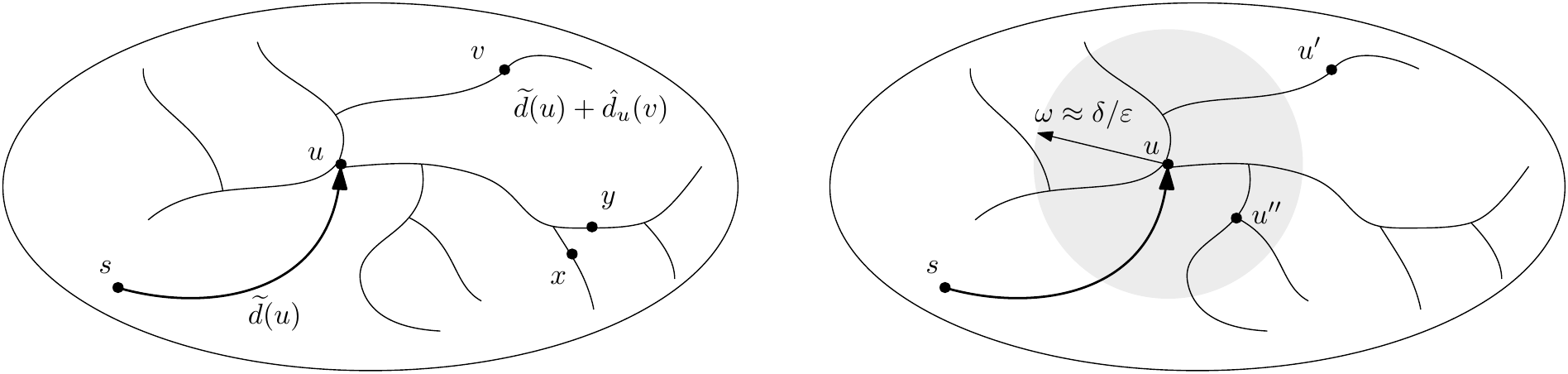}
    \caption{Left: In the first try, we simply try to ``fix'' a node $u$ (by that we mean making all pairs $u,v$ satisfy the smoothness condition) by computing approximate distances $\hd_u$ from $u$. Adding $\td(u)$ to all computed values then creates a distance estimate $\td_u$ measuring distances from $s$. This distance estimate fixes $u$; even if we set $\td' = \min(\td, \td_u)$, $u$ remains fixed in $\td'$. The problem is that even though a pair $x,y$ may have satisfied the smoothness condition in $\td$, we may unfix it in $\td'$. \\ 
    Right: To deal with the problem,  we use the same construction but slow down the edges of $G$ by a factor of $(1+\eps)\alpha$. We also do not aim to make $u$ satisfy the $\alpha$-smoothness condition right away but instead only go from $(\alpha, \delta)$-smoothness to $((1+\eps)\alpha, \delta/2$-smoothness guarantee. Because of the slowdown and assumption that $\td$ is $(\alpha, \delta)$-smooth, any node $u'$ that is further than $\omega = 2\delta/(\alpha\eps) = \Theta(\delta/\eps)$ from $u$ has $\td(u') \ge \td_u(u')$ which means that $\td'(u) = \td(u)$ and we do not unfix it. On the other hand, whenever a node $u''$ is closer than $\omega = \Theta(\delta/\eps)$ to $u$, we use the fact that the approximation oracle can make the value $\hd(u'')$ distorted from the truth only additively by $\frac{\eps}{100} \cdot \omega \ll \delta/2$. Hence, up to a small error dominated by $\delta/2$ we can use the same argument that shows that  $u$ gets fixed for $u''$.  }
    \label{fig:intuition}
\end{figure}

\paragraph{Step 1: The first try.}

Recall that we wish to compute the $(1+\eps)$-approximate smooth distances from some source node $s$. We can start by  running the approximate distance oracle from $s$ to get some approximate distance estimate $\td$. 
The problem of course is that there can be two nodes $u,v$ ($v$ further from $s$ than $u$) such that $\td(v) - \td(u)$ is much larger then $(1+\eps) d_G(u,v)$. 
This is because our oracle is allowed to make an additive error of $\eps \cdot d_G(s, u)$ for the distance labels $\td(u), \td(v)$; this can be much larger than the distance $d_G(u,v)$. 

Let us first discuss how we can use one more call to the approximate distance oracle to satisfy the smoothness condition for some fixed pair of nodes $u,v$:
We run the approximate distance oracle for the second time, but we start at $u$. We get an approximate distance estimate $\hd_u$ from $u$ that defines a new noncontractive distance estimate $\td_u$ from $s$ by defining $\td_u(v) = \td(u) + \hd_u(v)$ (see \Cref{fig:intuition}). 

This new distance estimate $\td_u$ is not necessarily $(1+\eps)$-approximate, as it considers only routes through $u$. 
However, the pair of nodes $u$ and $v$ (for any $v$) satisfies the smoothness condition: we have  $\td_u(v) - \td_u(u) = (\td(u) + \hd(v)) - \td(u)  = \hd(v) \le (1+\eps) d_G(u,v)$. 

Finally, if we define $\td'(\cdot) = \min(\td(\cdot), \td_u(\cdot))$, the new estimate takes the best of the both worlds: $\td'$ remains a noncontractive $(1+\eps)$-approximate distance estimate that also additionally satisfies the smoothness condition for $u$ and any other node $v$ (to verify this, compare with the proof of \Cref{lem:closure}).

So what is the bad news? It may seem that we can already construct a simple, albeit too slow, smoothing algorithm that simply repeats above construction for all nodes $u \in V(G)$, until all pairs of nodes satisfy the smoothness condition. However, such an algorithm does not work! 
While the above procedure for some $u$ ``fixes'' all the pairs $(u,v)$, it may also ``break'' some other pair $(x, y)$ elsewhere in the graph that satisfied the smooth condition in $\td$.


\paragraph{Step 2: A (formally) correct but slow algorithm.}
We will now show how we can fight the problem that fixing one pair $u,v$ by computing distances from $u$ can break some other pair of nodes in the graph. Our algorithm will still require $O(n \log n)$ calls to the distance oracle which we improve upon only in the final step. 

Our main idea is, roughly, to fix the pairs $u,v$ in $O(\log n)$ phases, starting from the pairs with the largest $d_G(u,v)$. 
To be more concrete, let us generalize the smoothness condition as follows:

\begin{definition}[$(\alpha, \delta)$-smoothness]
We say that a distance estimate $\td$ is $(\alpha, \delta)$-smooth if for every $u,v \in V(G)$ we have 
\[
\td(v) \le \td(u) + \alpha \cdot d_G(u,v) + \delta
\]
\end{definition}
We will show how to refine an $(\alpha, \delta)$-smooth distance estimate $\td$ to a $((1+\eps)\alpha, \delta/2)$-smooth estimate $\td_*$.
Then, to get $(1+\eps_0)$-smoothness, we simply choose $\eps = \Theta(\eps_0 / \log n)$, start with the trivial $(1, \poly(n))$-smooth estimate (the weights are polynomially-bounded integers) and use the above reduction $O(\log n)$ times until we end up with $(1+\frac{\eps_0}{2}, \frac{1}{\poly(n)})$-smooth distances that are also $(1+\eps_0)$-smooth. On a more intuitive level, we first fix pairs at larger distance scales before moving on to smaller distance scales. Indeed, $(\alpha, \delta)$-smooth distances $\td$ are $(\alpha + \eps)$-smooth for pairs $(u, v)$ with $d_G \ge \delta / \eps$:
\[
\td(v) \le \td(u) + \alpha \cdot d_G(u,v) + \delta 
\le \td(u) + (\alpha + \eps)d_G(u,v). 
\]

The rest of this explanation focuses on a single refinement step. Our algorithm is very similar to the one discussed in the previous subsection: We simply iterate over each node $u$ of $G$ and run our $(1+\eps/10)$-approximate oracle from $u$, compute the distance estimates $\td_u(\cdot)$ as before, and output the element-wise minimum. The important difference is that we run our oracle on the ``slowed down'' graph $G'$ where each edge length is multiplied by $(1+\eps/2)\alpha$. The method is formally written in \Cref{alg:smoothing_lemma_slow}.

Intuitively, the benefit of the slowdown is that node $u$ will not affect nodes far away from it, thereby not breaking them.
\begin{lemma}\label{lemma:leave-far}
  If $d_G(u, u') > 2\delta/(\alpha\eps)$, then $\td_u(u') > \td(u')$.
\end{lemma}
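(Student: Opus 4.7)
The plan is to chain three simple facts into a direct computation. First, unpack the definitions: since the oracle is run on the slowed-down graph $G'$, whose every edge length equals $(1+\eps/2)\alpha$ times the corresponding length in $G$, all distances in $G'$ equal $(1+\eps/2)\alpha$ times the corresponding distances in $G$. Noncontractivity of the approximate oracle (which is part of the definition of an approximate distance estimate) then gives $\hd_u(u') \ge d_{G'}(u,u') = (1+\eps/2)\alpha \cdot d_G(u,u')$. Combined with the definition $\td_u(u') = \td(u) + \hd_u(u')$, this yields $\td_u(u') \ge \td(u) + (1+\eps/2)\alpha \cdot d_G(u,u')$.

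Second, apply $(\alpha,\delta)$-smoothness of the current estimate $\td$ to the pair $(u,u')$ to get the upper bound $\td(u') \le \td(u) + \alpha \cdot d_G(u,u') + \delta$. Subtracting these two bounds makes the $\td(u)$ terms cancel and gives $\td_u(u') - \td(u') \ge (\eps/2)\alpha \cdot d_G(u,u') - \delta$. The hypothesis $d_G(u,u') > 2\delta/(\alpha\eps)$ is exactly what makes the right-hand side strictly positive, which is the claim.

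I do not expect a real obstacle here: the lemma is essentially the bookkeeping identity that motivated the particular slowdown factor $(1+\eps/2)\alpha$ in the first place. The only conceptual point worth being careful about is that the noncontractivity of the oracle must be read in the scaled metric of $G'$ rather than in $G$; the multiplicative $(1+\eps/10)$ upper bound on the oracle plays no role in this direction, since we are only lower-bounding $\hd_u(u')$. Intuitively, the slowdown buys us exactly $\eps/2$ of extra ``growth rate'' starting from $u$, and it takes a radius of $2\delta/(\alpha\eps)$ for this growth to absorb the existing additive slack $\delta$ allowed by $(\alpha,\delta)$-smoothness.
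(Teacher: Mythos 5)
Your proof is correct and follows essentially the same route as the paper: noncontractivity of the oracle on the slowed-down graph gives $\td_u(u') \ge \td(u) + (1+\eps/2)\alpha\, d_G(u,u')$, the $(\alpha,\delta)$-smoothness of $\td$ gives $\td(u') \le \td(u) + \alpha\, d_G(u,u') + \delta$, and the hypothesis $d_G(u,u') > 2\delta/(\alpha\eps)$ makes the difference strictly positive. No gaps.
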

\begin{proof}
  By assumption of $(\alpha, \delta)$-smoothness we have $\td(u') \le \td(u) + \alpha d_G(u,u') + \delta$. Noncontractivity of distance estimates on the slowed-down graph implies $\td_u(u') \ge \td(u) + (1+\eps/2)\alpha d_G(u,u')$. Comparing the two expressions and rewriting our assumption as $(\eps/2) \cdot \alpha d_G(u,u') > \delta$ gives $\td_u(u') > \td(u')$, as required.
\end{proof}

Furthermore, $\td_u(\cdot)$ ``fixes'' the smoothness for all nodes close to $u$ (a feature of our relaxed $(\alpha, \delta)$ definition of smoothness). 

\begin{lemma}\label{lemma:fix-close}
  If $d_G(u,u') \le 2\delta/(\alpha\eps)$, then all pairs $(u', v)$ are $(\alpha(1+\eps), \delta/2)$-smooth in $\td_u(\cdot)$.
\end{lemma}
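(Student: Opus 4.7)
The plan is to unfold the definition of $\td_u$, split the difference $\td_u(v) - \td_u(u')$ into the distance from $u$ to $v$ minus the distance from $u$ to $u'$, and then control each of these using the approximation guarantee of $\fO$ on the slowed graph $G'$ and the hypothesis that $u'$ is close to $u$.

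First I would write
\[
\td_u(v) - \td_u(u') = \hd_u(v) - \hd_u(u'),
\]
since the $\td(u)$ offsets cancel. Then, since $\fO$ is a $(1 + \eps/10)$-approximate oracle run on $G'$ (whose edge lengths are those of $G$ scaled by $(1+\eps/2)\alpha$), the upper bound reads
\[
\hd_u(v) \leq (1+\eps/10)\, d_{G'}(u,v) = (1+\eps/10)(1+\eps/2)\,\alpha\, d_G(u,v),
\]
and by noncontractivity of $\fO$ we have the matching lower bound
\[
\hd_u(u') \geq d_{G'}(u,u') = (1+\eps/2)\,\alpha\, d_G(u,u').
\]
Applying the triangle inequality $d_G(u,v) \le d_G(u,u') + d_G(u',v)$ in the upper bound and subtracting the lower bound would then leave me with
\[
\hd_u(v) - \hd_u(u') \leq (1+\eps/10)(1+\eps/2)\,\alpha\, d_G(u',v) + (\eps/10)(1+\eps/2)\,\alpha\, d_G(u,u').
\]

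The main obstacle is then the bookkeeping on the two error terms. For the multiplicative factor in front of $d_G(u',v)$, I would check that for $\eps \in (0,1]$ the product satisfies $(1+\eps/10)(1+\eps/2) \leq 1 + \eps$, which follows from $\eps/10 + \eps/2 + \eps^2/20 \leq \eps$. For the additive term, this is exactly where the closeness hypothesis $d_G(u,u') \leq 2\delta/(\alpha\eps)$ gets used: substituting this bound gives
\[
(\eps/10)(1+\eps/2)\,\alpha\, d_G(u,u') \leq (\eps/10)(1+\eps/2)\,\alpha \cdot \frac{2\delta}{\alpha \eps} = \tfrac{1}{5}(1+\eps/2)\,\delta \leq \tfrac{\delta}{2}
\]
for $\eps \leq 1$. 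Combining the two estimates would yield
\[
\td_u(v) - \td_u(u') \leq \alpha(1+\eps)\, d_G(u',v) + \tfrac{\delta}{2},
\]
which is exactly the $(\alpha(1+\eps), \delta/2)$-smoothness condition for the pair $(u', v)$. I would close by remarking that this bound only relies on the assumption $d_G(u,u') \leq 2\delta/(\alpha\eps)$ and on the multiplicative/additive guarantees of $\fO$, so it holds for every $v$ as required.
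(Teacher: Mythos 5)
Your proof is correct and follows essentially the same route as the paper's: cancel the $\td(u)$ offsets, bound $\hd_u(v)$ from above via the $(1+\eps/10)$-approximation on the $(1+\eps/2)\alpha$-slowed graph and $\hd_u(u')$ from below via noncontractivity, apply the triangle inequality, and absorb the $d_G(u,u')$ error term into $\delta/2$ using the closeness hypothesis. Your constant bookkeeping ($(1+\eps/10)(1+\eps/2)\le 1+\eps$ and $\tfrac{1}{5}(1+\eps/2)\delta\le\delta/2$) is only slightly more explicit than the paper's, which is fine.
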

\begin{proof}
  Indeed, for any node $v$ we can write 
  \[
    \td_u(v) - \td_u(u') \le \alpha(1+\eps/2)(1+\eps/10) d_G(u,v) - \alpha(1+\eps/2) d_G(u,u')
  \]
  which, using triangle inequality $d_G(u,v) - d_G(u,u') \le d_G(u', v)$, simplifies to
  \[
    \td_u(v) - \td_u(u') \le \alpha(1+\eps/2)(1+\eps/10) d_G(u',v) + \alpha(1+\eps/2)\frac{\eps}{10} \cdot d_G(u,u')
  \]
  The last error term can now be upper bounded by $\alpha(1+\eps/2)\frac{\eps}{10} \cdot 2\delta/(\alpha\eps) \le \delta/2 $, so we get
  \[
    \td_u(v) - \td_u(u') \le \alpha(1+\eps) d_G(u',v) + \delta/2. 
  \]
  In other words, the node $u'$ now satisfies the $((1+\eps)\alpha, \delta)$-smooth condition! 
\end{proof}

We now formally prove the correctness of \Cref{alg:smoothing_lemma_slow}.
\begin{lemma}
  Every pair $(u', v)$ is $(\alpha(1+\eps), \delta/2)$-smooth in $\td_*$. 
\end{lemma}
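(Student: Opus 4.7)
The plan is to fix an arbitrary pair $(u', v)$ and show $\td_*(v) - \td_*(u') \le \alpha(1+\eps) d_G(u', v) + \delta/2$ by arguing that the node $u^*$ which witnesses the minimum at $u'$ (i.e., $\td_*(u') = \td_{u^*}(u')$) must lie close to $u'$. The intuition is that the slowdown of the graph by $(1+\eps/2)\alpha$ was put in precisely so that $\td_u(u')$ grows too fast when $u$ is far from $u'$; hence only nearby sources $u$ can improve the estimate at $u'$, and for those Lemma \ref{lemma:fix-close} already gives us what we want.

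First I would note that setting $u = u'$ in the iteration yields $\td_{u'}(u') = \td(u') + \hd_{u'}(u') = \td(u')$ because every reasonable approximate oracle reports distance $0$ from a source to itself. Consequently, the element-wise minimum satisfies the a priori bound
\[
  \td_*(u') \;\le\; \td_{u'}(u') \;=\; \td(u').
\]

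Next, let $u^* \in \arg\min_u \td_u(u')$. I claim $d_G(u^*, u') \le 2\delta/(\alpha\eps)$. Indeed, if instead $d_G(u^*, u') > 2\delta/(\alpha\eps)$, then Lemma \ref{lemma:leave-far} applied to $u = u^*$ would give $\td_{u^*}(u') > \td(u') \ge \td_*(u') = \td_{u^*}(u')$, a contradiction. So the argmin is necessarily within the ``close'' regime of Lemma \ref{lemma:fix-close}.

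Finally, I would invoke Lemma \ref{lemma:fix-close} with this $u^*$: since $d_G(u^*, u') \le 2\delta/(\alpha\eps)$, the pair $(u', v)$ is $(\alpha(1+\eps), \delta/2)$-smooth in $\td_{u^*}$, i.e.\ $\td_{u^*}(v) - \td_{u^*}(u') \le \alpha(1+\eps)\, d_G(u', v) + \delta/2$. Combining with $\td_*(v) \le \td_{u^*}(v)$ (from the definition of the element-wise minimum) and $\td_*(u') = \td_{u^*}(u')$ (by the choice of $u^*$) yields
\[
  \td_*(v) - \td_*(u') \;\le\; \td_{u^*}(v) - \td_{u^*}(u') \;\le\; \alpha(1+\eps)\, d_G(u', v) + \delta/2,
\]
as required. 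The only delicate point is verifying that the argmin actually witnesses the bound at $u'$; this is essentially free because we are taking a minimum, but it is the step that makes the whole ``fix nearby sources without breaking them'' strategy work, and it is the reason the slowdown factor cannot be removed.
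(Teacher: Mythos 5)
Your proof is correct and follows essentially the same route as the paper: take the minimizer $u^*$ of $\td_u(u')$, use \Cref{lemma:leave-far} (via $\td_*(u') \le \td_{u'}(u') = \td(u')$) to conclude $d_G(u^*,u') \le 2\delta/(\alpha\eps)$, and then apply \Cref{lemma:fix-close} together with $\td_*(v) \le \td_{u^*}(v)$ and $\td_*(u') = \td_{u^*}(u')$. The only difference is that you spell out why $\td_{u'}(u') = \td(u')$ (the oracle returns $0$ for the source itself), which the paper leaves implicit.
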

\begin{proof}
Let $u := \arg\min_{u} \td_{u}(u')$ be the ``parent'' of $u'$. Hence, $\td_u(u') = \td_*(u') \le \td_{u'}(u') = \td(u')$. By \Cref{lemma:leave-far}, we have $d_G(u, u') \le 2\delta/(\alpha\eps)$. Hence, by \Cref{lemma:fix-close}, $(u', v)$ is $((1+\eps)\alpha, \delta)$-smooth in $\td_u$. But, $\td_*(v) - \td_*(u') = \td_*(v) - \td_u(u') \le \td_u(v) - \td_u(u')$, hence $(u', v)$ is also smooth in $\td_*$. This completes the proof.
\end{proof}

\begin{algorithm}[ht]
	\caption{Slow Partial Smoothing algorithm}
	\label{alg:smoothing_lemma_slow}
	{\bf Input:} A directed graph $G$, source $s$, parameter $\eps$, approximate oracle $\fO$, $(\alpha,\delta)$-smooth distance estimate $\td$\\
        {\bf Oracle:} $\fO(G, s, \eps)$ returns $(1+\eps)$-approximate distances $\td : V(G) \to \R_{\ge 0}$ from $s$ in $G$. \\        
	{\bf Output:} $((1+\eps)\alpha, \delta /2)$-smooth distance estimate $\td'$
	\begin{algorithmic}[1]
          \State Let $G'$ be the graph $G$ with distances multiplied by $(1+\eps/2)\alpha$.
          \State $\td \gets \fO(G, s, \eps)$
            \For{each $u \in V(G)$}
                \State $\hd_{u} \leftarrow \fO(G', u, \eps/10)$
                \State $\td_{u}(\cdot ) \leftarrow \td(u) + \hd_{u}(\cdot )$
            \EndFor
            \State \Return $\td_*(\cdot) = \min_{u \in V(G)} \td_{u}(\cdot)$
	\end{algorithmic}
\end{algorithm}

\paragraph{Step 3: Speeding up the algorithm.}

It remains to show how we can speed up \Cref{alg:smoothing_lemma_slow} so that it uses only $2$ queries to the distance oracle per refinement step, instead of $n$. 
To this end, we will use two observations about how we can ``bundle'' the oracle calls for two different vertices into just one call. 

First, consider some two nodes $u_1, u_2$ such that $|\td(u_1) - \td(u_2)| \le \delta/\eps$. Instead of running the distance oracle once from $u_1$ and once from $u_2$, we can use the following trick. Assuming $\td(u_1) < \td(u_2)$, construct a new special node $s$, connect it to $u_1$ with an edge of length $0$ and with $u_2$ with an edge of length $\td(u_2) - \td(u_1)$. Run just one oracle call from $s$ to get a distance estimate $\hd_s(\cdot)$; define $\td_{u_1, u_2}(v) = \td(u_1) + \hd_s(v)$. The distance function $\td_{u_1, u_2}(\cdot)$ is basically the same as the function $\min(\td_{u_1}(\cdot), \td_{u_2}(\cdot))$ that we want to compute. The difference is that the edge from $s$ to $u_2$ has length $\td(u_2) - \td(u_1) < \delta/\eps$ and this worsens the argument from previous analysis of the slow smoothing algorithm by an additive $\frac{\eps}{10} \cdot \frac{\delta}{\eps} = \frac{\delta}{10} \ll \frac{\delta}{2}$. That is, this additional loss is insignificant. 

Second, consider two nodes $u_1, u_2$ with $|\td(u_1) - \td(u_2)| \gg \delta/\eps$. Again, add a special node $s$, connect to $u_1$ and $u_2$ with an edge of length $0$ and run our distance oracle from $s$. We already know from \Cref{lemma:leave-far} that the distance information we get by calling our distance oracle from $u_1$ is valuable only for nodes that are at distance at most $O(\delta/\eps)$ from $u_1$, and analogously for $u_2$. This again suggests that the information we gain by calling our oracle from $s$ is sufficient to recover the value of $\min(\td_{u_1}(\cdot), \td_{u_2}(\cdot))$. 

We put these two observations together as follows as  visualized in \Cref{fig:level_graph}. We define two \emph{level graphs} $H_1, H_2$; to get the level graph $H_1$, we start with $G$ and slow down all edges by a factor of $(1+\eps)\alpha$. 
Next, we partition the nodes of $G$ into ``levels'' of length $\omega = \Theta(\delta/\eps)$ based on their value of $\td$; we delete edges going between different levels.  
Finally, we add a new node $\sigma$ and connect it to every node $u$ with an edge of length $\{\td(u)\}_1$ which is defined as the unique number between $0$ and $\omega$ such that $\td(u) - \{\td(u)\}_1$ is divisible by $\omega $. 

Running the distance oracle on $H_1$ enables us to (approximately) compute $\min(d_{u_1}(\cdot), d_{u_2}(\cdot), \dots)$ where the nodes $u_1, u_2, \dots$ are all the nodes of $G$ except those that are too close to the boundary of two different levels. We next define a graph $H_2$ analogously to $H_1$ but with its levels shifted by $\omega /2$. Computing the distances in $H_1$ and $H_2$ and taking the minimum of appropriate quantities then allows us to compute the value of $\min_{u \in V(G)} d_u(\cdot)$, thereby solving the problem we need to solve in \Cref{alg:smoothing_lemma_slow}. However, we need just two calls to the distance oracle instead of $n$ calls. 
Hence, the final smoothing algorithm only needs $O(\log n)$ oracle calls.

\subsection{The Formal Analysis}
\label{sec:directed_algorithm}

In this section we formally describe and analyze the smoothing algorithm described in \Cref{alg:smoothing}. Specifically, we prove its properties, restated below for convenience.

\smooth*

We start by describing the level graphs from \Cref{fig:level_graph}. 

\todo{V: the triangles in the picture are not matching the levels in the graph}
\begin{figure}
    \centering
    \includegraphics[width = .8\textwidth ]{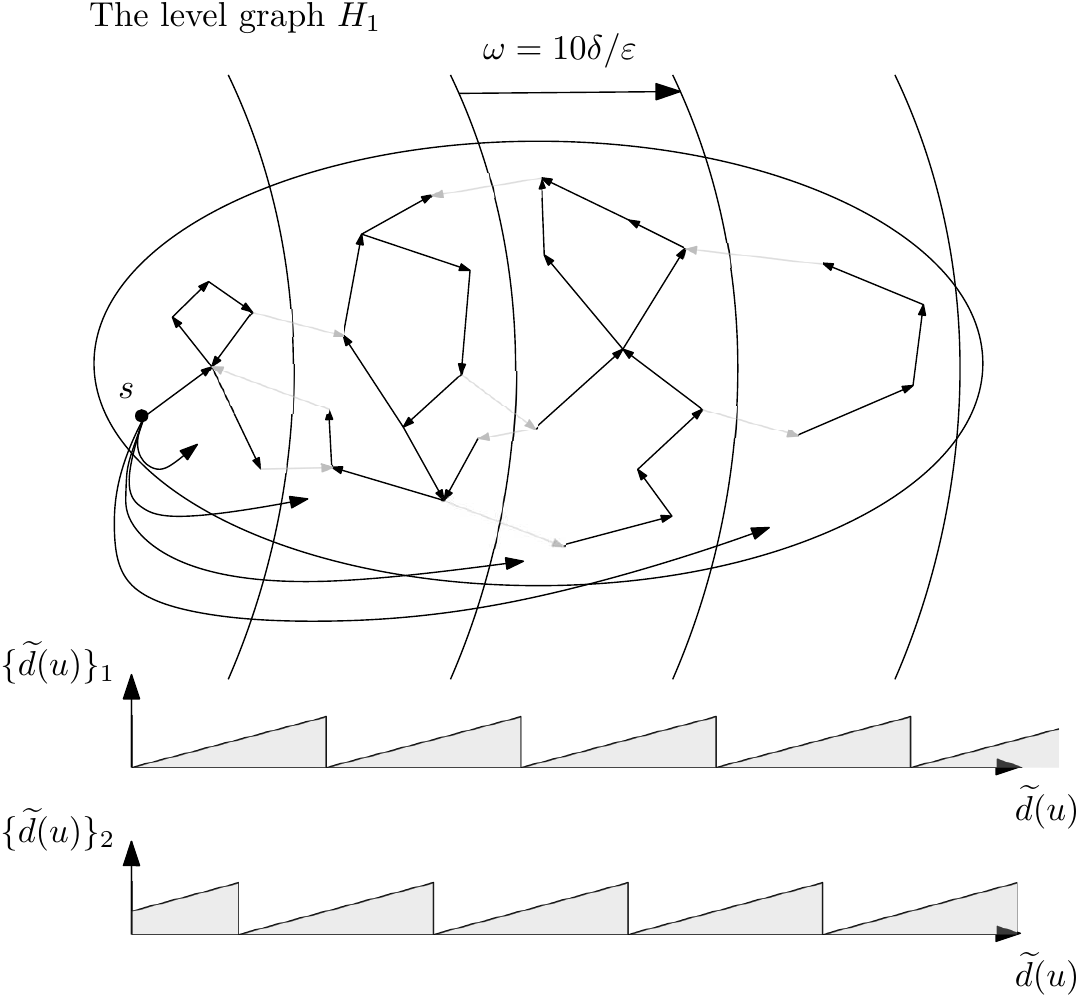}
    \caption{The figure shows one level graph $H_1$. The level graph is created from $G$ by cutting it based on $\td$ into levels of length $\omega = 10\delta/\eps$. We remove edges that go across boundaries (grey edges). We also add edges from $s$ to all other nodes $v$ of length $\{\td(v)\}_1$. This is the ``fractional part'' of $\td(v)$ as shown below the graph. The graph $H_2$ is defined analogously but the fractional parts are shifted by $\omega/2$.  }
    \label{fig:level_graph}
\end{figure}

\begin{definition}[Rounding distances]
\label{def:rounding}
Let $G$ be a graph and $\td$ a distance estimate on it. Let $i \in \{1, 2\}$ and $\omega$ a parameter. 
We define $\{\td(u)\}_i$ to be the smallest value such that $\td(u) - \{\td(u)\}_i - (i-1)\cdot \frac{\omega}{2}$ is divisible by $\omega$. 
We also define $\td(u) = \lfloor \td(u)\rfloor_i + \{\td(u)\}_i$. 
\end{definition}

\begin{definition}[Level graph $H_i$]
	\label{def:level_graph}
Given an input directed graph $G$, a width parameter $\omega$ and $i \in \{1, 2\}$, we define the \emph{level graph} $H_i$ as follows. 

We define $V(H_i) = V(G) \cup \{\sigma\}$. The set of edges $E(H_i)$ contains an edge $(u,v) \in E(G)$ if and only if $\lfloor \td(u)\rfloor_i = \lfloor \td(v) \rfloor_i$. 
The weight of the edge $(u,v)$ in $H_i$ is defined as  $$\ell_{H_i}(u,v) = (1+\eps)\alpha \ell_{G}(u, v). $$
Moreover, for every $u \in V(G)$ we add an edge $(\sigma, u)$ to $E(H_i)$ of weight $\{\td(u)\}_i$. 
\end{definition}

\begin{algorithm}[ht]
	\caption{Smoothing algorithm $\textsc{Smoothing}(G, s, \eps)$}
	\label{alg:smoothing}
	{\bf Input:} A directed graph $G$, source $s$, parameter $\eps$, approximate oracle $\fO(G, s, \eps')$ for $\eps' = O(\eps / \log n)$\\
        {\bf Oracle:} $\fO(G, s, \eps)$ returns $(1+\eps)$-approximate distances $\td : V(G) \to \R_{\ge 0}$ from $s$ in $G$. \\
	{\bf Output:} Smoothly $(1+\eps_0)$-approximate distance estimate $\td$.
	\begin{algorithmic}[1]
	    \State Let $\Delta = n^{O(1)}$ be the maximum distance in $G$. 
        \State $t \leftarrow 1 + \log_2 (n\Delta/\eps_0)$
	    \State $\eps \leftarrow \eps_0 / (10 t)$
	    \State Define distance $\td_0(u) \leftarrow \fO(G, s, 1)$  // this estimate is $(1, n\Delta)$-smooth. 
        \For{$i \leftarrow 1, \dots, t$}
            \State $\td_i(u) \leftarrow \textsc{PartialSmoothing}(\td_{i-1}(u), \eps)$ // this estimate is $\left( (1+\eps)^i, n\Delta/2^i\right)$-smooth
        \EndFor
        \State \Return $\td' \leftarrow \td_t$
	\end{algorithmic}
\end{algorithm}

\begin{algorithm}[ht]
	\caption{Partial Smoothing algorithm $\textsc{PartialSmoothing}$}
	\label{alg:smoothing_lemma}
	{\bf Input:} A directed graph $G$, source $s$, parameter $\eps$, approximate oracle $\fO$, $(\alpha,\delta)$-smooth distance estimate $\td$.\\
        {\bf Oracle:} $\fO(G, s, \eps)$ returns $(1+\eps)$-approximate distances $\td : V(G) \to \R_{\ge 0}$ from $s$ in $G$.\\
	{\bf Output:} $((1+\eps)\alpha, \delta /2)$-smooth distance estimate $\td'$
	\begin{algorithmic}[1]
            \State Define $H_1, H_2$ as in \Cref{def:level_graph} for $\omega = 10\delta/\eps$
            \For{$i \leftarrow 1, 2$}
                \State $\hd_i \leftarrow \fO(H_i, s, \eps/100)$
                \State $\td_i \leftarrow \lfloor \td(u) \rfloor_i + \hd_i(u)$
            \EndFor
            \State \Return $\td'$ defined by $\td'(\cdot ) = \min\left(\td(\cdot), \td_1(\cdot ), \td_2(\cdot ) \right)$
	\end{algorithmic}
\end{algorithm}

To prove \Cref{thm:smooth}, we need mainly to analyse the smoothing subroutine of \Cref{alg:smoothing_lemma}. To analyse it, we first define the concept of a smoothing path.

\begin{definition}[Smoothing path]
Given a parameter $\varepsilon$ and an $(\alpha, \delta)$-smooth distance estimate $\tilde{d}$ as in \Cref{alg:smoothing_lemma}, we will call a path $P$ from $u$ to $v$ in $G$ smoothing if $$\tilde{d}(v) \geq \tilde{d}(u) + \alpha(1+\varepsilon)\ell_{G}(P).$$
\end{definition}

We next prove that any smoothing path cannot be very long. 

\begin{claim}
\label{cl:short_smoothing_paths}
Any smoothing path $P$ satisfies $\ell_G(P) \leq \delta/(\alpha\eps).$
\end{claim}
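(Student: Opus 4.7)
The plan is to directly combine the two defining inequalities: the $(\alpha,\delta)$-smoothness hypothesis on $\tilde d$ and the smoothing-path condition. First I would apply the smoothing-path definition to get the lower bound
\[
\tilde d(v) - \tilde d(u) \ge \alpha(1+\eps)\,\ell_G(P).
\]
Then, since $P$ is itself a $u$-$v$ path in $G$, we have $d_G(u,v) \le \ell_G(P)$, so the $(\alpha,\delta)$-smoothness inequality $\tilde d(v) - \tilde d(u) \le \alpha\, d_G(u,v) + \delta$ gives the upper bound
\[
\tilde d(v) - \tilde d(u) \le \alpha\,\ell_G(P) + \delta.
\]
Chaining these two bounds yields $\alpha(1+\eps)\ell_G(P) \le \alpha\,\ell_G(P) + \delta$, i.e.\ $\alpha\eps\,\ell_G(P) \le \delta$, which rearranges to the claimed $\ell_G(P) \le \delta/(\alpha\eps)$.

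There is essentially no obstacle here; the proof is a two-line calculation. The only subtlety to keep in mind is that one must invoke smoothness with respect to the actual shortest-path distance $d_G(u,v)$ and then use $d_G(u,v)\le\ell_G(P)$ in the correct direction so that the two inequalities combine. No additional assumption on $P$ (e.g.\ that it is a shortest path) is needed.
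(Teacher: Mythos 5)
Your proof is correct and is essentially identical to the paper's argument: both combine the smoothing-path lower bound $\alpha(1+\eps)\ell_G(P) \le \tilde d(v)-\tilde d(u)$ with the $(\alpha,\delta)$-smoothness upper bound $\tilde d(v)-\tilde d(u) \le \alpha\, d_G(u,v)+\delta \le \alpha\,\ell_G(P)+\delta$ and cancel to get $\alpha\eps\,\ell_G(P)\le\delta$. Nothing further is needed.
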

\begin{proof}
We have by definition that a smoothing path $P$ from $u$ to $v$ satisfies
$$\alpha(1+\eps)\ell_G(P) \leq \tilde{d}(v)-\tilde{d}(u),$$
where by assumption on $\tilde{d}(\cdot)$ we have
$$ \tilde{d}(v)-\tilde{d}(u) \leq \alpha\cdot d_G(u, v) + \delta \leq \alpha \cdot \ell_G(P) + \delta.$$
Combining these gives the desired inequality.
\end{proof}

The fact that smoothing paths cannot be long then implies that any smoothing path is contained in either $H_1$ or $H_2$. 

\begin{claim}
\label{cl:contained}
Any smoothing path $P\subseteq G$ from $u$ to $v$ satisfies either $P \subseteq H_1$ or $P \subseteq H_2$. 
\end{claim}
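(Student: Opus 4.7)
The goal is to show that for a smoothing path $P = w_0, w_1, \ldots, w_k$ with $w_0 = u$ and $w_k = v$, all of its nodes have the same $\lfloor \td(\cdot) \rfloor_i$ for either $i=1$ or $i=2$. That suffices because every edge $(w_j, w_{j+1}) \in E(G)$ appears in $E(H_i)$ exactly when $\lfloor \td(w_j) \rfloor_i = \lfloor \td(w_{j+1}) \rfloor_i$, so the whole path lies in $H_i$. The two ingredients are (a) that the $\td$-values along $P$ are squeezed into a narrow interval, and (b) that any narrow enough interval fits into one level of either $H_1$ or $H_2$, since the two level systems are offset by half a level width.

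\textbf{Bounding the spread of $\td$ on $P$.} First, \Cref{cl:short_smoothing_paths} gives $\ell_G(P) \le \delta/(\alpha \eps)$. Now for any intermediate $w_j$ on $P$, I would apply $(\alpha,\delta)$-smoothness of $\td$ along the prefix $P_{0,j}$ and the suffix $P_{j,k}$:
\[
\td(w_j) - \td(u) \le \alpha \ell_G(P_{0,j}) + \delta, \qquad \td(v) - \td(w_j) \le \alpha \ell_G(P_{j,k}) + \delta.
\]
From the first inequality directly, $\td(w_j) \le \td(u) + \alpha \ell_G(P) + \delta$. For the lower bound, combine the second with the smoothing-path hypothesis $\td(v) - \td(u) \ge \alpha(1+\eps)\ell_G(P) \ge \alpha \ell_G(P)$:
\[
\td(w_j) \ge \td(v) - \alpha \ell_G(P_{j,k}) - \delta \ge \td(u) + \alpha \ell_G(P_{0,j}) - \delta \ge \td(u) - \delta.
\]
Plugging in $\ell_G(P) \le \delta/(\alpha\eps)$, every $\td(w_j)$ lies in the interval $[\td(u) - \delta,\ \td(u) + \delta/\eps + \delta]$, which has length at most $\delta/\eps + 2\delta \le 3\delta/\eps$ for $\eps \le 1$. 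Since $\omega = 10\delta/\eps$, the spread of $\td$ over $P$ is strictly less than $\omega/2$.

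\textbf{A narrow interval fits in some level.} The levels of $H_1$ are $[k\omega, (k+1)\omega)$ and the levels of $H_2$ are $[k\omega + \omega/2, (k+1)\omega + \omega/2)$ for $k \in \Z$. Let $[a,b]$ denote the interval from $\min_j \td(w_j)$ to $\max_j \td(w_j)$, so $b - a < \omega/2$. If $[a,b]$ is not contained in any $H_1$-level it must cross some boundary $k\omega$, i.e., $a < k\omega \le b$; then $k\omega - a < \omega/2$ and $b - k\omega < \omega/2$, so $[a,b] \subseteq (k\omega - \omega/2,\ k\omega + \omega/2) \subseteq [(k-1)\omega + \omega/2,\ k\omega + \omega/2)$, which is a single level of $H_2$. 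In either case all $\td(w_j)$ share the same $\lfloor\cdot\rfloor_i$, so $P \subseteq H_i$.

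\textbf{Main obstacle.} The proof is essentially bookkeeping; the only subtlety is that in a directed graph the $(\alpha,\delta)$-smoothness bound runs only forward along edges, so the lower bound on $\td(w_j)$ cannot be obtained directly from $u$. The trick is to route it through $v$ using the smoothing-path hypothesis, which is precisely where the ``smoothing'' condition gets used (beyond bounding $\ell_G(P)$). Once the spread bound $< \omega/2$ is in place, the level-shift argument is automatic from the choice $\omega = 10\delta/\eps$.
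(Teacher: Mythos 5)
Your proof is correct and follows essentially the same route as the paper's: bound the spread of $\td$ along $P$ by roughly $3\delta/\eps < \omega/2$ using $(\alpha,\delta)$-smoothness on the prefix/suffix together with the smoothing hypothesis and \Cref{cl:short_smoothing_paths}, then observe that an interval narrower than half a level width must fit inside a level of $H_1$ or of the half-shifted $H_2$. The only cosmetic difference is that you sandwich each intermediate node's value around $\td(u)$, while the paper bounds $\td(w_{\max})-\td(w_{\min})$ directly by adding the nonnegative quantity $\td(v)-\td(u)$; the two computations are interchangeable.
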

\begin{proof}
Let $w_{\max} = \argmax_{w\in P}\,\tilde{d}(w)$  and $w_{\min} = \argmin_{w\in P}\,\tilde{d}(w)$. We upper bound the difference $\td(w_{\max}) - \td(w_{\min})$. We have
\begin{align*}
    \tilde{d}(w_{\max})-\tilde{d}(w_{\min}) &\leq \tilde{d}(w_{\max})-\tilde{d}(w_{\min}) + \tilde{d}(v)-\tilde{d}(u) \\
    & = \left(\tilde{d}(w_{\max})-\tilde{d}(u)\right) + \left(\tilde{d}(v)-\tilde{d}(w_{\min})\right)\\
    & \leq \alpha \ell_G(P) + \delta + \alpha \ell_G(P) + \delta \\
    & \leq 2(\eps^{-1}+1)\delta \le 3\delta/\eps
\end{align*} 
where we first used the $(\alpha, \delta)$-smoothness assumption on $(u, w_{\max})$ and $(w_{\min}, v)$ and then we applied \Cref{cl:short_smoothing_paths}. 
Note that whenever $P \not\subseteq H_1$, we have necessarily $\{\td(w_{\min})\}_1 < \omega$ while $\{\td(w_{\max})\}_1 \ge \omega$. However, our bound then implies that $\frac{\omega}{2} \le \omega - 3\delta/\eps \le  \{\td(w_{\min})\}_1 \le \{\td(w_{\max})\}_1 \le \omega + 3\delta/\eps \le \frac{3\omega}{2}$ which implies that $P \subseteq H_2$. 
\end{proof}

\begin{proposition}\label{prop:sandwich} The output distance estimate $\td'$  in \Cref{alg:smoothing_lemma}  satisfies
\begin{equation}\label{eq:smoothformula}  
\min_{u \in V(G)} \left( \tilde{d}(u) + \alpha(1+\eps) d_G(u, v) \right) 
\leq \tilde{d}'(v) 
\leq  \min_{u \in V(G)} \left( \tilde{d}(u) + \alpha(1+\eps) d_G(u, v) \right) + \delta/2,
\end{equation}
\end{proposition}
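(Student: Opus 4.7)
The plan is to prove the two inequalities of the statement separately.

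For the lower bound, I would first observe that taking $u = v$ in the minimum gives $\td(v) \ge \min_{u \in V(G)}(\td(u) + \alpha(1+\eps) d_G(u,v))$. Next, for each $i \in \{1,2\}$, noncontractivity of the oracle on $H_i$ yields $\hd_i(v) \ge d_{H_i}(\sigma, v)$. Any shortest $\sigma$-to-$v$ path in $H_i$ consists of one edge $(\sigma, u)$ followed by a path inside $H_i$ from $u$ to $v$; by \Cref{def:level_graph}, the latter forces $\lfloor \td(u) \rfloor_i = \lfloor \td(v) \rfloor_i$ and has length at least $\alpha(1+\eps)\,d_G(u, v)$, while the $\sigma$-edge costs $\{\td(u)\}_i$. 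Hence $d_{H_i}(\sigma, v) \ge \min_u(\{\td(u)\}_i + \alpha(1+\eps)\,d_G(u, v))$, and adding $\lfloor \td(v) \rfloor_i = \lfloor \td(u) \rfloor_i$ to both sides gives $\td_i(v) \ge \min_u(\td(u) + \alpha(1+\eps)\,d_G(u, v))$. Taking the minimum of $\td(v), \td_1(v), \td_2(v)$ completes the lower bound.

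For the upper bound, I would fix a minimizer $u^* \in V(G)$ and set $M := \td(u^*) + \alpha(1+\eps)\, d_G(u^*, v)$. If $M \ge \td(v)$, then $\td'(v) \le \td(v) \le M \le M + \delta/2$, so I may assume $M < \td(v)$. Then any shortest $u^*$-to-$v$ path $P^*$ in $G$ satisfies $\td(v) > \td(u^*) + \alpha(1+\eps)\,\ell_G(P^*)$, making $P^*$ a smoothing path. By \Cref{cl:contained}, $P^* \subseteq H_i$ for some $i \in \{1,2\}$; in particular $u^*$ and $v$ lie in the same level, and $d_{H_i}(u^*, v) \le \alpha(1+\eps)\, d_G(u^*, v)$. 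Going from $\sigma$ to $u^*$ and then along $P^*$ gives $d_{H_i}(\sigma, v) \le \{\td(u^*)\}_i + \alpha(1+\eps)\, d_G(u^*, v)$, and the $(1+\eps/100)$-approximation guarantee of the oracle on $H_i$ yields $\hd_i(v) \le (1+\eps/100)\cdot d_{H_i}(\sigma, v)$. Adding $\lfloor \td(v) \rfloor_i = \lfloor \td(u^*) \rfloor_i$ produces
\[
\td_i(v) \;\le\; M + \tfrac{\eps}{100}\bigl(\{\td(u^*)\}_i + \alpha(1+\eps)\, d_G(u^*, v)\bigr).
\]

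It then remains to absorb this slack into $\delta/2$. Since $\{\td(u^*)\}_i < \omega = 10\delta/\eps$, the first summand contributes at most $\delta/10$. Since $P^*$ is a smoothing path, \Cref{cl:short_smoothing_paths} gives $d_G(u^*, v) = \ell_G(P^*) \le \delta/(\alpha\eps)$, so the second summand contributes at most $(\eps/100)\alpha(1+\eps) \cdot \delta/(\alpha\eps) \le \delta/50$ using $\eps \le 1$. Summing yields a slack of at most $\delta/10 + \delta/50 \le \delta/2$, so $\td'(v) \le \td_i(v) \le M + \delta/2$, finishing the upper bound. I expect the main obstacle to be precisely this bookkeeping: one must notice that $\{\td(u^*)\}_i$ is small because it lives in one level of width $\omega$, while $d_G(u^*, v)$ is small because $P^*$ is a smoothing path, and both smallnesses are delicately matched to the $\eps/100$ oracle accuracy to yield exactly $\delta/2$.
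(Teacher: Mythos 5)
Your proposal is correct and follows essentially the same route as the paper's proof: the lower bound via noncontractivity of the oracle and the structure of shortest $\sigma$-to-$v$ paths in $H_i$ (one level-respecting $\sigma$-edge plus a scaled $G$-path), and the upper bound by observing that the shortest path from the minimizer is a smoothing path, invoking \Cref{cl:contained} and \Cref{cl:short_smoothing_paths}, and absorbing the $\eps/100$ oracle slack on $\{\td(u^*)\}_i \le \omega$ and on $d_G(u^*,v)$ into $\delta/10 + \delta/50 \le \delta/2$. The only cosmetic difference is your explicit case split $M \ge \td(v)$ versus $M < \td(v)$, which the paper avoids since its smoothing-path definition uses a non-strict inequality.
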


\begin{proof}

We first observe that the first inequality holds. If $\td'(v) = \td(v)$, simply choose $u = v$ in the left hand side minimization. Else, we have $\td'(v) = \td_i(v)$ for some $i\in \{1,2\}$. Note that the definition of $H_i$ for any $i \in \{1,2\}$ implies that for any $v \in V(G)$ we in fact have 
\begin{align}
\label{eq:einaudi}
d_{H_i}(\sigma, v) 
= \min_{u \in V(G)} \left( \{\td(u)\}_i + d_{H_i[V(G)]}(u, v)  \right)
\end{align}
As the computed distance $\hd_i$ only overestimates $d_{H_i}(\sigma, v)$, we thus have
\begin{align}
    \td_i(v) 
    &= \lfloor \td(v) \rfloor_i + \hd_i(v)
    \ge \lfloor \td(v) \rfloor_i + \min_{u \in V(G)} \left( \{\td(u)\}_i + d_{H_i[V(G)]}(u, v) \right)\\
    &= \min_{u \in V(G)} \left( \td(u) + d_{H_i[V(G)]}(u, v) \right)
\end{align}
where the last equality follows from the fact that there is a path from $u$ to $v$ in $H_i$ only if $\lfloor \td(u) \rfloor_i = \lfloor \td(v) \rfloor_i$. The first inequality of \eqref{eq:smoothformula} then follows from the fact that the distance in $H_i$ are scaled by $\alpha(1+\eps)$. 


So it remains to show the second inequality.
Let $u$ be the minimizer of the right-hand side of \eqref{eq:smoothformula}, and let $P$ be any shortest path from $u$ to $v$. Since we minimize over all nodes of $V(G)$ including $v$, we conclude that
$$ \tilde{d}(u) + \alpha(1+\eps) d_G(u, v) \leq \tilde{d}(v) .$$
This means that $P$ is a smoothing path.

By \Cref{cl:contained} this path is completely contained in $H_i$ for some $i \in \{1,2\}$. 
It follows that for that $i$ we have
\begin{align*}
    \tilde{d}'(v) &\leq \tilde{d}'(v)_i \leq \lfloor \tilde{d}(u)\rfloor_i + (1+\eps/100)\left( \{ \tilde{d}(u)\}_i + \alpha (1+\eps) d_G(u, v)  \right)\\
    & \leq \tilde{d}(u) + \eps\{\tilde{d}(u)\}_i/100 + \alpha(1+\eps)(1+\eps/100)d_G(u, v)
\end{align*}
where we used $\lfloor \td(u) \rfloor_i = \lfloor \td(v) \rfloor_i$. 
Notice that we can bound $ \eps\{\tilde{d}(u)\}_i/100 \le \eps \omega/100 \le \delta/10$ and since $P$ is smoothing, \Cref{cl:short_smoothing_paths} yields $\alpha(1+\eps)\frac{\eps}{100} d_G(u,v) \le \delta/50$. Thus, above inequality simplifies to 
\begin{align*}
    \tilde{d}'(v) &\leq 
    \tilde{d}(u) + \alpha(1+\eps)d_G(u, v) + \delta/2
\end{align*}
and we are done. 
\end{proof}

\begin{proposition}\label{prop:smooth}
If \Cref{alg:smoothing_lemma} is called with an $(\alpha, \delta)$-smooth distance estimate $\tilde{d}$, then the output $\tilde{d}'$ is $(\alpha(1+\eps), \delta/2)$-smooth.
\end{proposition}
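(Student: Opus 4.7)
The plan is to derive the $(\alpha(1+\eps), \delta/2)$-smoothness of $\td'$ as an essentially immediate corollary of the sandwich bound already established in \Cref{prop:sandwich}. All the hard work (showing that smoothing paths are short and therefore contained in one of the two level graphs, and that the oracle's $\eps/100$ error against the level-graph distances only contributes $O(\delta)$ additive slack) has been done there, so what remains is a clean triangle-inequality manipulation.

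More concretely, I would introduce the auxiliary function
\[
f(v) \;:=\; \min_{u \in V(G)} \bigl( \td(u) + \alpha(1+\eps)\, d_G(u, v) \bigr),
\]
so that \Cref{prop:sandwich} reads $f(v) \le \td'(v) \le f(v) + \delta/2$. The first step is to observe that $f$ is itself $\alpha(1+\eps)$-Lipschitz with respect to $d_G$: for every fixed $u$ the quantity $\td(u) + \alpha(1+\eps)\, d_G(u,y)$ is at most $\td(u) + \alpha(1+\eps)\, d_G(u,x) + \alpha(1+\eps)\, d_G(x,y)$ by the triangle inequality in $G$, and taking the minimum over $u$ on both sides yields $f(y) \le f(x) + \alpha(1+\eps)\, d_G(x,y)$.

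The second step is to combine the upper and lower parts of the sandwich. For any pair $x, y \in V(G)$,
\[
\td'(y) - \td'(x) \;\le\; \bigl( f(y) + \delta/2 \bigr) - f(x) \;\le\; \alpha(1+\eps)\, d_G(x,y) + \delta/2,
\]
which is exactly the $(\alpha(1+\eps), \delta/2)$-smoothness condition. This is really the only argument; there is no significant obstacle since the two ingredients (the sandwich and the Lipschitzness of a min-plus convolution) fit together directly. The only thing worth noting is that one does not need any additional property of $\td$ beyond what \Cref{prop:sandwich} already assumes, so the statement indeed holds whenever the input is $(\alpha, \delta)$-smooth.
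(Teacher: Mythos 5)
Your proof is correct and follows essentially the same route as the paper: both derive smoothness by combining the lower bound of \Cref{prop:sandwich} at one vertex with the upper bound at the other and applying the triangle inequality in $G$, your version merely packaging this as Lipschitzness of the min-plus function $f$ while the paper works directly with the minimizer $u$. No gap; the argument is valid for directed graphs as well since only the directed triangle inequality $d_G(u,y)\le d_G(u,x)+d_G(x,y)$ is used.
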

\begin{proof}
Let $v, w$ be any vertices in $G$. Letting $u := \argmin_{u' \in V(G)} \tilde{d}(u')+\alpha(1+\eps)d_G(u', v)$ it follows by \Cref{prop:sandwich} that
$$ \tilde{d}(u) + \alpha(1+\eps)d_G(u, v) \leq \tilde{d}'(v),$$
and
$$ \tilde{d}'(w) \leq \tilde{d}(u)+\alpha(1+\eps)d_G(u, w)+\delta/2.$$

Combining these, we get that
\begin{align*}\tilde{d}'(w)-\tilde{d}'(v) &\leq \alpha(1+\eps)d_G(u, w) + \delta/2 - \alpha(1+\eps)d_G(u, v)\\
&\leq \alpha(1+\eps)\left(d_G(u, v)+d_G(v, w)\right) + \delta/2 - \alpha(1+\eps)d_G(u, v) \\ 
&=  \alpha(1+\eps) d_G(v, w) + \delta/2.
\end{align*}
as desired. 
\end{proof}

After \Cref{alg:smoothing_lemma} is analysed, it is simple to finish the proof of the main theorem \Cref{thm:smooth}. 

\begin{proof}
\Cref{prop:smooth} implies by induction that $\td_i(u)$ is $\left( (1+\eps)^i, n\Delta/2^i\right)$-smooth distance estimate. 
Using our choice of $\eps = \eps_0/(10t)$ we get that $(1+\eps)^t \le 1 + \eps_0/2$. 
On the other hand, $n\Delta/2^t < \eps_0/2$. As we assume that all distances in $G$ are at least $1$, we conclude that $\td_t(v) - \td_t(u) \le (1+\eps_0/2)d_G(u,v) + \eps_0/2 \le (1+\eps_0)d_G(u,v)$, as needed. 

We note that we can assume that $\eps_0 = \Omega(1/(n\Delta))$ since otherwise one call to approximate distances already gives exact distances as the answer. Together with the assumptions that edge weights are polynomially bounded, we conclude that $O(\log (n\Delta/\eps_0)) = O(\log n)$. 

It remains to argue about the preservation of the tree-like property and undirected graphs. We will first observe that whenever $\td$ is tree-like in \Cref{alg:smoothing_lemma}, so is $\td'$. To see this, consider any node $v \in V(G)$. 

If $\td'(v) = \td(v)$, we use the tree-likeness of $\td$ to find $u$ with $\td(u) \le \td(v) - \ell_G(u, v)$ and the fact $\td'(u) \le \td(u)$ shows that $u$ also certifies the tree-likeness of $\td'$. 

On the other hand, assume that $\td'(v) = \td_i(v) < \td(v)$ for some $i \in \{1,2\}$. Use the tree-likeness of $\hd_i$ to find some $u$ with $\hd_i(u) \le \hd_i(v) - \ell_{H_i}(u, v)$. 
Note that if, $u = \sigma$, the tree-likeness of $\hd_i$ implies
\begin{align}¨
\label{eq:horrible}
0 = \hd_i(\sigma) \le \hd_i(v) - \ell_{H_i}(\sigma, v) \le \td_i(v) - \td(v)
\end{align}
where for the second inequality we used $\td_i(v) = \lfloor \td(v) \rfloor_i + \hd_i(v)$ and $\td(v)  \le \lfloor \td(v) \rfloor_i + \ell_{H_i}(\sigma, v) $. 
We thus get $\td_i(v) \ge \td(v)$ which contradicts our assumption of $\td_i(v) < \td(v)$. So $u$ has to be some node of $V(G)$ with $\lfloor \td(u) \rfloor_i = \lfloor \td(v) \rfloor_i$. But this implies $\td_i(u) \le \td_i(v) - \ell_G(u, v)$ and as $\td'(u) \le \td_i(u)$, the tree-likeness for $v$ follows. 

Finally, one can check that all above results work also for undirected graphs. There, in the definition of $H_i$ we simply add an undirected edge instead of a directed one. The undirectedness of the added edges may decrease distances between nodes of $H_i$. However, notice that in the proof above, whenever we argue about $H_i$, we only argue about $d_{H_i[V(G)]}(u,v)$ or about $d_{H_i}(\sigma, u)$ for some $u,v \in V(G)$. These distances are not distorted, hence the same proof applies to the undirected case. 
\end{proof}

\section{Tree-Constructing Algorithm}
\label{sec:trees} 

This section is dedicated to prove \Cref{thm:tree}.

\begin{restatable}{theorem}{tree}{Obtaining tree-likeness}
  \label{thm:tree}
  Given an undirected graph $G$ with (real) weights in $[1,\poly(n)]$, a source $s \in V(G)$, accuracy $\eps \in (0,1]$, $D = \poly(n)$ an upper bound on the diameter of $G$, and an approximate oracle $\fO$, \Cref{alg:tree_constructing} computes a tree $T$ such that the distance estimate $\td$ from $s$ induced by $T$ is $(1+\eps)$-approximate (and tree-like) in $G$, using $O(\log n)$ calls to a $(1 + O(\eps/\log n))$-approximate distance oracle and $O(\log^2 n)$ calls to a $1.1$-approximate distance oracle.
\end{restatable}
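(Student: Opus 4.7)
My plan is a two-phase construction that first produces a faithful potential function and then uses it to guide tree construction at multiple distance scales. In Phase~1, I invoke the smoothing theorem \Cref{thm:smooth} on $G$ with parameter $\eps/3$, which costs $O(\log n)$ calls to the $(1+O(\eps/\log n))$-approximate oracle and returns a smoothly $(1+\eps/3)$-approximate distance estimate $\td : V(G) \to \R_{\ge 0}$. Crucially, $\td$ satisfies both noncontractivity $\td(v) \ge d_G(s,v)$ and the subtractive triangle inequality $\td(v) - \td(u) \le (1+\eps/3)\, d_G(u,v)$, so the level sets of $\td$ are geometrically faithful proxies for actual graph distances.

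In Phase~2 I partition $V(G)$ into $O(\log n)$ geometric shells $S_j = \{v : \td(v) \in [(1+\eps')^j, (1+\eps')^{j+1})\}$ with $\eps' = \Theta(\eps/\log n)$, and I build the tree bottom-up one shell at a time. For each $v \in S_j$ I want to attach $v$ to some parent $\pi(v) \in S_{<j} := \bigcup_{j' < j} S_{j'}$ through a short path in $G$, so that the induced distance $d_T(s,v)$ only grows by a $(1+O(\eps'))$ factor over $\td(v)$. To locate such a parent I run an LDD / blurry-ball-growing procedure: centered at each candidate in $S_{<j}$, I grow a ball in $G$ to cover nearby vertices in $S_j$. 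The ball radii are selected by doubling search with a $1.1$-approximate oracle, using $O(\log n)$ oracle calls per shell, for $O(\log^2 n)$ coarse oracle calls in total. The parent $\pi(v)$ is taken to be the earliest vertex reached along the backward shortest path from $v$ that crosses into $S_{<j}$, with edges inside the ball spliced into $T$.

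Correctness is by induction on $j$: assuming the partial tree $T_{<j}$ satisfies $d_{T_{<j}}(s,u) \le (1+\eps')^j \cdot \td(u)$ for all $u \in S_{<j}$, I show that attaching each $v \in S_j$ to its parent $\pi(v) \in S_{<j}$ costs at most a further $(1+O(\eps'))$ factor. The key inequality is that the subpath from $v$ to $\pi(v)$ has length at most $(1+O(\eps'))(\td(v) - \td(\pi(v)))$, which follows from smoothness of $\td$ together with the $1.1$-approximation being sharpened by the doubling search inside a small neighborhood. Composing across $O(\log n)$ shells multiplies the overhead to $(1+\eps'\,)^{O(\log n)} = 1+O(\eps)$, yielding $d_T(s,v) \le (1+\eps)\,d_G(s,v)$; the resulting $\td(v) := d_T(s,v)$ is tree-like by construction, since each $v$ has a neighbor on its tree path to $s$ witnessing $\td(\cdot) \le \td(v) - \ell_G(\cdot,v)$.

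The main obstacle will be showing that a $1.1$-approximate oracle is actually strong enough to isolate a parent that fits into the tight $\eps' = O(\eps/\log n)$ slack budget. I plan to handle this by (i) running the ball-growing at a slightly enlarged radius so that the $1.1$-slack is absorbed by the shell width, and (ii) exploiting smoothness of $\td$ to guarantee that some vertex of $S_{<j}$ lies within $G$-distance at most $(1+\eps/3)(\td(v) - (1+\eps')^j) = O(\eps'\,\td(v))$ of each $v \in S_j$, ruling out the pathological case of $v$ being ``stranded.'' A secondary subtlety is that the union of local parents must form a tree and not create cycles: this holds because every assigned parent has strictly smaller $\td$-value, so following $\pi$ must eventually reach $s$.
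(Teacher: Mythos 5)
There is a genuine gap, and it sits exactly where the theorem is hard. First, a bookkeeping problem: shells $S_j=\{v:\td(v)\in[(1+\eps')^j,(1+\eps')^{j+1})\}$ with $\eps'=\Theta(\eps/\log n)$ are not $O(\log n)$ many but $\Theta(\log_{1+\eps'}\poly(n))=\Theta(\log^2 n/\eps)$ many, so both your oracle-call budget ($O(\log n)$ coarse calls per shell) and your error accumulation ``$(1+\eps')^{O(\log n)}=1+O(\eps)$'' are based on a miscount. Second, and more fundamentally, your key inequality does not follow from smoothness. Smoothness gives $\td(v)-\td(u)\le(1+\eps/3)\,d_G(u,v)$, i.e.\ a \emph{lower} bound on $d_G(u,v)$ in terms of the $\td$-difference; it cannot certify that some vertex of $S_{<j}$ is close to $v$. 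The correct existence argument goes along the true shortest $s$--$v$ path using the approximation property, and because $\td$ may overshoot $d_G(s,\cdot)$ by a $(1+\eps/3)$ factor, the nearest lower-shell vertex it yields is only within $G$-distance $O(\eps\cdot\td(v))$ of $v$, not $O(\eps'\td(v))$. A root-to-leaf tree path in your construction then concatenates such segments across essentially all shells, and the detour plus the $10\%$ error of the coarse oracle on each segment accumulates to far more than a $(1+\eps)$ factor; the $1.1$-slack cannot be ``absorbed by the shell width'' when the segment length is governed by the global $\eps$-overshoot rather than by $\eps'$. Third, you never explain how to extract actual tree edges (a path of length within $1+O(\eps/\log n)$ of the truth) from a $1.1$-approximate, non-tree-like distance estimate -- but producing such monotone/path structure from non-tree-like estimates is precisely the problem the theorem solves, so this step cannot be waved through.

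For comparison, the paper avoids all three issues by a top-down recursion that halves the radius: one precise-oracle call per level defines a ball of radius $\approx D/2$ around $s$, and the coarse oracle is used only in the ball-cutting subroutine (\Cref{def:cutting}, \Cref{lem:cutting}) to repair the ball's boundary on geometrically shrinking scales $0.5^{i+1}\eps D$, where its constant-factor error is harmless; the resulting set $S^{\mathrm{sup}}$ satisfies $d_{G[S^{\mathrm{sup}}]}(s,v)\le(1+\eps)\min(d_G(s,v),D/2)$, boundary edges are rerouted to the source with corrected lengths, and the tree is built recursively on the contracted instance (\Cref{alg:tree_constructing}). Crucially, every node handled at a level has $d_G(s,\cdot)\ge D/2$, so the per-level additive error $O(\eps D)$ is a relative $O(\eps)$ error, and $O(\log D)$ levels with $\eps=O(\eps_0/\log D)$ give the claimed guarantee with $O(\log n)$ precise and $O(\log^2 n)$ coarse calls. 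If you want to salvage a bottom-up scheme, you would need a per-shell mechanism with the same two features -- a containment guarantee for shortest paths inside the grown region, and coarse-oracle usage confined to quantities that are already $O(\eps D)$ -- neither of which your current plan provides. (Your Phase~1 use of \Cref{thm:smooth} is not circular, but it is also not what supplies the missing structure.)
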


We refer the reader to the discussion after \Cref{lem:tree-like_vs_tree} to see that in this setup constructing tree-like distances is really the same as constructing an approximate shortest path tree $T$. We in fact do the latter. 

The algorithm is based on a simple idea: if $G$ has radius $D$ from $s$, we ``split'' the graph in the middle into the set $S$ of nodes with distance roughly $D/2$ from $s$ and those that are further. 
We get a graph with radius roughly $D/2$ where we solve the problem recursively and after this is done we ``stitch'' the trees together to get the final output tree (see \Cref{fig:tree-algorithm}). 

There is an important issue with this approach. Constructing the tree of $S$ and of $V(G) \setminus S$ can work only if there is no shortest path that would repeatedly go outside and inside of $S$. If this case happens, stitching the trees together can seriously distort distances. 

Therefore, the set $S$ cannot be simply chosen by first calling an approximate distance oracle and then adding all nodes with a distance estimate of at most $
D/2$ to $S$. 
Instead, we need to use as $S$ the solution of the following ball cutting problem. 

\begin{definition}[Simple ball cutting problem]
    \label{def:cutting}
    Given an undirected graph $G$ with nonnegative edge weights, a non-empty source set $S \subseteq V(G)$, and parameters $D > 0$ and $\eps > 0$. 
    We ask for a set $S^{sup} \subseteq V(G)$ with $S^{sup} \supseteq S$ such that
    \begin{enumerate}
        \item for every $v \in V(G)$, whenever $d_G(S, v) \le D$, we have $v \in S^{sup}$ and
        \item for every $v \in S^{sup}$ we have $d_{G[S^{sup}]}(S, v) \le (1+\eps)\min(d_G(S, v),D)$.
    \end{enumerate}
\end{definition}

Note that this problem would be easy to solve with a tree-like approximate distance oracle: then the simple approach of running the approximate distance oracle once and taking all nodes with approximate distance at most $(1+\eps)D$ works.

    All algorithms in the rest of the section work also for directed graphs but we state them only for undirected graphs, as for directed graphs we can get exact distances and exact distances are clearly tree-like. 
\subsection{Ball Cutting}
\label{subsec:cutting}

In this subsection we show how to solve the ball cutting problem from \Cref{def:cutting}. 

\begin{lemma}
\label{lem:cutting}
Given an undirected graph $G$ with minimum edge weight at least $1$, a source set $S \subseteq V(G)$, accuracy $\eps \in (0,1]$, $D > 0$, and an approximate oracle $\fO$, \Cref{alg:ball_cutting} solves the simple ball cutting problem from \Cref{def:cutting} (with inputs $G,S,D$ and $\eps$), using a single call to a $(1+\eps/4)$-approximate distance oracle and $O(\max(\log D,1))$ calls to a $1.1$-approximate distance oracle.
\end{lemma}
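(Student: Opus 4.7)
The plan is to leverage a single call to the high-accuracy oracle $\fO$ at accuracy $\eps/4$ to obtain a global ``reference'' distance estimate, and then use $O(\log D)$ calls to a cheap $1.1$-approximate oracle to iteratively repair the induced-distance blow-up. First, I would call $\fO(G, S, \eps/4)$ to obtain $\td$ satisfying $d_G(S,v) \le \td(v) \le (1+\eps/4) d_G(S,v)$, and set the initial candidate $S^{sup}_0 := \{v : \td(v) \le (1+\eps/4) D\}$. Since $d_G(S,v) \le D$ forces $\td(v) \le (1+\eps/4) D$, this already gives condition (1) of \Cref{def:cutting}, and because the set will only grow in later iterations, condition (1) persists.

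The remaining work is to force condition (2), which may fail for $S^{sup}_0$ if a true shortest $S$-to-$v$ path temporarily leaves the set. To detect and fix such violations, in iteration $i = 1, 2, \dots, O(\log D)$ I would run the $1.1$-approximate oracle on the subgraph $G[S^{sup}_{i-1}]$ from $S$ to obtain approximate induced distances $\hat d_i$. Any vertex $v \in S^{sup}_{i-1}$ with $\hat d_i(v) > 1.1 \cdot (1+\eps/2) \cdot \min(\td(v), D)$ is flagged, and the algorithm repairs it by adding every node $u \notin S^{sup}_{i-1}$ with $\td(u) \le \td(v) + r_i$, where the thickening radius $r_i$ doubles each round, starting at the granularity of the smallest edge length and capped at $\Theta(\eps D)$. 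The sequence terminates either when no violator remains or after $O(\log D)$ rounds, and the $1.1$-approximate oracle is called once per round, plus the single $(1+\eps/4)$-approximate call at the start, giving the oracle-complexity bound claimed in the lemma.

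The correctness argument has two halves. For condition (1), only growth occurs, so the initial inclusion is preserved. For condition (2), consider any $v \in S^{sup}$ and let $P$ be its shortest path from $S$ in $G$. Every node $u$ on $P$ satisfies $\td(u) \le (1+\eps/4) d_G(S,u) \le (1+\eps/4) d_G(S,v)$, so if $u \notin S^{sup}$ it must lie in a ``gap'' of $\td$-depth at most $(1+\eps/4) d_G(S,v)$. The iterative thickening by doubling radii guarantees that after $\lceil \log_2((1+\eps/4)D/r_1)\rceil = O(\log D)$ rounds every such gap is covered, and $P$ is entirely inside $S^{sup}$, giving $d_{G[S^{sup}]}(S,v) = d_G(S,v)$. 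For $v \in S^{sup}$ with $d_G(S,v) > D$ one uses the same argument applied to a prefix of any shortest path reaching the boundary distance $D$.

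The main obstacle will be accounting for the multiplicative slacks: the oracle that produced $\td$ contributes a factor $(1+\eps/4)$, the $1.1$-approximate verification contributes a further multiplicative factor, and the per-iteration thickening contributes additive slacks that must telescope into a geometric series summing to at most an extra $\eps/4$. Choosing the per-iteration tolerance as a geometric sequence with ratio $1/2$ (so that $\sum_i r_i = O(\eps D)$) and setting the verification threshold to $1.1 \cdot (1+\eps/2)$ lets all three contributions combine to a final bound of $(1+\eps)\min(d_G(S,v), D)$, as required. This bookkeeping — matching the $1.1$-factor verification tolerance against the $(1+\eps/4)$-scale reference estimate and the geometrically decaying thickenings — is the step where most of the proof effort will concentrate.
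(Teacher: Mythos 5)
Your first step coincides with the paper's (one call to the $(1+\eps/4)$-oracle and taking the initial set $\{v:\td(v)\le(1+\eps/4)D\}$, which settles condition (1) of \Cref{def:cutting}), but the repair loop you propose has a genuine gap, centered on how you use the $1.1$-approximate oracle. You call it on the induced subgraph $G[S^{sup}_{i-1}]$ and flag $v$ only when $\hat d_i(v) > 1.1(1+\eps/2)\min(\td(v),D)$. A vertex that passes this test is only certified to have induced distance at most about $1.1(1+\eps/2)(1+\eps/4)\min(d_G(S,v),D)$, which is nowhere near the required $(1+\eps)\min(d_G(S,v),D)$ when $\eps$ is small: a vertex whose induced distance is, say, $1.05\, d_G(S,v)$ is never flagged, never triggers any thickening, and ends up violating condition (2). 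A constant-factor oracle simply cannot verify a $(1+\eps)$ guarantee, and your path-covering argument only applies to vertices that do get flagged, so condition (2) is not established for the unflagged ones. A second, related problem is the direction of your radii: they double, so the largest additions happen in the \emph{last} round, and the nodes added there (and vertices whose shortest paths run through them) get no further repair; terminating ``after $O(\log D)$ rounds'' leaves open that violators remain. (There is also an internal inconsistency between ``$r_i$ doubles each round'' and ``a geometric sequence with ratio $1/2$'', a sign the bookkeeping is not settled.)

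The paper's \Cref{alg:ball_cutting} avoids both issues by never attempting to verify induced distances. It calls the $1.1$-oracle on the \emph{whole} graph $G$ with the current set $S_{i-1}$ as source and adds every node whose approximate distance to $S_{i-1}$ is at most a geometrically \emph{shrinking} margin $0.5^{i+1}\eps D$. \Cref{cl:cutting0,cl:cutting} then show that the true shortest path witnessing membership in $S_i$ is entirely swallowed by $S_{i+1}$, so induced distances are controlled by exact path containment plus additive detours that telescope to at most $\eps D$; the factor $1.1$ only perturbs the margins by constants and never enters the final approximation guarantee. Crucially, once the margin drops below the minimum edge weight $1$ (after $O(\log D)$ rounds) no new node can be added, so $S_{t-1}=S_t$ and every member's connecting path already lies inside the final set --- this stabilization is exactly the closure step that your increasing radii cannot provide. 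To rescue your scheme you would have to drop the verification step in favor of such a structural containment argument and make the per-round additions shrink rather than grow.
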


We use the following algorithm. 

\begin{algorithm}[ht]
	\caption{Ball cutting algorithm $\textsc{BallCutting}(G, S, \eps, D)$}
	\label{alg:ball_cutting}
	{\bf Input:} An undirected graph $G$ with minimum edge weight at least $1$, a source set $S \subseteq V(G)$, parameters $\eps \in (0,1]$ and $D > 0$ \\
        {\bf Oracle:} For $S' \subseteq V(G)$, $\fO(G, S', \eps')$ returns $(1+\eps')$-approximate distance estimates in the graph we obtain from $G$ by adding a supersource node and connect it to each node in $S$ with an edge of length $0$.   \\
	{\bf Output:} A set $S^{sup}$ satisfying properties from \Cref{def:cutting}. \\
	\begin{algorithmic}[1]
            \State $t \leftarrow O(\max(\log D,1))$
            \State Define $\td_0 \leftarrow \fO(G, S, \eps/4)$
            \State $S_0 = \{v \in V(G) : \td_0(v) \le (1+\eps/4)D\}$
            \For{$i \leftarrow 1, \dots, t$}
                \State $\td_i(u) \leftarrow \fO(G, S_{i-1}, 0.1)$
                \State $S_i = \{v \in V(G) : \td_i(v) \le 0.5^{i+1} \cdot \eps D \}$
            \EndFor
            \State \Return $S^{sup} \leftarrow S_t$
	\end{algorithmic}
\end{algorithm}

We first observe that clearly whenever $d_G(S, v) \le D$, we have $\td_0(v) \le (1+\eps/4)D$ and hence $v \in S_0$. So the first property in \Cref{def:cutting} is satisfied; we need to focus on the other one. 

We also observe that we have $S_{t-1} = S_{t}$ since to construct $S_t$ we add nodes such that their approximate distance to $S_{t-1}$ is at most $(0.5)^{(t-1)+1} \eps D \ll 1$. 

We note that for any $v \in S_i$ it can even be $d_{G[S_i]}(S, v) = +\infty$. However, we prove that this problem is always fixed in the following step by enlarging $S_i$ to $S_{i+1}$.
This is made precise in the following two claims.

\begin{claim}
\label{cl:cutting0}
For any $v \in S_0$ we have
\begin{align}
    d_{G[S_1]} (S, v) = d_G(S, v)
\end{align}
\end{claim}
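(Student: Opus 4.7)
The plan is to prove $d_{G[S_1]}(S,v) \le d_G(S,v)$ by exhibiting a single shortest $S$-to-$v$ path in $G$ whose vertices all lie in $S_1$; the reverse inequality is automatic since $G[S_1]$ is a subgraph of $G$. So I would fix any shortest path $P$ from some $s \in S$ to $v$ and show that every intermediate vertex $w$ on $P$ satisfies $w \in S_1$, i.e., $\tilde d_1(w) \le 0.5^{2}\eps D$.

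I would split into two cases depending on whether $w$ lies in the thick outer ball $S_0$. In the easy case $w \in S_0$, we have $d_G(S_0,w)=0$ and the $1.1$-approximation guarantee for $\tilde d_1$ gives $\tilde d_1(w)=0$, so $w \in S_1$ trivially. The interesting case is $w \notin S_0$: here $\tilde d_0(w) > (1+\eps/4)D$, and the $(1+\eps/4)$-approximation property of $\tilde d_0$ forces $d_G(S,w) > D$. On the other hand, $v \in S_0$ together with noncontractivity of $\tilde d_0$ give $d_G(S,v) \le (1+\eps/4)D$. Since $w$ lies on a shortest $S$-to-$v$ path, we have $d_G(S,w)+d_G(w,v)=d_G(S,v)$, and subtracting yields $d_G(w,v) < \eps D/4$.

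Now I would use $v \in S_0$ as a nearby ``witness'' source: $d_G(S_0,w) \le d_G(v,w) < \eps D/4$, so applying the $1.1$-approximation guarantee of $\tilde d_1$ bounds $\tilde d_1(w)$ below the threshold defining $S_1$ (up to the slack built into the constants $0.5^{i+1}$ and the $1.1$ accuracy of the inner oracle). Conclude that all of $P$ lives in $G[S_1]$, so $d_{G[S_1]}(S,v) \le \ell_G(P) = d_G(S,v)$, completing the proof.

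The main conceptual step—and the place where one must be a bit careful with the constants—is the second case: a vertex that just barely fails the $S_0$ test (because its approximate distance from $S$ sits between $D$ and $(1+\eps/4)D$) still lies on a short detour to $v \in S_0$, and this short detour is exactly what $S_1$ is designed to sweep up. I expect the only ``fiddly'' part of the write-up to be bookkeeping the approximation factor $1.1$ of the inner oracle against the halving schedule $0.5^{i+1}\eps D$, which is where the choice of oracle accuracy $0.1$ (relative to the step size) matters; the geometric content of the proof is entirely captured by the sandwich inequality $d_G(w,v) < \eps D/4$ above.
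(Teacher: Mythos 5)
Your proposal is correct and takes essentially the same route as the paper: fix a shortest $S$--$v$ path, show every node on it lies within $(\eps/4)D$ of $S_0$ (your explicit use of $v \in S_0$ as the witness in the case $d_G(S,w)>D$ is exactly the right way to justify the step the paper states tersely), and then apply the $1.1$-approximate oracle to conclude the whole path sits inside $S_1$. The constant bookkeeping you flag --- $1.1\cdot(\eps/4)D$ against the threshold $0.5^{2}\eps D$ defining $S_1$ --- is the only fiddly point, and the paper's own write-up elides the same slack (it compares against $0.5^{0+1}\eps D$), so no idea is missing from your argument.
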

\begin{proof}
Let $v \in S_0$. By definition of $S_0$ we have $\td_0(v) \le (1+\eps/4)D$, hence $d_G(S, v) \le \td_0(v) \le (1+\eps/4)D$. 
Let $P$ be a shortest path from $S$ to $v$ of length $d_G(S, v)$. 
Note that $P$ is not necessarily a subset of $S_0$. 
However, since every node $u$ with $d_G(S, u) \le D$ is certainly contained in $S_0$ because $\td_0(u) \le (1+\eps/4) d_G(S,u) \le (1+\eps/4)D$, we have for every node $w \in P$ that $d_G(S_0, w) \le (\eps/4) D$. 
This implies that $\td_1(w) \le 1.1 \cdot (\eps/4) D \le 0.5^{0+1}\eps D$, hence $w \in S_1$. We conclude that $P \subseteq S_1$ and we are done. 
\end{proof}

\begin{claim}
\label{cl:cutting}
Suppose that $v \in S_i$ for any $0 \le i \le t-1$. Then, 
\begin{align}
\label{eq:geometrical}
    d_{G[S_{i+1}]}(S, v) \le D + (1-0.5^{i+1})\eps D.
\end{align}
\end{claim}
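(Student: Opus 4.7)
The plan is to prove Claim~\ref{cl:cutting} by induction on $i$, with the base case $i = 0$ delivered directly by the preceding \Cref{cl:cutting0}: we just observe that $d_{G[S_1]}(S,v) = d_G(S,v) \le (1+\eps/4)D \le D + \tfrac{\eps}{2} D = D + (1 - 0.5^{0+1})\eps D$. Before starting the induction step, I would also record the monotonicity $S_0 \subseteq S_1 \subseteq \cdots \subseteq S_t$, which is immediate because any $u \in S_{j-1}$ has $d_G(S_{j-1}, u) = 0$ and hence $\td_j(u) \le 1.1 \cdot 0 = 0$, well below the threshold defining $S_j$.

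For the inductive step, fix $v \in S_i$ with $i \ge 1$. By definition and noncontractivity of $\td_i$, there is some $u \in S_{i-1}$ with $d_G(u,v) \le \td_i(v) \le 0.5^{i+1}\eps D$. Let $P$ be a shortest $u$-$v$ path in $G$; its length is at most $0.5^{i+1}\eps D$. Applying the inductive hypothesis to $u \in S_{i-1}$ gives a path $Q$ from $S$ to $u$ entirely within $G[S_i]$ of length at most $D + (1-0.5^i)\eps D$. Since $S_i \subseteq S_{i+1}$, $Q$ also lies in $G[S_{i+1}]$. The concatenation $Q \cdot P$ then witnesses the bound
\[
 d_{G[S_{i+1}]}(S, v) \le \big(D + (1-0.5^i)\eps D\big) + 0.5^{i+1}\eps D = D + (1 - 0.5^{i+1})\eps D,
\]
which is exactly what we want — provided $P \subseteq S_{i+1}$.

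The main obstacle is therefore verifying that every node $w \in P$ belongs to $S_{i+1}$. The idea parallels the trick used in \Cref{cl:cutting0}: for $w \in P$ we have $d_G(u, w) \le \ell(P) \le 0.5^{i+1}\eps D$, and since $u \in S_{i-1} \subseteq S_i$ this gives $d_G(S_i, w) \le 0.5^{i+1}\eps D$. The $1.1$-approximate oracle then yields $\td_{i+1}(w) \le 1.1 \cdot 0.5^{i+1}\eps D$, which must be compared against the admission threshold $0.5^{i+2}\eps D$ of $S_{i+1}$. This is the delicate calculation in the argument: one must exploit that $w$ is simultaneously close to $v \in S_i$ (splitting the path at its midpoint and using whichever endpoint is nearer) to shave off the factor of $2$, absorbing the $1.1$ slack of the $1.1$-approximate oracle. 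Concretely, if $w$ is in the half of $P$ closer to $u$ then $d_G(S_i, w) \le d_G(u,w) \le \tfrac{1}{2}\ell(P) \le 0.5^{i+2}\eps D$, and symmetrically if $w$ is closer to $v \in S_i$, so in both cases $\td_{i+1}(w) \le 1.1 \cdot 0.5^{i+2}\eps D$; with a small cushion built into the thresholds (or by using a marginally tighter oracle accuracy than $1.1$), this lands below the admission threshold and places $w \in S_{i+1}$.

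Once $P \subseteq S_{i+1}$ is established, concatenating $Q$ and $P$ closes the induction and delivers the stated bound, which in turn, at $i = t-1$, gives $d_{G[S_t]}(S, v) \le (1+\eps)D$ and hence property~(2) of \Cref{def:cutting}. Combined with the easy observation that $v \in S_0$ whenever $d_G(S,v) \le D$ (established just after the algorithm), this completes the proof of \Cref{lem:cutting} via \Cref{cl:cutting}.
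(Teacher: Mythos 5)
There is a genuine gap at the crucial step of your induction: showing that the shortest $u$--$v$ path $P$ is contained in $S_{i+1}$. Your midpoint split only anchors each $w \in P$ to one of the two endpoints $u, v$, giving $d_G(S_i, w) \le \tfrac12 \ell(P) \le 0.5^{i+2}\eps D$ and hence $\td_{i+1}(w) \le 1.1 \cdot 0.5^{i+2}\eps D$, which exceeds the admission threshold $0.5^{i+2}\eps D$ of $S_{i+1}$ by exactly the oracle's $1.1$ factor. You acknowledge this and propose to fix it by ``a small cushion built into the thresholds'' or a tighter oracle accuracy --- but that amounts to changing \Cref{alg:ball_cutting}, whereas the claim must hold for the algorithm as written (thresholds $0.5^{i+1}\eps D$, a $1.1$-approximate oracle). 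Moreover no choice of split point $t$ rescues the endpoint-anchored argument: one would need $1.1\max(t,1-t)\,\ell(P) \le 0.5^{i+2}\eps D$ while $\ell(P)$ can be as large as $0.5^{i+1}\eps D$, forcing $\max(t,1-t) \le 1/2.2 < 1/2$, which is impossible.

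The paper closes this gap with a different, two-level anchoring. Every $w \in P$ with $d_G(S_{i-1}, w) \le \frac{0.5^{i+1}\eps D}{1.1}$ satisfies $\td_i(w) \le 1.1\, d_G(S_{i-1},w) \le 0.5^{i+1}\eps D$ and is therefore \emph{already a member of $S_i$} (by the level-$i$ oracle call), so the anchor set inside $P$ is not just the endpoint $u$ but a prefix covering a $1/1.1$ fraction of the path. Consequently every node of $P$ has $d_G(S_i, w) \le 0.5^{i+1}\eps D\left(1 - \frac{1}{1.1}\right) \le 0.5^{i+3}\eps D$, whence $\td_{i+1}(w) \le 1.1 \cdot 0.5^{i+3}\eps D \le 0.5^{i+2}\eps D$ and $w \in S_{i+1}$, comfortably within the threshold. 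The rest of your argument (base case via \Cref{cl:cutting0}, monotonicity $S_{i} \subseteq S_{i+1}$, and the final chaining $d_{G[S_{i+1}]}(S,v) \le d_{G[S_i]}(S,u) + \ell(P) \le D + (1-0.5^{i+1})\eps D$ using the induction hypothesis on $u \in S_{i-1}$) matches the paper and is fine; only the containment $P \subseteq S_{i+1}$ needs the stronger prefix argument rather than the midpoint split.
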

\begin{proof}
We prove the claim by induction. For $i = 0$ it follows from \Cref{cl:cutting0} and the fact that any $v \in S_0$ has $d_G(S, v) \le \td_0(v) \le (1+\eps/4)D \leq D + (1-0.5^{0+1})\eps D$.

Next, consider some $i \in \{1,2,\ldots,t-1\}$. 
Let $P$ be any shortest path from some $u \in S_{i-1}$ to $v$ in $G$. 
By definition of $S_i$ and the fact that $\td_i$ is noncontractive, we have $d_G(u,v) = d_G(S_{i-1}, v) \le 0.5^{i+1}\eps D$. 
We now proceed as in the proof of \Cref{cl:cutting0} and prove that $P \subseteq S_{i+1}$. 

To see this, note that any $w \in P$ has $d_G(S_{i-1}, w) \le d_G(S_{i-1}, v) \le \td_i(v) \le  0.5^{i+1}\eps D$. All nodes $w \in P$ with $d_G(S_{i-1}, w) \le \frac{0.5^{i+1}\eps D}{1.1}$ have however $\td_i(w) \le 0.5^{i+1}\eps D$ and hence are taken to $S_i$. 
Since $P$ is a shortest path from $S_i$ to $v$, this implies that any $w \in P$ satisfies $d_G(S_i, w) \le 0.5^{i+1}\eps D - \frac{0.5^{i+1}\eps D}{1.1} \le 0.5^{i+3}\eps D$. This means that $\td_{i+1}(w) \le 1.1d_G(S_i, w) \le 0.5^{i+2}\eps D$ and hence $w \in S_{i+1}$. 

Since $P \subseteq S_{i+1}$, we can now infer that $d_{G[S_{i+1}]}(u, v) = d_G(u, v) \le 0.5^{i+1}\eps D$. 
We finish by using the induction hypothesis on $u$; we have
\begin{align}
    d_{G[S_{i+1}]}(S, v) 
    \le d_{G[S_i]}(S, u) + d_{G[S_{i+1}]}(u, v)
    \le D +  (1 - 0.5^{(i-1) + 1} + 0.5^{i+1}) \eps D = D + (1 - 0.5^{i+1})\eps D
\end{align}
and we are done. 
\end{proof}

\begin{corollary}
The second property from \Cref{def:cutting} is satisfied; that is, for every $v \in S$ we have $d_{G[S^{sup}]}(S, v) \le (1+\eps)\min(d_G(S, v),D)$.
\end{corollary}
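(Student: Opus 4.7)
The plan is to derive the corollary as a direct case analysis, leveraging the two technical claims (\Cref{cl:cutting0} and \Cref{cl:cutting}) that have just been proved. The statement I want to establish is that for every $v \in S^{sup} = S_t$, we have $d_{G[S^{sup}]}(S, v) \le (1+\eps)\min(d_G(S, v), D)$. Before the case split, I would record a simple monotonicity lemma: $S_i \subseteq S_{i+1}$ for all $i \ge 0$. Indeed for $v \in S_i$ with $i \ge 1$ we have $d_G(S_i, v) = 0$, so $\td_{i+1}(v) = 0 \le 0.5^{i+2}\eps D$, whence $v \in S_{i+1}$; the base step $S_0 \subseteq S_1$ is implicit in \Cref{cl:cutting0}. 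In particular $S_1 \subseteq S_t = S^{sup}$.

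\textbf{Case 1: $d_G(S, v) \le D$.} Here $\min(d_G(S,v), D) = d_G(S,v)$, and the noncontractivity of $\td_0$ combined with $\td_0(v) \le (1+\eps/4) d_G(S,v) \le (1+\eps/4)D$ forces $v \in S_0$. Applying \Cref{cl:cutting0} gives $d_{G[S_1]}(S, v) = d_G(S, v)$, and since $S_1 \subseteq S^{sup}$ we obtain
\[
d_{G[S^{sup}]}(S, v) \le d_{G[S_1]}(S, v) = d_G(S,v) \le (1+\eps)\min(d_G(S,v),D).
\]

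\textbf{Case 2: $d_G(S, v) > D$.} Here $\min(d_G(S,v), D) = D$. Since the choice $t = O(\max(\log D, 1))$ is large enough that $0.5^{t+1}\eps D < 1 \le $ (the minimum edge weight), the set $S_t$ cannot strictly contain $S_{t-1}$: a node $u \in S_t \setminus S_{t-1}$ would require an edge of weight strictly less than $1$ into $S_{t-1}$. Hence $S_{t-1} = S_t = S^{sup}$, so in particular $v \in S_{t-1}$, and \Cref{cl:cutting} applied with $i = t-1$ yields
\[
d_{G[S_t]}(S, v) \le D + (1 - 0.5^t)\eps D \le (1+\eps) D = (1+\eps)\min(d_G(S,v), D).
\]

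The two cases together close the argument. There is no real obstacle: all of the difficulty has been absorbed into the two inductive claims, and the remaining task is essentially a bookkeeping split on whether $v$ lies inside the ball of radius $D$ around $S$, plus the easy observation that the sequence $(S_i)$ stabilizes at index $t-1$ because the minimum edge weight is at least $1$. The only subtle point worth double-checking is the stabilization $S_{t-1} = S_t$, which relies on the "integer-length" style assumption that edges have weight $\ge 1$ and on the geometric shrinkage of the threshold $0.5^{i+1}\eps D$.
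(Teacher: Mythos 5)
Your proof is correct and takes essentially the same route as the paper: both arguments reduce to \Cref{cl:cutting0} in the near case (where $d_G(S,v)\le D$, equivalently $v\in S_0$) and to \Cref{cl:cutting} together with the stabilization $S_{t-1}=S_t$ in the far case. The only cosmetic differences are that you split on whether $d_G(S,v)\le D$ rather than on the first index $i$ with $v\in S_i$, apply \Cref{cl:cutting} at $i=t-1$ instead of at that first index, and spell out the monotonicity $S_i\subseteq S_{i+1}$ that the paper leaves implicit.
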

\begin{proof}
For any $v \in S^{sup}$ let $i$ be the first index such that $v \in S_i$. Note that $i \le t-1$ since we already observed that $S_{t-1} = S_t$. If $i = 0$, we use \Cref{cl:cutting0} to conclude that $d_{G[S^{sup}]}(S, v) = d_G(S, v)$. Since $\td_0$ is noncontractive we also have $d_{G[S^{sup}]}(S, v) \le (1+\eps/4)d_G(S, v)$. 
Otherwise, if $i \ge 1$, we have $d_G(S, v) > D$ and therefore we can directly use \Cref{cl:cutting}.
\end{proof}

Our procedure for computing tree-like distance estimates actually requires a ball cutting algorithm that works on graphs with slightly more general edge weights. In particular, all of the edge weights are at least one except that edges incident to the source might have a nonnegative weight less than one. \Cref{alg:ball_cutting_more_general} given below uses the ball cutting algorithm given above as a subroutine to handle this slightly more general case.
\begin{lemma}
\label{lem:cutting_general}
Given an undirected graph $G$ with minimum edge weight at least $1$ except that edges incident to source node $s \in V(G)$ can also have edge weights in the interval $[0,1]$, accuracy $\eps \in (0,1]$, $D > 0$, and an approximate oracle $\fO$, \Cref{alg:ball_cutting_more_general} solves the simple ball cutting problem from \Cref{def:cutting} (with inputs $G,\{s\},D$ and $\eps$), using a single call to a $(1+\eps/8)$-approximate distance oracle and $O(\max(\log D,1))$ calls to a $1.1$-approximate distance oracle.
\end{lemma}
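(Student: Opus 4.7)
I would split the argument by whether $D \ge 1$ or $D < 1$. For $D \ge 1$, invoke \Cref{alg:ball_cutting} on $G$ directly with source $\{s\}$, accuracy $\eps$, and diameter $D$ (so the initial $(1+\eps/4)$-approximate oracle call is subsumed by the available $(1+\eps/8)$-approximate oracle). For $D < 1$, call the $(1+\eps/8)$-approximate oracle once from $\{s\}$ to obtain $\td$ and return $S^{sup} := \{v \in V(G) : \td(v) \le (1+\eps/8) D\}$.

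\textbf{The $D \ge 1$ case.} The crucial observation is that the proofs of \Cref{cl:cutting0} and \Cref{cl:cutting} nowhere invoke the assumption that edge weights are $\ge 1$; they depend only on noncontractivity and the approximation guarantee of $\fO$. These inductive invariants therefore transfer verbatim, and so does the derivation of property (2) of \Cref{def:cutting} conditional on termination. The only place where the minimum-edge-weight assumption is truly used in \Cref{lem:cutting} is in asserting $S_{t-1} = S_t$ after $t = O(\log D)$ iterations. To recover this, note that every light neighbor $v$ of $s$ (meaning $(s,v) \in E$ with $\ell(s,v) < 1$) satisfies $d_G(s, v) \le \ell(s, v) < 1 \le D$, so noncontractivity yields $\td_0(v) \le (1+\eps/8) D$ and hence $v \in S_0$; consequently $s$ together with every light neighbor of $s$ lies in $S_0 \subseteq S_{t-1}$. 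Now suppose for contradiction some $v \notin S_{t-1}$ has $\td_t(v) \le 0.5^{t+1}\eps D < 1$; noncontractivity gives $d_G(S_{t-1}, v) < 1$, so any shortest path from $S_{t-1}$ to $v$ has total length $< 1$, forcing every edge on that path to have weight $<1$ and thus be incident to $s$. Since the shortest path can be taken simple and every edge shares the vertex $s$, it consists of at most two edges, and a quick case analysis forces $v$ itself to be $s$ or a light neighbor of $s$, both already in $S_0 \subseteq S_{t-1}$, a contradiction. Thus $S_{t-1} = S_t$ and the full conclusion of \Cref{lem:cutting} transfers.

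\textbf{The $D < 1$ case.} Here the $D$-ball around $s$ has a very restricted shape: any path in $G$ starting at $s$ that contains at least three edges has length $\ge 2$, because at least two of its edges lie entirely outside $s$ and thus have weight $\ge 1$. Property (1) of \Cref{def:cutting} is immediate from noncontractivity. For property (2), fix $v \in S^{sup}$, so $d_G(s, v) \le \td(v) \le (1+\eps/8) D < 2$. If the shortest path from $s$ to $v$ is a single edge, then $d_{G[S^{sup}]}(s, v) \le \ell(s, v) = d_G(s, v) \le (1+\eps) \min(d_G(s, v), D)$ directly. Otherwise it is a two-edge path $s \to w \to v$, in which case $\ell(s, w) \le d_G(s, v) - \ell(w, v) \le (1+\eps/8) D - 1$, so $\td(w) \le (1+\eps/8) \ell(s, w) \le (1+\eps/8)\bigl((1+\eps/8) D - 1\bigr) \le (1+\eps/8) D$, which places $w$ in $S^{sup}$; hence $d_{G[S^{sup}]}(s, v) \le \ell(s, w) + \ell(w, v) = d_G(s, v) \le (1+\eps) \min(d_G(s, v), D)$ again.

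\textbf{Anticipated obstacle.} The subtlest step is the termination argument for $D \ge 1$: arbitrarily small source edges could in principle break the ``$<1$'' threshold exploited in \Cref{lem:cutting}, but this is averted because all such light edges lie within the trivial neighborhood consisting of $s$ and its light neighbors, which is absorbed into $S_0$ the moment $D \ge 1$. This turns a potentially delicate edge case into a short combinatorial argument, and the factor-of-two tighter oracle accuracy compared to \Cref{lem:cutting} provides the small slack needed for the two-edge case in the $D < 1$ regime.
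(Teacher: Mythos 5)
Your argument is mathematically sound, and its central observation --- that the minimum-edge-weight assumption in \Cref{lem:cutting} is used only to force the stabilization $S_{t-1}=S_t$, which can be recovered here because every sub-unit edge is incident to $s$ and the light neighbors of $s$ are already absorbed into $S_0$ once $D\ge 1$ --- is exactly the paper's own remark that stabilization is the sole place where the weight lower bound enters (\Cref{cl:cutting0} and \Cref{cl:cutting} are indeed weight-agnostic). However, you take a genuinely different route, and in doing so you certify a variant algorithm rather than \Cref{alg:ball_cutting_more_general} as written: the paper handles $D<1$ with no oracle call at all, returning $\{v : \ell_G(s,v)\le D\}$, which is the exact ball $B_G(s,D)$ since any other path leaving $s$ has length at least $1$, whereas you spend an oracle call and a one-or-two-edge path analysis; and for $D\ge 1$ the paper first collapses the short star edges into a source set $S=\{v:\ell_G(s,v)\le \eps/2\}$ and calls $\textsc{BallCutting}(G,S,\eps/2,D)$, so its proof needs a second transfer step converting the guarantee of \Cref{def:cutting} relative to the set $S$ into the guarantee relative to the single source $s$ (this is where the extra $\eps/2$ of slack and the $(1+\eps/8)$ accuracy come from), whereas you run \Cref{alg:ball_cutting} directly from $\{s\}$ at accuracy $\eps$ and re-prove stabilization via the ``all edges of a length-$<1$ path touch $s$'' case analysis. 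Your approach buys a cleaner reduction with no source-set-to-$s$ transfer, at the cost of not proving the stated pseudocode correct --- if one insists on \Cref{alg:ball_cutting_more_general} as written, the transfer step must still be supplied and your proof does not contain it. Two small inaccuracies worth fixing: when placing light neighbors into $S_0$ you appeal to noncontractivity where you mean the oracle's approximation upper bound, and the factor-of-two tighter accuracy is not actually load-bearing in your $D<1$ two-edge case, since $(\eps/4)D\le 1$ already holds there.
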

\begin{proof}
 First, observe that \Cref{lem:cutting} is true even if we relax the condition that the minimum edge weight is $1$. The only place where we use the minimum weight condition is to prove that $S_{t-1} = S_t$. This condition is even satisfied if the minimum edge weight is at least $\eps$ because $(0.5)^{(t-1) + 1}\eps D \ll \eps$.
 We have to check the two conditions of \Cref{def:cutting}. If $D < 1$, then it follows from the fact that only edges incident to $s$ can have a length of less than one that $S = B_G(s,D)$. Therefore, both conditions of \Cref{def:cutting} are trivially satisfied. Thus, it remains to consider the case that $D \geq 1$.  To check the first condition, consider an arbitrary node $v \in V(G)$. If $d_G(s,v) \leq D$, then also $d_G(S,v) \leq D$ and thus it follows from the fact that $S^{sup}$ is a valid output for the ball cutting problem with input $G,S,\eps/2$ and $D$ that $v \in S^{sup}$. 
 To check the second condition, consider an arbitrary $v \in S^{sup}$. As $S^{sup}$ is a valid output for the ball cutting problem with input $G,S,\eps/2$ and $D$, we get that $d_{G[S^{sup}]}(S,v) \leq (1+\eps/2)\min(d_G(S,v),D)$. We have to show that this implies
 $d_{G[S^{sup}]}(s,v) \leq (1 + \eps)\min(d_G(s,v),D)$.
 First, if $v \in S$, then $d_{G[S^{sup}]}(s,v) = \ell_G(s,v) \leq \min(d_G(s,v),D)$. Next, consider the case that $v \notin S$. If $d_G(s,v) \geq 1$, then 
 \[d_{G[S^{sup}]}(s,v) \leq \eps/2 + d_{G[S^{sup}]}(S,v) \leq \eps/2 + (1+\eps/2)\min(d_G(S,v),D) \leq (1+\eps)\min(d_G(S,v),D).\]
 
 On the other hand, if $v \notin S$ and $d_G(s,v) < 1$, then 
 
 \[d_G(s,v) = d_G(S,v) \leq (1+\eps/2)\min(d_G(S,v),D) \leq (1+\eps)\min(d_G(s,v),D),\]
 
 which finishes the proof.
\end{proof}

\begin{algorithm}[ht]
	\caption{Slightly more general ball cutting algorithm $\textsc{BallCutting'}(G, s, \eps, D)$}
	\label{alg:ball_cutting_more_general}
	{\bf Input:} An undirected graph $G$ with minimum edge weight at least $1$ except that edges incident to source node $s \in V(G)$ can also have edge weights in the interval $[0,1]$, parameters $\eps \in (0,1]$ and $D > 0$ \\
	{\bf Output:} A set $S^{sup}$ satisfying properties from \Cref{def:cutting} with inputs $G,\{s\},D$ and $\eps$. \\
	\begin{algorithmic}[1]
	        \If{$D < 1$} {\Return $S^{sup} \leftarrow \{v \in V(G) \colon \ell_G(s,v) \leq D\}$}
	        \EndIf
            \State $S \leftarrow \{v \in V(G) \colon \ell_G(s,v) \leq \eps/2\}$
            \State \Return $S^{sup} \leftarrow BallCutting(G,S,\eps/2,D)$
	\end{algorithmic}
\end{algorithm}

\subsection{Constructing the Tree}
\label{subsec:tree}

We are now ready to prove \Cref{thm:tree}, the main result of this section.

We construct the approximate shortest path tree using the following recursive algorithm in \Cref{alg:tree_constructing}, see also \Cref{fig:tree-algorithm}.

\begin{figure}
    \centering
    \includegraphics[width = .8\textwidth]{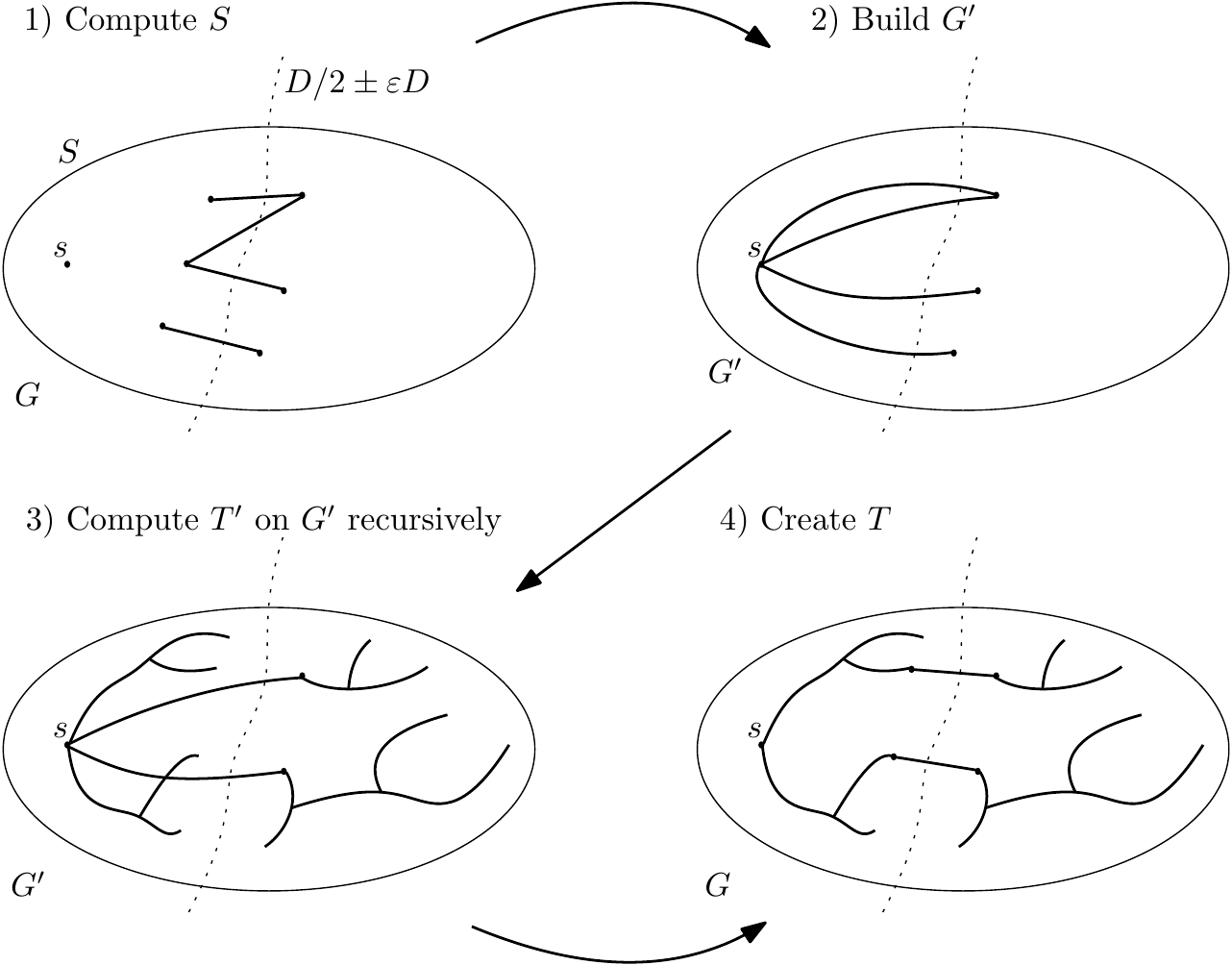}
    \caption{The figure describes \Cref{alg:tree_constructing}. In the first step, we compute the set $S$ that cuts the graph into two halves, both with diameter about $D/2$. The set $S$ has the additional property that any node $v \in V(G)$ can be reached from $s$ by a path that leaves $S$ only at most once such that the path is not much longer than $d_G(s, v)$. \\
    In the second step, we construct a new graph $G'$ where each edge crossing a boundary from some $u \in S$ to some $v \in V(G) \setminus S$ is replaced by a direct edge from $s$ to $v$. The length of the new edge is chosen so that the distance of $v$ to $s$ decreases roughly by $D/2$. \\
    In the third step, we construct the approximate shortest path tree $T'$ recursively in $G'$. \\
    Finally, we replace all edges of $T'$ that are not in $G$ by their counterparts. If all computations were exact, this recursive algorithm would construct a shortest path tree. We lose a $(1+\eps)$-factor in one recursion level due to inexact calculations. In total, we lose $(1+\eps)^{O(\log D)}  = (1+O(\eps \cdot \log D))$ as we choose $\eps < 1/\log D$. }
    \label{fig:tree-algorithm}
\end{figure}

\begin{algorithm}[ht]
	\caption{Tree-constructing algorithm $\textsc{TreeConstructing}(G, D)$}
	\label{alg:tree_constructing}
	{\bf Input:} An undirected graph $G$ with minimum edge weight at least $1$ and diameter upper bound $D$, source $s$, parameter $\eps < O(1/\log D)$, approximate oracle $\fO(G, s, \eps)$. (During recursive calls, edges incident to the source $s$ can have arbitrary nonnegative edge lengths)\\
        {\bf Oracle:} $\fO(G, s, \eps)$ returns $(1+\eps)$-approximate distances $\td : V(G) \to \R_{\ge 0}$ from $s$ in $G$.\\
	{\bf Output:} A tree $T$ such that the distance estimate $\td$ from $s$ induced by $T$ is $(1+ \eps \cdot O(\log D))$-approximate (and tree-like). 
	\begin{algorithmic}[1]
	    \State If $D < 1$ return the star $T$ connecting $s$ with every other node of $G$ \label{line:leaf}
	    \State $\td_0 \leftarrow \fO(G, s, \eps)$
	    \State $S \leftarrow \textsc{BallCutting'}(G, s, \eps, D/2)$
        \State Define $G'$ by replacing each edge $e = \{u,v\}$ with $u \in S, v \not\in S$ by an edge $e_{u,v}$ from $s$ to $v$ with weight $\ell_G(u,v) + \td_0(u) - D/2$\label{line:nonegative}
        \State $T' \leftarrow \textsc{TreeConstructing}(G', (1+\eps)D/2)$
        \State Construct $T$ by replacing each edge $e_{u,v}$ in $T'$ by the edge $\{u,v\}$ \label{line:defT}
        \State \Return $T$
	\end{algorithmic}
\end{algorithm}

We note that the definition of \Cref{line:nonegative} in \Cref{alg:tree_constructing} never creates negative length edges: whenever $\ell_G(u,v) + \td_0(u) - D/2 < 0$, we then have $d_G(s, v) \le d_G(s, u) + \ell_G(u, v) \le \td_0(u) + \ell_G(u,v) < D/2$ where we first used triangle inequality, then the fact that $\td_0$ is noncontractive and then our assumption. Hence, \Cref{def:cutting} implies that $v \in S$. 

During the algorithm we may create edges of length $0$ that connect a node to $s$. 
In fact, we note that whenever $D < 1$, it has to be that $D = 0$ due to all lengths being integers. This means that all nodes are connected to $s$ with a zero-length edge. This justifies \Cref{line:leaf} in \Cref{alg:tree_constructing}: the returned tree $T$ in that case is the (trivial) shortest path tree that connects $s$ with all other nodes by a zero-length edge. 

Let $\td'$ be the distance estimate induced by $T'$ and $\td$ the distance estimate induced by $T$. \Cref{thm:tree} follows by repeated application of the following claim for $O(\log (nD)) = O(\log n)$ steps.

\begin{claim}
\label{cl:tree_main}
Assume that the distance estimate $\td'$ is $(1+\eps_0)$-approximate for $\eps_0 < 1$. Then $\td$ is $(1 + \eps_0 + 5 \eps)$-approximate. 
\end{claim}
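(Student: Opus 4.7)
The plan is a case analysis on whether $v \in S$ or $v \notin S$. First, $T$ is a valid spanning tree of $G$: replacing each special edge $e_{u,v}=(s,v)$ in $T'$ by the original edge $(u,v) \in E(G)$ only re-attaches the subtree of $T'$ rooted at $v$ from $s$ to $u$, and since $u \in S$ remains connected to $s$ through the (unchanged) portion of $T'$ inside $G[S]$, connectedness is preserved without introducing cycles. Consequently $T \subseteq G$ and noncontractivity $\td(v) \ge d_G(s,v)$ is automatic. For the easy case $v \in S$: the $s$-to-$v$ path in $T'$ stays entirely in $G[S]$, because in $G'$ the only edges across the cut $S,V\setminus S$ are the special edges incident to $s$, and once such an edge is used the path is committed to $V \setminus S$ forever. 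Hence $\td(v) = \td'(v) \le (1+\eps_0) d_{G[S]}(s,v) \le (1+\eps_0)(1+\eps) d_G(s,v) \le (1+\eps_0+5\eps) d_G(s,v)$ by the inductive hypothesis and the ball-cutting guarantee (\Cref{lem:cutting_general}).

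For the hard case $v \notin S$, the $s$-to-$v$ path in $T'$ must begin with some special edge $e_{u_1, v_1}$ (with $u_1 \in S$, $v_1 \notin S$) and remain in $V \setminus S$ afterwards. Tracing the analogous path in $T$ (the $s$-to-$u_1$ portion lives inside $G[S]$ and is identical in $T$ and $T'$, while the $v_1$-to-$v$ portion consists of original edges and is also identical) yields the key identity
$$\td(v) = \td'(u_1) - \td_0(u_1) + \td'(v) + D/2.$$
I bound $\td'(u_1) - \td_0(u_1) \le (\eps_0 + \eps + \eps_0\eps)D/2$ by combining $\td'(u_1) \le (1+\eps_0)(1+\eps)\min(d_G(s,u_1), D/2)$ (induction plus ball cutting applied to $u_1 \in S$) with the noncontractivity $\td_0(u_1) \ge d_G(s,u_1)$. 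To bound $\td'(v) \le (1+\eps_0) d_{G'}(s,v)$ I exhibit a convenient $G'$-path: taking a shortest $s$-to-$v$ path $P$ in $G$ and letting $u^* \in S$ be its last vertex in $S$ with successor $v^* \notin S$, the route $s \to v^* \to \cdots \to v$ using the special edge $e_{u^*,v^*}$ and then following $P$ has weight $\td_0(u^*) - d_G(s,u^*) + d_G(s,v) - D/2 \le \eps(1+\eps)D/2 + d_G(s,v) - D/2$, where the last inequality uses $d_G(s,u^*) \le (1+\eps)D/2$ (a consequence of $u^* \in S$ and the ball-cutting guarantee).

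Substituting into the identity gives $\td(v) \le (1+\eps_0)d_G(s,v) + \eps\bigl(2 + 2\eps_0 + \eps(1+\eps_0)\bigr)D/2$ after routine algebra. The main obstacle is extracting the constant $5$ cleanly: since $v \notin S$ implies $d_G(s,v) > D/2$, the residual additive $(D/2)$-term can be traded against the slack in $(1+\eps_0)d_G(s,v)$, and for $\eps_0 < 1$ together with $\eps$ sufficiently small (as guaranteed by the outer condition $\eps < O(1/\log D)$), the bracketed constant is at most $5$, yielding $\td(v) \le (1+\eps_0+5\eps)d_G(s,v)$. Conceptually the whole argument is a careful two-step triangle-inequality decomposition across the cut $S, V\setminus S$: the $D/2$ ``savings'' baked into the special-edge weight is paid back at essentially the same rate up to multiplicative slack $\eps(1+\eps_0)$ and the approximation error in $\td_0$ on the exit vertex.
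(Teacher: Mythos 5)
Your proof is correct and follows essentially the same route as the paper's: the same decomposition at the unique cut-crossing edge (your identity $\td(v) = \td'(u_1) - \td_0(u_1) + \td'(v) + D/2$ is the paper's \cref{eq:split} with the special-edge length substituted), the same use of the ball-cutting guarantees for $u_1, u^*$, and the same absorption of the additive $D/2$ term via $d_G(s,v) > D/2$. The only difference is bookkeeping: your error coefficient $2+2\eps_0+\eps(1+\eps_0)$ can reach $6$ when $\eps,\eps_0$ are both close to $1$, so landing exactly on the stated constant $5$ needs $\eps \lesssim 1/2$ (which you correctly note holds in the regime $\eps < O(1/\log D)$ where the claim is applied), whereas the paper's slightly different bound gives coefficient at most $5$ for all $\eps \le 1$, $\eps_0 < 1$.
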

\begin{proof}
Consider any $w \in V(G)$. First assume that $w \in S$. In that case by the definition of $\td$ we have $\td(w) = \td'(w)$ and the guarantees of $\td'(w)$ say that $\td'(w) \le (1+\eps_0)d_{G[S]}(s, w)$. 
Using \Cref{def:cutting}, we have $d_{G[S]}(s, w) \le (1+\eps)d_G(s, w)$. We conclude that $\td(w) \le (1+\eps_0)(1+\eps)d_G(s, w)$ as needed. 

Next, we assume $w \not\in S$. Consider a shortest path $P_0$ from $s$ to $w$ and let the edge $\{u_0, v_0\}$ be the last edge of $P_0$ such that $u_0 \in S$ and $v_0 \in V(G) \setminus S$. Similarly, consider the path $P$ that connects $s$ and $w$ in $T$. Let $\{u,v\}$ be the unique edge of it such that $u \in S, v \not\in S$ (see \Cref{fig:tree-like}). Our task is to upper bound the length of $P$ in terms of the length of $P_0$.

Formally, using the definition of $T$ in terms of $T'$ in \Cref{line:defT} we write
\begin{align}
\label{eq:split}
    \td(w) = \td'(u) + \ell_G(u,v) + \td'(w) - \td'(v)
\end{align}
We will bound separately $\td'(w)$ and the rest of the right hand side. See \Cref{fig:tree-like}, the bound of $\td'(w)$ corresponds to the comparison of $P$ and $P_0$ to the right of the boundary between $S$ and $V(G)\setminus S$; the bound on $\td'(u) + \ell_G(u,v) - \td'(v)$ corresponds to the parts of the paths to the left of the boundary. 

\begin{figure}
    \centering
    \includegraphics[width = .6\textwidth]{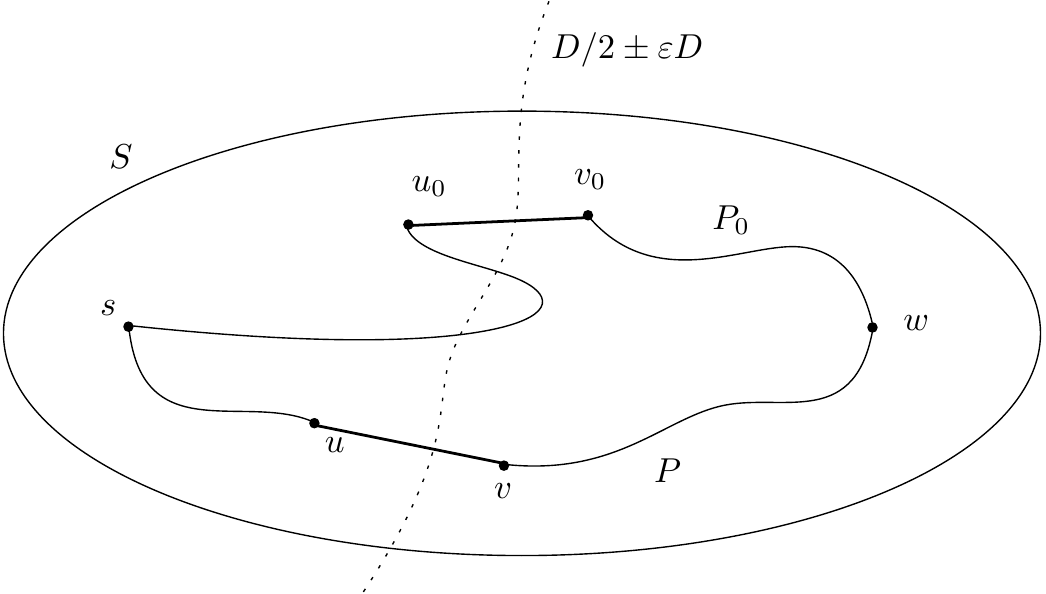}
    \caption{A figure for the proof of \Cref{cl:tree_main}. We consider the optimal path $P_0$ from $s$ to $w$ via $u_0, v_0$ and the path $P$ taken by the constructed tree $T$ that goes through the edge $u,v$.  }
    \label{fig:tree-like}
\end{figure}

{\bf First upper bound: }
To bound $\td'(w)$, we first note that $\td'(w) \le (1+\eps_0)d_{G'}(s, w)$ so we need to bound the value of $d_{G'}(s, w)$. We have
\[
d_{G'}(s, w) 
\le \ell_{G'}(s, v_0) + d_{G}(v_0, w)
= \ell_G(u_0, v_0)  + \td_0(u_0) - D/2 + d_G(v_0, w)
\]
where we used that $\{u_0, v_0\}$ is the last edge of $P_0$ crossing $S$, hence $d_{G[V(G) \setminus S]}(v_0, w) = d_G(v_0, w)$. Moreover, we can bound $\td_0(u_0) \le (1+\eps)d_{G}(s, u_0)$. So, we can write
\[
d_{G'}(s, w) 
\le (1+\eps) \left(
\ell_G(u_0, v_0)  + d_G(s, u_0)  + d_G(v_0, w)\right)- D/2
\]
and hence we can combine above bounds to get
\begin{align}
\td'(w) 
&\le \left( 1 + \eps_0\right)\left(1 + \eps \right) \left( \ell_G(u_0, v_0)  + d_G(s, u_0) + d_G(v_0, w)    \right) - (1+\eps_0)D/2 \\
&=  \left( 1 + \eps_0\right)\left(1 + \eps \right) d_G(s, w) - (1+\eps_0)D/2 \label{eq:first}
\end{align}

{\bf Second upper bound: }
Next, we need to upper bound the rest of the right hand side of \Cref{eq:split}, i.e., the expression $\td'(u) + \ell_G(u,v) - \td'(v)$. To do so, we first write
\[
\td'(u) \le (1+\eps_0)d_{G[S]}(s, u) \le (1+\eps_0)(1+\eps)d_{G}(s, u)
\]
where we used the inductive assumption on $\td'$ together with $d_{G[S]}(s, u) = d_{G'}(s, u)$ and then the property of $S$ from \Cref{def:cutting}. Then we write 
$$\td'(v) = \ell_{G'}(\{s, v\}) 
= \ell_G(u, v) + \td_0(u) - D/2
\ge \ell_G(u, v) + d_G(s, u) - D/2
$$
where we used that $\td_0$ is noncontractive. 
Plugging in, we get
\begin{align}
\td'(u) + \ell_G(u,v) - \td'(v) 
&\le (1+\eps_0)(1+\eps)d_{G}(s, u) + \ell_G(u,v) -  \ell_G(u,v) - d_G(s, u) + D/2\\
&\le (\eps_0 + 2\eps)d_G(s, u) + D/2.\label{eq:second}
\end{align}

{\bf Putting it together}
Putting \Cref{eq:first,eq:second} together, we conclude that 
\begin{align}
\label{eq:datlim}
    \td(w) \le \left( 1 + \eps_0\right)\left(1 + \eps \right) d_G(s, w) - \eps_0 D/2 + (\eps_0 + 2\eps)d_G(s, u) 
\end{align}
Since we have $d_G(s, u) \le (1+\eps)D/2$ by the second property in \Cref{def:cutting}, we can bound the error terms of the right hand side by 
\[
- \eps_0 D/2 + (\eps_0 + 2\eps)(1+\eps)D/2 \le 3\eps \cdot D/2.
\]
Finally, since $d_G(s, w) \ge D/2$ by definition of $S$ (\Cref{def:cutting}), we can simplify \Cref{eq:datlim} to 
\begin{align}
    \td(w) \le \left( 1 + \eps_0 + 3\eps \right) d_G(s, w) .
\end{align}
\end{proof}

\bibliographystyle{alpha}
\bibliography{ref}

\appendices

\section{What Can and Cannot be Done with Strong Distances}
\label{sec:application}

In this section we first show in \Cref{subsec:level_cuts} how the notions of strong distances can be used in practice to greatly simplify proofs of some known results. 
We then, on the other hand, show in \Cref{subsec:mpx} a problem solvable with exact distance oracle that we do not know how to solve even with the power of strong distances on our side. 

\subsection{A Success Story: Ball Growing}
\label{subsec:level_cuts}

Here we give a simple demonstration of the strength of strong distances. 
The problem that we solve is also called ``blurring'' or ``blurry ball growing''. In that problem, we are given a node (or in general a set of nodes) and we want to find a cluster around it of diameter $D$ such that every edge $e$ is cut by this ball only with probability $\ell_G(e)/D$. The cluster should be implementable in a parallel/distributed manner with a few calls to a distance oracle. 

We note that proofs of such results in \cite{becker_emek_lenzen2020blurry_ball_growing} and \cite{elkin_haeupler_rozhon_grunau2022Clusterings_LSST} are nontrivial and somewhat technical.
This is the case even for \cite{elkin_haeupler_rozhon_grunau2022Clusterings_LSST} where the authors work with the approximate distance oracle from \cite{rozhon_grunau_haeupler_zuzic_li2022deterministic_sssp} that, as explained in \Cref{sec:nobody-gets-strong}, gives not only approximate tree-like distances, but also approximate potential as a bonus (so we have both an approximate distance and potentials, but they are two different functions; in contrast a strong distance estimate has both properties at once).

This is in stark contrast with the fact that the result is trivial if we have access to an \emph{exact} distance oracle. 
With our transformation from off-the-shelf approximations to strong distances (\Cref{thm:results_undirected_smooth}) at hand, the proof is now not much more difficult then in the exact distance case as we show next. 

\begin{restatable}{theorem}{blurryInformal}[A variant of Theorem 5 in \cite{becker_emek_lenzen2020blurry_ball_growing}, Theorem 1.5 in \cite{elkin_haeupler_rozhon_grunau2022Clusterings_LSST} ]
\label{thm:blurry_informal}
Given an undirected weighted graph $G$, its node $s$ and a parameter $D > 0$, there is a randomized algorithm computing a set $S$ containing $s$ such that 
\begin{align}
    \label{eq:small}
\max_{v \in S} d_{G[S]}(s,v) \leq D
\end{align}
and moreover every edge $e \in E(G)$ has one endpoint in $S$ and the other in $V(G) \setminus S$ with probability at most
$
O\left(\ell(e)/D\right).
$
The algorithm uses $\poly\log n$ calls to a $(1 + 1/\poly\log n)$-approximate distance oracle. 
\end{restatable}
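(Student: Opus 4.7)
The plan is to reduce the blurry ball growing problem to the exact-distance case by invoking the strong-distance transformation of \Cref{thm:results_undirected_smooth} (which, combined with the tree-like construction of \Cref{thm:results_undirected_tree_like}, produces a strongly $(1+\eps)$-approximate distance estimate using $\poly\log n$ calls to a $(1+1/\poly\log n)$-approximate oracle). With such an estimate $\td$ in hand, we mimic the classical ball-growing argument: pick a random threshold $r$ uniformly in $[D/2, D]$ and let $S \defeq \{v : \td(v) \le r\}$. Both properties of strong distances will then be used in an essential way --- tree-likeness for the diameter bound, and smoothness for the edge-cutting bound.

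First I would choose $\eps$ to be a small constant (say $\eps = 1/10$) and apply the reductions of \Cref{thm:results_undirected_smooth} and \Cref{thm:results_undirected_tree_like} to compute a strongly $(1+\eps)$-approximate distance estimate $\td$ from $s$; after rescaling $D$ by the constant $(1+\eps)$, we may pretend $\td$ is $1$-smooth and tree-like for the purposes of the asymptotic bounds. The algorithm then samples $r \in [D/2, D]$ uniformly and outputs $S = \{v : \td(v) \le r\}$.

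For the diameter bound \Cref{eq:small}, I would use tree-likeness: given any $v \in S$, repeatedly apply \Cref{def:treelike} to obtain a sequence $v = v_0, v_1, v_2, \ldots, v_k = s$ of neighbors in $G$ with $\td(v_{i+1}) \le \td(v_i) - \ell(v_i, v_{i+1})$. Since $\td$ is strictly decreasing along this sequence (edges have positive length, using the standard preprocessing discussed after \Cref{def:treelike}) and $\td(s) = 0$, it terminates at $s$ after finitely many steps. Every intermediate vertex $v_i$ satisfies $\td(v_i) \le \td(v) \le r$ and therefore lies in $S$, and the total length of this path telescopes to at most $\td(v) \le r \le D$. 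Hence $d_{G[S]}(s, v) \le D$ as required.

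For the edge-cutting bound, I would use $\alpha$-smoothness: an edge $e = \{u,v\}$ is cut precisely when $r$ lies strictly between $\td(u)$ and $\td(v)$, so since $r$ is uniform on an interval of length $D/2$,
\[
\Pr[e \text{ is cut}] \le \frac{|\td(u) - \td(v)|}{D/2} \le \frac{2(1+\eps)\ell(e)}{D} = O\!\left(\frac{\ell(e)}{D}\right),
\]
where the middle inequality applies $\alpha$-smoothness in both orientations of the undirected edge. The oracle call count is dominated by the $\poly\log n$ cost of computing $\td$. I do not anticipate a significant technical obstacle; the only subtlety is ensuring that the reduction to positive edge lengths (needed for tree-likeness to yield a finite path back to $s$) is harmless, which is standard. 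The conceptual takeaway is that the proof is essentially as short as in the exact-distance setting --- precisely the point of introducing strongly approximate distances.
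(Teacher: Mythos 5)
Your proposal is correct and follows essentially the paper's own route: compute a strongly $(1+\eps)$-approximate estimate via \Cref{thm:results_undirected_smooth} (with the tree-like oracle supplied by \Cref{thm:results_undirected_tree_like}) and then perform the standard random-threshold ball cut, with tree-likeness giving the in-ball path to $s$ and smoothness giving the $O(\ell(e)/D)$ cut probability. The only cosmetic difference is that the paper invokes the perturbed-graph characterization of strong distances (\Cref{lem:smooth+treelike=cool}, i.e.\ exact distances on $G'$ with $\ell_G \le \ell_{G'} \le 2\ell_G$) so as to reuse the exact-distance argument verbatim, whereas you unfold the two defining properties directly --- these are interchangeable.
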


Let us now show a quick proof of this theorem. 

\begin{proof}
We first observe that the result would be simple to prove should  we have access to an \emph{exact} distance oracle $\fO$. In that case, we simply choose a uniformly random number $\tilde{D} \in [0, D)$ and define $S$ as the set of nodes of distance at most $\tD$ from $s$. This clearly makes \Cref{eq:small} satisfied since whenever a node $v$ is included in $S$, so are all the nodes on the shortest path from $s$ to $v$. Moreover, every edge $e = \{u,v\}$ is cut with probability $|d_G(s,u) - d_G(s,v)|/D \le \ell_G(e)/D$ as needed. 

Going back to the approximate distance oracle, we use \Cref{thm:results_undirected_smooth} with $\eps = 1$ to construct exact distances in some graph $G'$ with weights $\ell_G \le \ell_{G'} \le 2 \ell_G$ and then follow the above algorithm for exact distances but on $G'$ instead of $G$. 
On one hand, we have 
$\max_{v \in S} d_{G[S]}(S,v)  \le \max_{v \in S} d_{G'[S]}(S,v) \leq D$, on the other hand the probability of an edge being cut is now at most $|d_{G'}(s,u) - d_{G'}(s,v)|/D \le \ell_{G'}(e)/D \le 2 \ell_G(e)/D$ and we are done.
\end{proof}

\subsection{A Cautionary Tale: MPX}
\label{subsec:mpx}

We continue by describing a problem which is simple to solve with exact distance oracle but is not directly solved with approximate distances, even in view of our transformations.

The problem is simulating the randomized clustering algorithm of \cite{miller2013parallel} (called MPX next). 
In their algorithm, we have an undirected graph $G$ and a parameter $D > 0$. During the algorithm, each node $v$ first samples a random ``head start'' $h(v)$ which is a number from the exponential distribution with mean $D$. Next, we add a virtual node $\sigma$ and connect it with each node $v \in V(G)$ with an edge of length $O(D\log n) - h(v)$, the first term being there only to make the weight positive. Then we run the exact distance oracle from $\sigma$. The returned shortest path tree defines clusters of $G$ and this is the output of the MPX algorithm. 

It can be proven that, on one hand, with high probability all head starts satisfy $h(v) = O(D \log n)$, hence all edges have positive weight and the diameter of all clusters is at most $O(D \log n)$. 
On the other hand, each edge $e$ is cut by the clustering (that is, its endpoints are in different clusters) with probability $O(\ell(e) / D)$ (compare with \Cref{thm:blurry_informal}). 

It is not clear at all how to simulate this algorithm with the access to even a strongly $(1+\eps)$-approximate-distance oracle, that is, the distance oracle that returns exact distances on a slightly perturbed graph. 

To see this, let us leave the algorithm exactly the same except of using the strong-distance oracle instead of the exact distance oracle. 
The problem is that the distances in the graph $G \cup \{\sigma\}$ are (adversarially) perturbed by only \emph{after} the head starts $h(\cdot)$ are sampled. 
This is not a problem for the cluster diameter, which grows only by a $(1+\eps)$ multiplicative factor.  
However, the guarantee that every edge $e$ is cut only with probability $O(\ell(e)/D)$ does not hold anymore! 

To see this, consider a simple path-graph $G$ consisting of four nodes $u_0, u, v_0, v$ with $\ell_G(u_0, u) = \ell_G(v_0, v) = D$ and $\ell_G(u,v) = 1$. 
With constant probability, the head starts of both $u_0$ and $v_0$ are more than $D$ larger than both head starts $u,v$. 
Moreover, with probability $\Omega(\eps)$ we additionaly have $|h(u_0)- h(v_0)| \le \eps D$. 
In this case the algorithm constructs two clusters rooted in $u_0$ and $v_0$. 
Unfortunately, the adversarial perturbation of edge lengths can be such that in the perturbed graph we always cut the edge $\{u,v\}$. 
This means that we cut $\{u,v\}$ with probability $\Omega(\eps)$, while we require it to be cut only with probability $O(1/D)$ which would be achieved by exact distance oracle.

\section{Undirected APSP Variant of Smoothing}
\label{sec:APSP}

In this section we show how to adapt the algorithm of \Cref{sec:smooth} so that it works in undirected graphs and in the more general context of APSP. We formally prove the following \Cref{thm:smoothAPSP}. 

\begin{restatable}{theorem}{smoothAPSP}
\label{thm:smoothAPSP}
\Cref{alg:smoothing} and \Cref{alg:smoothing_lemma_apsp} compute $(1+\eps)$-approximate smooth APSP distances $\td(u,v)$ in an undirected graph $G$ using $O(\log n)$ calls to a $(1+O(\eps/\log n))$-approximate APSP distance oracle $\fO$ on undirected graphs. 
Moroever,
\begin{enumerate}
    \item If $\fO$ returns tree-like distances, $\td(s, \cdot)$ is also tree-like for every $s \in V(G)$. 
    \item If $\fO$ computes only single source approximate distances, the algorithm solves the single source variant of the problem. 
\end{enumerate}
\end{restatable}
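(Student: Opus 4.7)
My plan is to mirror the proof of \Cref{thm:smooth} by keeping the outer iteration \Cref{alg:smoothing} and replacing the single-source partial smoothing with an APSP variant \Cref{alg:smoothing_lemma_apsp}. The starting observation is that $(1+\eps)$-smoothness for an APSP estimate $\td(u,v)$ is equivalent to $(1+\eps)$-smoothness of $\td(s,\cdot)$ as a single-source distance for every source $s$. I will accordingly define $(\alpha,\delta)$-smoothness for APSP, and the partial smoothing will refine an $(\alpha,\delta)$-smooth APSP estimate to $((1+\eps)\alpha, \delta/2)$-smooth using $O(1)$ APSP oracle calls; iterating $O(\log n)$ times inside \Cref{alg:smoothing} yields the $(1+\eps)$-smooth conclusion as in the single-source case.

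The partial smoothing is built source by source. For each $s \in V(G)$ I form the two level graphs $H_1^s, H_2^s$ exactly as in \Cref{def:level_graph}, using $\td(s,\cdot)$ to define the levels. The single-source algorithm would call the SSSP oracle from the super-source $\sigma_s$ in each $H_i^s$, totalling $2n$ SSSP calls per iteration. I intend to bundle these into $O(1)$ APSP oracle calls by constructing, for each $i \in \{1,2\}$, a single graph that exposes one super-source $\sigma_s$ per source $s$ and whose APSP distances recover the quantity $\td_i(s,v)$ required by the single-source analysis for every pair $(s,v)$. The output of \Cref{alg:smoothing_lemma_apsp} is then $\td'(s,v) = \min(\td(s,v), \td_1(s,v), \td_2(s,v))$, mirroring the single-source case.

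The analysis then transplants from \Cref{sec:directed_algorithm} with almost no change. The notion of smoothing path and the two pivotal claims---smoothing paths have length at most $\delta/(\alpha\eps)$ (\Cref{cl:short_smoothing_paths}) and each such path is contained in $H_1^s$ or $H_2^s$ (\Cref{cl:contained})---apply per source. Consequently \Cref{prop:sandwich} and \Cref{prop:smooth} give $((1+\eps)\alpha, \delta/2)$-smoothness of $\td'(s,\cdot)$ for each $s$. Tree-likeness preservation follows by repeating the per-source argument from the end of the proof of \Cref{thm:smooth}: when $\td'(s,v)$ equals some $\td_i(s,v)$, the tree-likeness of the oracle on $H_i^s$ yields an in-graph predecessor of $v$. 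The SSSP-only case is immediate: invoke \Cref{thm:smooth} once per source.

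The main obstacle I anticipate is the bundling construction above: designing a single graph on which one APSP oracle call faithfully simulates the $n$ per-source SSSPs in different level graphs $H_i^s$. A naive union of all $H_i^s$ fails because edges are included or excluded differently depending on the source. Either one must accept a factor-$n$ blow-up in vertex count via disjoint copies of $V(G)$ indexed by source, or one replaces the source-specific edge deletions by a purely local min-plus post-processing on top of a single APSP call on $G$ with edges scaled by $\alpha(1+\eps)$, and carefully argues that the extra approximation error from shortest paths that may leak across levels is still absorbable into the $\delta/2$ budget by tightening the oracle accuracy $\eps''$ relative to $\eps$.
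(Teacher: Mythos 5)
Your outer structure (iterate a partial-smoothing step inside \Cref{alg:smoothing}, analyze smoothing paths per source, and transplant the analogues of \Cref{prop:sandwich} and \Cref{prop:smooth}) matches the paper, but the step you flag as "the main obstacle I anticipate" is precisely the technical heart of the paper's proof, and neither of the two ways out you sketch actually closes it. Taking disjoint copies of $V(G)$ indexed by the source makes the oracle call run on a graph with $\Theta(n^2)$ vertices and $\Theta(nm)$ edges, which is no longer a call to an APSP oracle on an instance comparable to $G$ and defeats the point of the reduction. Your second option---keep a single copy of $G$, do not cut level-crossing edges, and absorb the resulting "leakage" into the $\delta/2$ budget by tightening the oracle accuracy---does not work as stated: the error caused by a shortest path in the uncut graph entering through a node of an adjacent level is not proportional to the oracle's approximation error, so no choice of $\eps''$ absorbs it. Concretely, a node $v$ near the top boundary of its level can be reached from $\sigma_s$ via a node with tiny fractional part in the level above, making $\lfloor\td(s,v)\rfloor_i+\hd_i(\sigma_s,v)$ an underestimate by up to roughly $\omega$, which breaks the lower bound (noncontractivity) side of the sandwich inequality regardless of oracle accuracy.

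The paper resolves this structurally rather than by error absorption (\Cref{def:level_graph_apsp}, \Cref{alg:smoothing_lemma_apsp}): it keeps \emph{all} edges of $G$ in a single shared level graph, adds one super-source $\sigma_s$ per source with edges of length $(1+\eps/10^3)\{\td(s,v)\}_i+\omega$ (the additive $\omega$ decouples the super-sources from one another), only \emph{updates} nodes whose fractional part lies in the bottom half of their level, and uses \emph{three} level graphs shifted by $\omega/3$ (instead of your two shifted by $\omega/2$) so that both endpoints of every smoothing path land in a bottom half for at least one shift (\Cref{cl:contained3}). Crucially, undirectedness is then used to prove that shortest paths from $\sigma_s$ cannot profitably leak down from a higher level (\Cref{cl:structure_apsp}); this is exactly the argument that fails in directed graphs and that your "absorbable error" framing glosses over. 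The extra $(1+\eps/10^3)$ slow-down on the super-source edges is also needed to make the tree-likeness preservation argument go through, which your per-source transplant of the SSSP tree-likeness proof does not account for. So the proposal identifies the right difficulty but leaves the key construction and its analysis unproved, and the specific fallback it proposes would fail.
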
  

Notice that the second additional part of our theorem (undirected SSSP) is also a special case of our main theorem \Cref{thm:smooth}. That is, there are two (slightly) different smoothing algorithms for single source undirected smoothing.

The algorithm we use is again \Cref{alg:smoothing} but we replace its inner loop from \Cref{alg:smoothing_lemma} by its slightly modified version \Cref{alg:smoothing_lemma_apsp}. 
Let us first describe the necessary changes to the algorithm, then we analyse it by closely following the analysis of \Cref{sec:directed_algorithm}. 

The main challenge of the APSP case is the construction of the level graph $H_i$. 
In the new construction we add a new node $\sigma_u$ to $G$ for each node $u \in V(G)$, instead of just one node $\sigma$ as in the SSSP case. 
The problem is with cutting edges going between different levels ---  every node $\sigma_u$ would like the graph to be partitioned into different levels and hence would like to cut a different set of edges! 
Fortunately, in the undirected case we can make the smoothing construction work without cutting edges: We simply do not update nodes that are in the ``upper half'' of each level, since the shortest path to those nodes in the level graph can start at the above level (see \Cref{fig:level_graph_apsp} and \Cref{line} in \Cref{alg:smoothing_lemma_apsp}). 

This approach would not work in the directed case where we can have a very short edge $(u,v)$ from a node $u$ with very large $\lf \td(u) \rf$ but $\{\td(u)\} = 0$ to a node $v$ with smaller $\td(v) \ll \td(u)$. If we do not cut $(u,v)$ and run the distance oracle in the level graph $H_i$, we simply learn that $\td_i(s,v)$ is close to $0$, which is a meaningless information. 
Fortunately, this bad case simply cannot occur in undirected graphs: there, a short edge between $u$ and $v$ implies that $u$ lies in the same level as $v$ (unless $v$ lies at the boundary of its level, in which case we need to argue more carefully).   

There are a few other minor changes in \Cref{alg:smoothing_lemma_apsp} with respect to \Cref{alg:smoothing_lemma}. Due to technical reasons, we need three level graphs instead of two and we need to ``slow down'' also the newly added edges connecting $V(G)$ with $V(H_i) \setminus V(G)$, see \Cref{line:defH} in \Cref{alg:smoothing_lemma_apsp}.  

In \Cref{thm:smoothAPSP} we consider the case when the distance oracle returns only SSSP, not APSP distances. This is a special case of the following analysis with APSP oracles, so we next analyse only the case when $\fO$ is an APSP oracle. 

\paragraph{Definitions and the algorithm.}

We define rounding analogously to \Cref{def:rounding}. 

\begin{definition}[Rounding distances]
\label{def:rounding_apsp}
Let $G$ be a graph and $\td$ a distance estimate on it. Let $i \in \{1, 2, 3\}$ and $\omega$ a parameter. 
We define $\{\td(s, v)\}_i$ to be the smallest value such that $\td(s, v) - \{\td(s, v)\}_i - (i-1)\cdot \frac{\omega}{3}$ is divisible by $\omega$. 
We also define $\td(s, v) = \lfloor \td(s, v)\rfloor_i + \{\td(s, v)\}_i$. 
\end{definition}

Next, let us define the level graph analogously to \Cref{def:level_graph} with the main difference being that $H_i$ contains all edges of $G$, i.e., we are not cutting edges (see also \Cref{fig:level_graph_apsp}). 

\begin{figure}
    \centering
    \includegraphics[width = .6\textwidth]{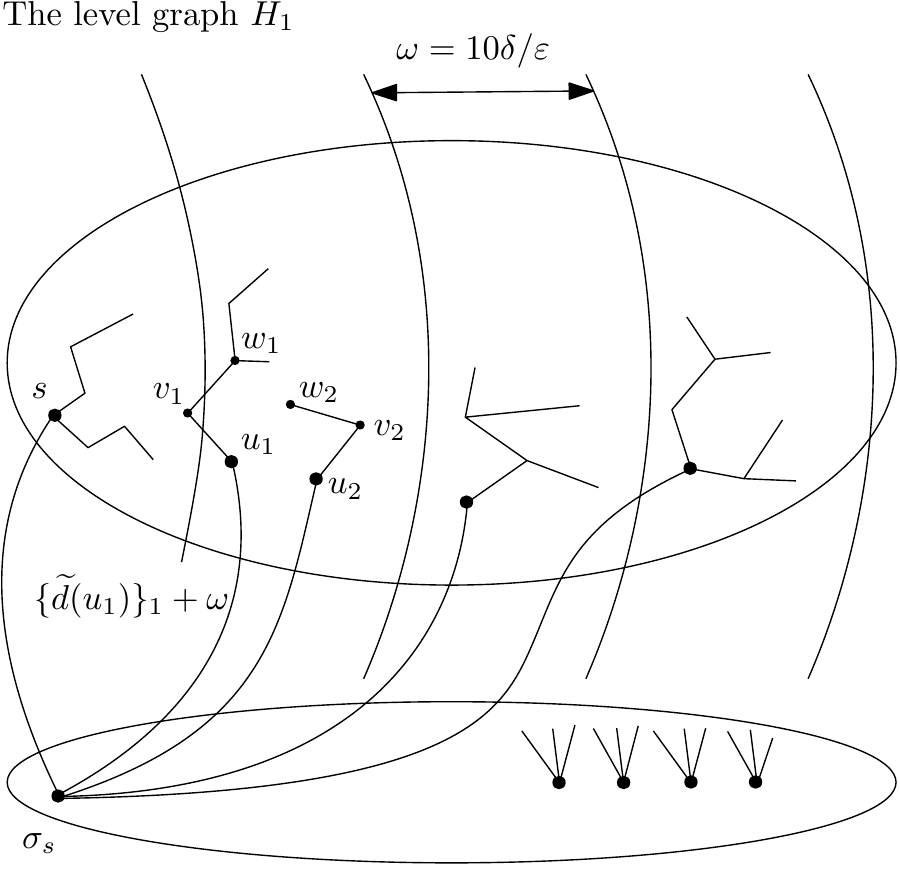}
    \caption{The figure shows the construction of the level graph $H_1$ in the undirected APSP case. We add a node $\sigma_s$ for each $s \in V(G)$. No edges of the original graph $G$ are cut. The edges from $\sigma_s$ to other nodes in $V(G)$ are defined as in the directed case but with additional term $+\omega$ so that $\sigma_s$ is not affected by other nodes $\sigma_{s'}$. \\
    We want to update in \Cref{alg:smoothing_lemma_apsp} only the distance estimate of nodes that lie in the ``first half'' of each level, i.e. $0 \le \{\td(v)\}_1 \le \omega/2$. This is because nodes close to the upper end of a level are affected by short distances to the low end of the next level. To make the new distances tree-like, whenever we update nodes $w_1, w_2$ with $0 \le \{\td(w_1)\}_1, \{\td(w_2)\}_1 \le \omega/2$, we also need to update their predecessors in the tree constructed from the tree-like property of $\hd$. Namely, we update $v_1$ although it lies in a smaller level but we behave as if its level is the same as the level of $u_1$. We also update the value of $v_2$ although $\{\td(v_2)\}_1 > \omega/2$. }
    \label{fig:level_graph_apsp}
\end{figure}\todo{fix picture}

\begin{definition}[Level graph $H_i$ for undirected APSP]
	\label{def:level_graph_apsp}
    Given an input undirected graph $G$, a width parameter $\omega$ and $i \in \{1, 2, 3\}$, we define the \emph{APSP level graph} $H_i$ as follows. 

We define $V(H_i) = V(G) \cup \bigcup_{u \in V(G)}\sigma_u$. 
    The set of edges $E(H_i)$ is a superset of $E(G)$, with $\ell_{H_i} (u,v) = \alpha (1+\eps) \ell_G(u,v)$.     
    Additionally, for each $\sigma_u$ and $v$ we add an edge $\{\sigma_u, v\}$ to $E(H_i)$ for any $u \in V(G)$ of weight $$(1+\eps/10^3)\{\td(u, v)\}_i + \omega.$$ 
\end{definition}

Comparing with \Cref{def:level_graph}, we have additional term $+\omega$ in the definition of edges $\{\sigma_u, v\}$. This is because we do not want the shortest paths from $\sigma_u$ to nodes of $V(G)$ be affected by other edges $\{\sigma_{u'}, v\}$ in $H_i$. This would not be necessary if we were allowed to make the edges of type $\{\sigma_u, v\}$ directed. 
The term $+\omega$ implies additional distortion of distances, but since we choose $(1+\eps/10^4)$-approximate oracle and $\eps\omega/10^4 \ll \delta/2$, this additional error term does not create new difficulties. 

We also ``slow down'' the edge by $(1+\eps/10^3)$. This is only needed to argue about the final distances being tree-like at the end of the proof.

\begin{algorithm}[ht]
	\caption{Partial Smoothing algorithm for undirected APSP}
	\label{alg:smoothing_lemma_apsp}
	{\bf Input:} An undirected graph $G$, parameter $\eps$, approximate APSP oracle $\fO$, $(\alpha,\delta)$-smooth distance estimate $\td(\cdot, \cdot)$.\\
        {\bf Oracle:} $\fO(G, \eps_0)$ returns $(1+\eps_0)$-approximate APSP distances $\hd : V(G) \times V(G) \to \R_{\ge 0}$ in $G$.\\
	{\bf Output:} $((1+2\eps)\alpha, \delta /2)$-smooth distance estimate $\td'(\cdot, \cdot)$
	\begin{algorithmic}[1]
            \State Define $H_1, H_2, H_3$ as in \Cref{def:level_graph_apsp} for $\omega = 100\delta/\eps$ and $\ell(\sigma_s, u) = (1+\eps/10^3)\{\td(s, v)\}_i + w$ \label{line:defH}
            \For{$i \leftarrow 1, 2, 3$}
                \State $\hd_i(\cdot, \cdot) \leftarrow \fO(H_i, \eps/10^4)$
                \For{all $s \in V(G)$}
                    \For{all $v \in V(G)$} \label{line:inner_start}
                        \If{$0 \le \{\td(s,w)\}_i < \omega/2$} \label{line}
                            \State $\td_i(s, v) \leftarrow \lfloor \td(s, u) \rfloor_i + \hd_i(\sigma_s, v) - \omega$
                        \Else
                            \State $\td_i(s,v)  \leftarrow +\infty$ \label{line:inner_end}
                        \EndIf                    
                    \EndFor
                    \EndFor
            \EndFor
            \State \Return $\td'$ defined by $\td'(s, u) = \min\left( \td(s, u), \td_1(s, u ), \td_2(s, u ), \td_3(s, u) \right)$
	\end{algorithmic}
\end{algorithm}

We will now analyse \Cref{alg:smoothing_lemma_apsp}. 
The analysis is very similar to the directed SSSP case from \Cref{sec:directed_algorithm} but we need to argue a bit more carefully about interactions of different levels of the level graph $H_i$. 

Analogously to \Cref{sec:directed_algorithm}, for parameters $\eps, \alpha$ and a node $s \in G$ we define a path $P \subseteq G$ \emph{smoothing} whenever 
\begin{align}
    \label{def:smoothing_apsp}
    \td(s, v) \ge \td(s, u) + \alpha(1+\eps/2)\ell_G(P)
\end{align}
and use its following properties.

\begin{claim}{Analogue of \Cref{cl:short_smoothing_paths}}
\label{cl:short_smoothing_paths_apsp}
Any smoothing path $P$ satisfies $\ell_G(P) \leq 2\delta/(\alpha\eps).$
\end{claim}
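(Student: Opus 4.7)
The plan is to adapt the proof of \Cref{cl:short_smoothing_paths} verbatim, accounting only for the change from $\alpha(1+\eps)$ to $\alpha(1+\eps/2)$ in the definition of smoothing paths in the APSP setting (\Cref{def:smoothing_apsp}). The argument proceeds by combining the lower bound on $\td(s,v) - \td(s,u)$ coming from $P$ being smoothing with the upper bound coming from the $(\alpha,\delta)$-smoothness hypothesis on $\td$.

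Concretely, suppose $P$ is a smoothing path from $u$ to $v$ with respect to some source $s$. By definition,
\[
\alpha(1+\eps/2)\ell_G(P) \le \td(s,v) - \td(s,u).
\]
On the other hand, the APSP analogue of $(\alpha,\delta)$-smoothness applied to the pair $(u,v)$ yields
\[
\td(s,v) - \td(s,u) \le \alpha \cdot d_G(u,v) + \delta \le \alpha \cdot \ell_G(P) + \delta,
\]
where the last inequality uses $d_G(u,v) \le \ell_G(P)$ (since $P$ is a $u\text{-to-}v$ path in $G$).

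Chaining the two inequalities gives
\[
\alpha(1+\eps/2)\ell_G(P) \le \alpha \cdot \ell_G(P) + \delta,
\]
which simplifies to $(\alpha\eps/2)\ell_G(P) \le \delta$, and therefore $\ell_G(P) \le 2\delta/(\alpha\eps)$, as claimed.

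There is no real obstacle here: the proof is an exact transliteration of the SSSP case, and the only notational change is that the smoothness hypothesis is read off as a statement about $\td(s,\cdot)$ for the fixed source $s$. The factor of $2$ in the bound (relative to $\delta/(\alpha\eps)$ in the directed SSSP case) comes purely from choosing $\eps/2$ rather than $\eps$ as the slowdown parameter in \Cref{def:smoothing_apsp}. This extra slack is later absorbed harmlessly into the $((1+2\eps)\alpha, \delta/2)$-smoothness guarantee of \Cref{alg:smoothing_lemma_apsp}, so no tightening is needed.
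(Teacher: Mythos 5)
Your proof is correct and is essentially identical to the paper's own argument: both chain the smoothing-path lower bound $\alpha(1+\eps/2)\ell_G(P) \le \td(s,v)-\td(s,u)$ with the $(\alpha,\delta)$-smoothness upper bound $\td(s,v)-\td(s,u) \le \alpha\,\ell_G(P) + \delta$ and simplify. Nothing to add.
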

\begin{proof}
The proof is the same as in \Cref{cl:short_smoothing_paths}: We have by definition that a smoothing path $P$ from $u$ to $v$ satisfies
$$\alpha(1+\eps/2)\ell_G(P) \leq \tilde{d}(s, v)-\tilde{d}(s, u). $$
By assumption that $\tilde{d}(s, \cdot)$ is $(\alpha, \delta)$-smooth, we have
$$ \tilde{d}(s, v)-\tilde{d}(s, u) \leq \alpha\cdot d_G(u, v) + \delta \leq \alpha \cdot \ell_G(P) + \delta.$$
Combining these gives the desired inequality.
\end{proof}

\begin{claim}{Analogue of \Cref{cl:contained}}
\label{cl:contained_apsp}
Let $P$ be a smoothing path. Let $w_{\max} = \argmax_{w\in P}\,\tilde{d}(w)$  and $w_{\min} = \argmin_{w\in P}\,\tilde{d}(w)$. 
Then $ \tilde{d}(w_{\max})-\tilde{d}(w_{\min}) \le 6\delta/\eps$. 
\end{claim}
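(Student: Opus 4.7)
The plan is to mirror the first half of the proof of \Cref{cl:contained}, lightly adapted to the APSP setting. In that earlier proof, the bound $\td(w_{\max}) - \td(w_{\min}) \le O(\delta/\eps)$ was obtained from only the $(\alpha, \delta)$-smoothness of $\td$ plus the length bound \Cref{cl:short_smoothing_paths} on smoothing paths; the level-graph containment conclusion was drawn only at the very end. The present claim asks for exactly the intermediate inequality, so no level-graph reasoning is needed. Throughout, I interpret $\td(w)$ in the claim as $\td(s, w)$ for a fixed source $s$ witnessing that $P$ is a smoothing path per \Cref{def:smoothing_apsp}.

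First I will let $u, v$ be the endpoints of $P$, so that $\td(s, v) \ge \td(s, u) + \alpha(1+\eps/2)\ell_G(P) \ge \td(s,u)$ by the smoothing definition. Then I will introduce the telescoping identity
\[
\td(s, w_{\max}) - \td(s, w_{\min}) = \bigl(\td(s, w_{\max}) - \td(s, u)\bigr) + \bigl(\td(s, u) - \td(s, v)\bigr) + \bigl(\td(s, v) - \td(s, w_{\min})\bigr),
\]
drop the middle term (which is nonpositive), and apply the $(\alpha, \delta)$-smoothness of $\td(s, \cdot)$ to the two pairs $(u, w_{\max})$ and $(w_{\min}, v)$. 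Since both $w_{\max}$ and $w_{\min}$ lie on $P$, I can bound $d_G(u, w_{\max})$ and $d_G(w_{\min}, v)$ each by $\ell_G(P)$, yielding $\td(s, w_{\max}) - \td(s, w_{\min}) \le 2\alpha\ell_G(P) + 2\delta$.

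Finally I will invoke \Cref{cl:short_smoothing_paths_apsp} to substitute $\ell_G(P) \le 2\delta/(\alpha\eps)$, giving the bound $4\delta/\eps + 2\delta$, which simplifies to $6\delta/\eps$ using $\eps \le 1$. I anticipate no real obstacle: the directed-versus-undirected asymmetry that motivates the APSP section (namely, the fact that we keep all original edges in $H_i$ and pay an extra $+\omega$ on the $\sigma_u$-edges) does not enter this particular claim, because smoothness is symmetric in its two arguments and the subpath length bound holds in any graph. The only minor bookkeeping difference from \Cref{cl:contained} is the factor of $2$ in the smoothing-path length bound, which is what makes the constant $6\delta/\eps$ instead of $3\delta/\eps$.
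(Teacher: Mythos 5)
Your proposal is correct and matches the paper's proof essentially verbatim: the paper likewise adds the nonnegative quantity $\tilde{d}(v)-\tilde{d}(u)$ (equivalently, drops your nonpositive middle term), applies $(\alpha,\delta)$-smoothness to the pairs $(u, w_{\max})$ and $(w_{\min}, v)$ with $d_G \le \ell_G(P)$, and then invokes \Cref{cl:short_smoothing_paths_apsp} to get $4\delta/\eps + 2\delta \le 6\delta/\eps$. No differences worth noting.
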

\begin{proof}
We have
\begin{align*}
    \tilde{d}(w_{\max})-\tilde{d}(w_{\min}) 
    &\leq \tilde{d}(w_{\max})-\tilde{d}(w_{\min}) + \tilde{d}(v)-\tilde{d}(u) \\
    & = \left(\tilde{d}(w_{\max})-\tilde{d}(u)\right) + \left(\tilde{d}(v)-\tilde{d}(w_{\min})\right)\\
    & \leq \alpha \ell_G(P) + \delta + \alpha \ell_G(P) + \delta \\
    & \leq 2(2\eps^{-1}+1)\delta \le 6\delta/\eps. 
\end{align*} 
where we first used the $(\alpha, \delta)$-smoothness assumption on $(u, w_{\max})$ and $(w_{\min}, v)$ and then we applied \Cref{cl:short_smoothing_paths_apsp}. 
\end{proof}

The fact that smoothing paths cannot be long then implies that both endpoints of any smoothing path are both in the ``bottom half'' of at least one layer of $H_1, H_2$ or $H_3$. The following claim is the place where we need three graphs instead of two as in \Cref{sec:directed_algorithm}. 

\begin{claim}
\label{cl:contained3}
Let $P$ be any smoothing path from $u$ to $v$. Then, there exists $i \in \{1,2,3\}$ such that 
\[
0 \le \{ \td(s, u) \}_i \le \{ \td(s, v) \}_i < \omega/2.
\]
\end{claim}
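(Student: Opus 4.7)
The plan is to reduce the claim to a short modular-arithmetic covering argument. First I would invoke Claim \ref{cl:contained_apsp} together with the defining inequality $\td(s,v) \ge \td(s,u) + \alpha(1+\eps/2)\ell_G(P)$ of a smoothing path (which forces $\td(s,u) \le \td(s,v)$, and which together with $u,v \in P$ places both values between $\td(w_{\min})$ and $\td(w_{\max})$) to conclude
\[
0 \;\le\; \td(s,v) - \td(s,u) \;\le\; \td(w_{\max}) - \td(w_{\min}) \;\le\; 6\delta/\eps.
\]
Since $\omega = 100\delta/\eps$, this gap is at most $0.06\omega$, comfortably smaller than $\omega/2$.

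Next I would set up the covering. Call the real number $\td(s,u)$ \emph{$i$-safe} if $\{\td(s,u)\}_i \in [0,\,0.44\omega)$, i.e.\ it lies strictly inside the lower half of its layer in $H_i$, with room at least $0.06\omega$ to the midpoint $\omega/2$. If $\td(s,u)$ is $i$-safe, then because the increment $\td(s,v) - \td(s,u) \le 0.06\omega$ satisfies $\{\td(s,u)\}_i + 0.06\omega < \omega/2 < \omega$, it does not cross a layer boundary of $H_i$. Hence $\lfloor \td(s,v)\rfloor_i = \lfloor \td(s,u)\rfloor_i$ and $\{\td(s,v)\}_i = \{\td(s,u)\}_i + (\td(s,v)-\td(s,u))$, from which the desired chain $0 \le \{\td(s,u)\}_i \le \{\td(s,v)\}_i < \omega/2$ follows at once.

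It therefore suffices to show that for every real $x$ there is some $i \in \{1,2,3\}$ for which $x$ is $i$-safe. The set of $i$-safe residues modulo $\omega$ is the length-$0.44\omega$ interval $[(i-1)\omega/3,\; (i-1)\omega/3 + 0.44\omega)$, wrapped around $[0,\omega)$. Computing these directly one obtains $[0,\,0.44\omega)$, $[\omega/3,\,0.773\omega)$, and (wrapping) $[2\omega/3,\omega) \cup [0,\,0.107\omega)$, whose union is $[0,\,0.773\omega) \cup [2\omega/3,\omega) = [0,\omega)$, using $2\omega/3 < 0.773\omega$. I do not foresee any real obstacle: the argument reduces to the elementary fact that three intervals of length $0.44\omega$ placed at offsets $0,\omega/3,2\omega/3$ cover a period of length $\omega$, which holds with slack because $0.44\omega$ exceeds the shift step $\omega/3$. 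The only minor point worth spelling out carefully in the write-up is the wrap-around for $i=3$, which is what necessitated working with three shifted level graphs rather than two.
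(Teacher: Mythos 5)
Your proof is correct and follows essentially the same route as the paper: choose the offset $i$ so that $\{\td(s,u)\}_i$ lands low enough in its level (the paper takes the exact tiling by intervals of length $\omega/3$, you take overlapping intervals of length $0.44\omega$), then use the fact that a smoothing path forces $0 \le \td(s,v)-\td(s,u) \le O(\delta/\eps) \ll \omega$ so no level boundary is crossed. The only cosmetic difference is that you bound the increment by $6\delta/\eps$ via \Cref{cl:contained_apsp}, whereas the paper gets $3\delta/\eps$ directly from $(\alpha,\delta)$-smoothness and \Cref{cl:short_smoothing_paths_apsp}; both comfortably suffice.
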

\begin{proof}
Choose $i$ to be such that $0 \le \{ \td(s, u) \}_i < \omega/3$. Then, the $(\alpha, \delta)$-smoothness of $\td$ and \Cref{cl:short_smoothing_paths_apsp} yield $\td(s,v) \le \td(s,u) + \alpha \ell_G(P) + \delta < \omega/3 + \alpha ( 2\delta/(\alpha \eps)) + \delta \le \omega/3 + 3\delta/\eps$. We use the definition of $\omega = 100\delta/\eps$ to get that $3\delta/\eps < \omega/6$ and thus $\{ \td(s,v) \}_i < \omega/2$ as needed. 
\end{proof}

The main difference to the proof of \Cref{thm:smooth} is that we need to argue more carefully about shortest paths in $H_i$ behaving as in \Cref{fig:level_graph_apsp}. 

\begin{claim}{Structure of shortest paths in $H_i$}
\label{cl:structure_apsp}
Given $s, v \in V(G)$, let $P \subseteq H_i$ be either
\begin{enumerate}
    \item a shortest path from $\sigma_s$ to some $v \in V(G)$ in $H_i$. 
    \item if $\hd(\sigma_s, \cdot)$ is tree-like, $P$ may also be the path that we get by starting at $v$ and each time going from current node $u$ to a node $w$ with $\hd_i(s, w) \le \hd_i(s, u) - \ell_{H_i}(w, u)$ until we reach $\sigma_s$. 
\end{enumerate}
Let $\{\sigma_s, u'\}$ be the first edge of $P$ and let us write $P = \{\{\sigma_s, u'\}\} \cup P'$. Then 
\begin{enumerate}
    \item $P' \subseteq G$. 
    \item $\lfloor \td(s, u') \rfloor_i = \lfloor \td(s, v)  \rfloor_i$. 
    \item $P'$ is a smoothing path from $\sigma_s$ to $v$. 
\end{enumerate}

\end{claim}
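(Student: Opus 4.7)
The proof will establish the three conclusions in order. The main structural lever is the large additive term $+\omega$ baked into every $\sigma$-incident edge in the definition of $H_i$, which makes routing through any auxiliary $\sigma_{u''}$ prohibitively expensive and forces $P$ into a very rigid shape.

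\textbf{Part 1 (no internal $\sigma$-edges).} I first bound $\ell_{H_i}(P)$. In case (1), since $P$ is a shortest path, $\ell_{H_i}(P) \le \ell_{H_i}(\sigma_s,v) = (1+\eps/10^3)\{\td(s,v)\}_i+\omega < 2\omega(1+\eps/10^3)$ using $\{\td(s,v)\}_i<\omega$. In case (2), summing the tree-like inequalities along $P$ telescopes to $\ell_{H_i}(P) = \hd_i(\sigma_s,v) \le (1+\eps/10^4)\,d_{H_i}(\sigma_s,v) < 2\omega(1+O(\eps))$. Now any walk from $\sigma_s$ to $v$ that passes through some $\sigma_{u''}$ with $u''\neq s$ must traverse at least three $\sigma$-incident edges (the starting one, plus entering and leaving $\sigma_{u''}$), each of weight at least $\omega$, contributing a total of at least $3\omega$. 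This exceeds the above upper bound for $\eps$ small enough, so $P$ uses only the initial $\sigma$-edge and $P'\subseteq G$ follows.

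\textbf{Part 2 (same level).} By Part 1, $\ell_{H_i}(P') = \alpha(1+\eps)\,\ell_G(P') \ge \alpha(1+\eps)\,d_G(u',v)$, so comparing the total weight of $P$ with the direct edge $\{\sigma_s,v\}$ (and applying the oracle's approximation factor in case (2)) yields
\begin{align*}
    (1+\eps/10^3)\{\td(s,u')\}_i + \alpha(1+\eps)\,d_G(u',v) \le (1+O(\eps))(1+\eps/10^3)\{\td(s,v)\}_i.
\end{align*}
This forces $\{\td(s,u')\}_i \le (1+O(\eps))\{\td(s,v)\}_i$ and $\alpha\,d_G(u',v) = O(\omega)$. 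Combined with the $(\alpha,\delta)$-smoothness bound $|\td(s,u')-\td(s,v)| \le \alpha\,d_G(u',v)+\delta$ and the choice $\omega = 100\delta/\eps \gg \delta$, the only possibility consistent with the fractional-part ordering and the integer structure $\lfloor\td\rfloor_i \in (i-1)\omega/3 + \omega\Z$ is $\lfloor \td(s,u')\rfloor_i = \lfloor \td(s,v)\rfloor_i$.

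\textbf{Part 3 (smoothing path).} With $u'$ and $v$ in the same level by Part 2, $\td(s,v)-\td(s,u') = \{\td(s,v)\}_i-\{\td(s,u')\}_i$, so the inequality from Part 2 rearranges to
\begin{align*}
    \td(s,v) \ge \td(s,u') + \frac{\alpha(1+\eps)}{1+\eps/10^3}\,d_G(u',v) \ge \td(s,u') + \alpha(1+\eps/2)\,\ell_G(P'),
\end{align*}
using $d_G(u',v) = \ell_G(P')$ (for a shortest path, and analogously via the tree-like telescoping in case (2)). This matches the definition of a smoothing path from $u'$ to $v$ (the statement's ``from $\sigma_s$ to $v$'' appears to be a typo).

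\textbf{Main obstacle.} The delicate step is Part 2, specifically the boundary case $\{\td(s,v)\}_i \approx \omega$. There the direct $\sigma_s$-edge to $v$ costs $\approx 2\omega$, while a neighbor $u'$ just across the next level boundary has $\{\td(s,u')\}_i \approx 0$, making the path $\sigma_s\to u'\to v$ cheaper at weight $\approx \omega + \alpha$; this would violate the same-level conclusion. The triple of shifted level graphs $H_1,H_2,H_3$ and the algorithm's update rule $\{\td(s,v)\}_i<\omega/2$ together circumvent this pathology: for any $v$, at least one of the three shifts places $\{\td(s,v)\}_i$ below $\omega/2$, and in that regime smoothness combined with $\omega \gg \delta$ rules out any level crossing being competitive with the direct edge. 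I expect the formal Part 2 argument to make this regime restriction explicit (either as a hypothesis of the claim or inherited from the algorithm's invocation pattern).
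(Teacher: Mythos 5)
Your proposal follows essentially the same route as the paper's proof: Part 1 via the observation that any detour through an auxiliary node $\sigma_{u''}$ would cost at least $3\omega$ while $P$ costs less than roughly $2\omega$ (in case (2) via telescoping the tree-like inequalities and the oracle's approximation factor), and Parts 2 and 3 by comparing $\ell_{H_i}(P)$ against the direct edge $\{\sigma_s, v\}$ and combining with $(\alpha,\delta)$-smoothness, exactly as the paper does. One caveat, which you flag yourself: your Part 2 paragraph as written does not rule out $\lfloor \td(s,u')\rfloor_i > \lfloor \td(s,v)\rfloor_i$ when $\{\td(s,v)\}_i$ is close to $\omega$, and indeed this case cannot be excluded in general; the paper's proof closes it by invoking $\{\td(s,v)\}_i < \omega/2$, inherited from the line of \Cref{alg:smoothing_lemma_apsp} that sets $\td_i(s,v) = +\infty$ otherwise (so the restriction is ``inherited from the invocation pattern,'' as you guessed, rather than stated in the claim), and it applies smoothness along the \emph{reversed} path $P'$ from $v$ to $u'$ --- this is where the undirectedness of $G$ is used crucially, a point worth making explicit in your argument. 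Your reading of ``smoothing path from $\sigma_s$ to $v$'' as a typo for ``from $u'$ to $v$'' also matches the paper's usage, and your handling of case (2) (absorbing the extra $(1+\eps/10^4)$ factor as a small additive error) mirrors the paper's remark at the end of its proof.
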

\begin{proof}
Assume that $P$ is the shortest path from $\sigma_s$ to $v$, we discuss the other possibility from the claim statement at the end. We first note that $P' \subseteq G$. Otherwise, the length of $P$ is at least $3\omega$ as it has to contain at least $3$ edges from $V(G) \times (V(H_i) \setminus V(G))$, while the edge from $\sigma_s$ to $v$ has length only at most $2\omega$, contradicting that $P$ is a shortest path. 

Next, we prove that $\lfloor \td(s, u') \rfloor_i = \lfloor \td(s, v)  \rfloor_i$. To that end, first assume that $\lfloor \td(s, u') \rfloor_i < \lfloor \td(s, v)  \rfloor_i$. In this case, we have on one hand that 
\begin{align*}
    \ell_{H_i}(P) \le \ell_{H_i}(\sigma_s, v) = (1+\eps/10^3)\{\td(s, v)\}_i + w
\end{align*}
because there is an edge between $\sigma_s$ and $v$. 
But on the other hand we have
\begin{align*}
\ell_{H_i}(P) 
&= \ell_{H_i}(\sigma_s, u') + \ell_{H_i}(P') \\
&\ge (1+\eps/10^3)\{\td(s, u')\}_i + w + \alpha(1+\eps)\frac{\td(s, v) - \td(s, u') - \delta}{\alpha}
\end{align*}
where we used $(\alpha, \delta)$-smoothness of $u',v$. 
If we put these two bounds together, we get
\begin{align*}
(1+\eps/10^3)\{\td(s, v)\}_i \ge (1+\eps/10^3)\{\td(s, u')\}_i +  (1+\eps)(\td(s, v) - \td(s, u') - \delta)
\end{align*}
which implies that
\begin{align*}
\{\td(s, u')\}_i + (1+\eps) (\lfloor \td(s, v) \rfloor_i - \td(s, u')) - 2\delta \le 0
\end{align*}
and we can simplify it further to
\begin{align*}
    \eps ( \lfloor \td(s, v) \rfloor_i - \lfloor \td(s, u') \rfloor_i ) \le 2\delta
\end{align*}
Finally, our assumption on $u'$ then implies $\eps \omega \le 2\delta$, a contradiction with the definition of $\omega$. 

Next, assume that $\lfloor \td(s, u') \rfloor_i > \lfloor \td(s, v) \rfloor_i$. We again derive a contradiction. 
As a small aside, we note that the following argument is the place where we rely on \Cref{line:inner_end} in \Cref{alg:smoothing_lemma_apsp}, i.e., we set $\td_i(s, v) \le +\infty$ only for $0 \le \{\td(s, v) \}_i < \omega/2$. Also, this is a place where we crucially use the fact $G$ is undirected. 

As in the previous case we start by writing 
\begin{align*}
    \ell_{H_i}(P) \le \ell_{H_i}(\sigma_s, v) = (1+\eps/10^3)\{\td(s, v)\}_i + w
\end{align*}
and similarly to the previous case we write
\begin{align*}
\ell_{H_i}(P) 
&= \ell_{H_i}(\sigma_s, u') + \ell_{H_i}(P') \\
&\ge (1+\eps/10^3)\{\td(s, u')\}_i + w + \alpha(1+\eps)\frac{\td(s, u') - \td(s, v) - \delta}{\alpha}.
\end{align*}
Notice that for the second bound we use the fact that as $G$ is undirected, $P'$ is also the shortest path from $v$ to $u'$. 
We again put the two bounds together and simplify, we get
\begin{align*}
    (1+\eps/10^3)\{\td(s, v)\}_i
    \ge (1+\eps)(\td(s, u') - \td(s, v) - \delta)
\end{align*}
Note that $\{ \td(s, v) \}_i < \omega/2$ together with our assumption on $u'$ implies $\td(s, u') - \td(s, v) \ge \omega/2$, hence we can simplify to 
\begin{align*}
    (1+\eps/10^3)\omega/2
    \ge (1+\eps)(\omega/2 - \delta)
\end{align*}
which is, however, in contradiction with our definition $\omega = 100\delta/\eps$. This finishes the proof that $\lfloor \td(s, u') \rfloor_i = \lfloor \td(s, v)  \rfloor_i$ (a second bullet point in the claim statement). 

We finish by showing that $P'$ is a smoothing path. We again start by writing
\begin{align*}
\ell_{H_i}(P) \le (1+\eps/10^3)\{\td(s, v)\}_i + w
\end{align*}
which in turn implies
\begin{align*}
\ell_{H_i}(P') \le (1+\eps/10^3)\{\td(s, v)\}_i - (1+\eps/10^3)\{\td(s, u')\}
\end{align*}
Using the fact that lengths in $H_i$ are multiplied by the factor of $\alpha(1+\eps)$, we conclude that 
\begin{align*}
    \{\td(s, v)\}_i \ge \{\td(s, u')\}_i + \alpha \frac{1+\eps}{1+\eps/10^3} \ell_{H_i}(P' )
\end{align*}
which implies the smoothness of $P' $ as needed. 

Finally, consider $P$ to be not the shortest path but the path defined in the second bullet point of the claim statement. The proof for this case is very similar and we thus omit it. We note that  in calculations bounding $\ell_{H_i}(P)$, we need to replace the bound $$\ell_{H_i}(P) \le (1 + \eps/10^3) \{\td(s, v)\}_i + w$$ by  $$\ell_{H_i}(P) \le (1 + \eps/10^4) \left( (1 + \eps/10^3) \{\td(s, v)\}_i + w \right).$$
This leads to a small additional error term of size at most $\frac{\eps}{10^4} \cdot 3w$ that does not affect the conclusion of above argument. 
\end{proof}

We now follow the single source case and prove an analogue of \Cref{prop:sandwich}. 

\begin{proposition}\label{prop:sandwich_apsp} 
    The output function $\td'(s, \cdot)$  in \Cref{alg:smoothing_lemma_apsp} satisfies for any $s$ that 
    \begin{equation}\label{eq:smoothformula_apsp}  
    \min_{u \in V(G)} \left( \tilde{d}(s, u) + \alpha(1+\eps) d_G(u, v) \right) 
    \leq \tilde{d}'(s, v) 
    \leq  \min_{u \in V(G)} \left( \tilde{d}(s, u) + \alpha(1+\eps)  d_G(u, v) \right) + \delta/2,
    \end{equation}
    \end{proposition}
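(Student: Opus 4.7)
The plan is to mirror the two-sided sandwich proof of Proposition~\ref{prop:sandwich}, adapted to the APSP setting where each $H_i$ has many supersources $\sigma_s$ and the supersource edges carry both an additive $+\omega$ and a multiplicative $(1+\eps/10^3)$ correction. The main technical input is Claim~\ref{cl:structure_apsp}, which pins down the shape of shortest $\sigma_s$-to-$V(G)$ paths in $H_i$: they first hop via a single edge $\{\sigma_s, u'\}$ into $V(G)$ at a node $u'$ satisfying $\lfloor \td(s,u') \rfloor_i = \lfloor \td(s,v) \rfloor_i$, and then stay entirely inside $G$.

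For the lower bound, I split on whether $\td'(s,v) = \td(s,v)$ (in which case choosing $u=v$ on the right-hand side is trivial) or $\td'(s,v) = \td_i(s,v)$ for some $i \in \{1,2,3\}$. In the second case, I expand $\td_i(s,v) = \lfloor \td(s,v) \rfloor_i + \hd_i(\sigma_s, v) - \omega$, use noncontractivity to replace $\hd_i$ by $d_{H_i}$, and invoke Claim~\ref{cl:structure_apsp} on the shortest $\sigma_s$-to-$v$ path $\{\sigma_s,u'\} \cup P'$. The piece $P' \subseteq G$ contributes at least $\alpha(1+\eps) d_G(u',v)$, while the supersource edge contributes $(1+\eps/10^3)\{\td(s,u')\}_i + \omega$. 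The $\omega$ terms cancel, and $\lfloor \td(s,u') \rfloor_i = \lfloor \td(s,v) \rfloor_i$ lets me reassemble $\td(s,u')$ on the right, giving $\td'(s,v) \ge \td(s,u') + \alpha(1+\eps) d_G(u',v)$, which is at least the minimum over all choices of $u$.

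For the upper bound, I let $u$ be the minimizer on the right-hand side and let $P$ be a shortest $u$-to-$v$ path in $G$. Since $u=v$ is a candidate in the minimum, $\td(s,u) + \alpha(1+\eps) d_G(u,v) \le \td(s,v)$, so $P$ is a smoothing path in the sense of \Cref{def:smoothing_apsp}. Claim~\ref{cl:contained3} then produces an $i$ with $0 \le \{\td(s,u)\}_i \le \{\td(s,v)\}_i < \omega/2$. Together with $\td(s,v) - \td(s,u) \le \alpha \ell_G(P) + \delta \le 3\delta/\eps < \omega/2$ (from Claim~\ref{cl:short_smoothing_paths_apsp} and $\omega = 100\delta/\eps$), a short arithmetic argument forces $\lfloor \td(s,u) \rfloor_i = \lfloor \td(s,v) \rfloor_i$, which ensures $\td_i(s,v)$ is actually updated (not $+\infty$). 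I then upper-bound $\hd_i(\sigma_s,v)$ using the concrete $H_i$-path $\sigma_s \to u \to \cdots \to v$; here I also need to observe that detours through any other supersource $\sigma_{s'}$ cost at least $2\omega$, while the direct $G$-path in $H_i$ costs $\alpha(1+\eps) d_G(u,v) \le 2(1+\eps)\delta/\eps \ll \omega$, so shortcuts via other supersources never help. The main obstacle will be a careful accounting of all small error terms --- the $(1+\eps/10^4)$ from the oracle, the $(1+\eps/10^3)$ on the $\sigma_s$-edge, and the $\{\td(s,u)\}_i$ piece --- to show they accumulate to at most $\delta/2$. The constants $\omega = 100\delta/\eps$ and the $10^{-3}$, $10^{-4}$ correction factors are exactly calibrated so that each of $(\eps/10^4)\omega$, $(\eps/10^3)\{\td(s,u)\}_i$, and the oracle's extra $(\eps/10^4)\alpha(1+\eps)d_G(u,v)$ is $O(\delta/10)$ or smaller, summing comfortably below $\delta/2$.
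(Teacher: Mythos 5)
Your proposal is correct and follows essentially the same route as the paper's proof: the lower bound via the structure of shortest $\sigma_s$-paths (Claim~\ref{cl:structure_apsp}), and the upper bound via the smoothing path, Claim~\ref{cl:contained3}, and the same calibration of the $\eps/10^3$, $\eps/10^4$, and $\omega$ error terms below $\delta/2$. The only (harmless) deviation is that you justify $\lfloor \td(s,u)\rfloor_i = \lfloor \td(s,v)\rfloor_i$ for the RHS-minimizer directly from Claim~\ref{cl:contained3} and the short-smoothing-path bound, where the paper appeals to Claim~\ref{cl:structure_apsp} for the minimizer of its intermediate expression.
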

    
    \begin{proof}
    Recall we define $\td'(s, v) = \min \left( \td(s, v), \td_1(s, v), \td_2(s, v), \td_3(s, v) \right)$. 
    We start with the first inequality, i.e., we need to prove that the left hand side of \Cref{eq:smoothformula_apsp}, that we denote as LHS, is upper bounded by $\td(s, v), \td_1(s, v), \td_2(s, v), \td_3(s, v) $. We certainly have $\text{LHS} \le \td(s, v)$ as $v$ is one of the nodes we minimize over in LHS. 
    Hence we need to prove that it is also upper bounded by $\td_i(s, v)$ for each $i \in \{1,2,3\}$. We recall that whenever $\td_i(s,v)$ is not $+\infty$ (in which case there is nothing to prove), it is defined as: 
    \begin{align}
        \label{eq:ap1}
        \td_i(s, v) 
        &= \lfloor \td(s,v)\rfloor_i + \hd_i(\sigma_s, v) - \omega\\
        &\ge \lfloor \td(s,v)\rfloor_i + d_{H_i}(\sigma_s, v) - \omega
    \end{align}
    Recall that by \Cref{cl:structure_apsp} any shortest path $P$ from $\sigma_s$ to $v$ consists of an edge from $\sigma_s$ to some $u'$ with $\lfloor \td(s, u')\rfloor_i = \lfloor \td(s, v) \rfloor_i$ and then a path $P' = P \setminus \{\{\sigma_s, u'\}\} \subseteq G$. This allows us to rewrite the above inequality as
    \begin{align}
    \td_i(s, v) 
    &\ge \lfloor \td(s, u') \rfloor_i + ((1+\eps/10^3)\{\td(s, u')\}_i + \omega) + d_{H_i}(u', v) - \omega\\
    &\ge \td(s, u') +  \alpha (1+\eps)d_G(u', v)
    \ge \text{LHS}
    \end{align}
    as needed. 

    We continue with the second inequality in \Cref{eq:smoothformula_apsp} whose right hand side we denote as RHS. Let $u'$ be the minimizer of RHS. The node $u'$ has to satisfy that $\td(s, u') + \alpha (1+\eps) d_G(u',v) \le \td(s, v)$ as we can plug in $v$ to the minimization in RHS. Hence, the shortest path from $u'$ to $v$ in $G$ is smoothing according to \Cref{def:smoothing_apsp}. Applying \Cref{cl:contained3}, we get existence of $i$ such that $0 \le \{\td(s, u)\}_i \le \{\td(s, v)\}_i < \omega/2$. For this $i$, we have $\td(s,v)_i < +\infty$ and thus we can write
    \begin{align}
        \label{eq:tricky}
        \td'(s, v) 
        \le \tilde{d}(s, v)_i 
        = \lfloor \td(s, v)\rfloor_i + \hd_i(\sigma_s, v) - \omega
    \end{align}
    We continue by using the  fact that our distance oracle is $(1+\eps/10^4)$-approximate to bound $\hd_i(\sigma_s, v) \le (1+\eps/10^4) d_{H_i}(\sigma_s, v)$. Plugging this to above inequality and using \Cref{cl:structure_apsp} that says that the shortest path $P$ from $\sigma_s$ to $v$ can be written as $\{\{\sigma_s, u\}\} \cup P'$ for $P' \subseteq G$, we get
    \begin{align}
    \label{eq:tricky2}
    \td'(s, v)
    \le \lfloor \td(s, v)\rfloor_i  + \left( 1 + \eps/10^4 \right) \cdot 
        \left(
            \min_{u \in V(G)} 
             ( (1+\eps/10^3)\{\tilde{d}(s, u)\}_i + \omega) + \alpha (1+\eps) d_G(u, v)  
        \right)- \omega
    \end{align}
    We need to simplify the right-hand side of this inequality. First, we get rid of the inner term $(1+\eps/10^3)$ by simplifying the inequality as
    \begin{align}
    \label{eq:tricky22}
    \td'(s, v)
    \le \lfloor \td(s, v)\rfloor_i  + \left( 1 + \eps/10^4 \right) \cdot 
        \left(
            \min_{u \in V(G)} 
             ( \{\tilde{d}(s, u)\}_i + \omega) + \alpha (1+\eps) d_G(u, v)  
        \right)- \omega + \frac{2 \eps}{1000}\omega 
    \end{align}
    Next, let $u'$ be a minimizer of the expression on the right-hand side. 
    \Cref{cl:structure_apsp} implies that $        \lfloor \td(s, u')\rfloor_i = \lfloor \td(s, v)\rfloor_i$. Hence, we can rewrite the right hand side of \Cref{eq:tricky2} by isolating the term proportional to $\eps/10^4$, to get
    \begin{align}
        \label{eq:final}
        &\text{RHS of \Cref{eq:tricky2}} \\
        &= 
            \min_{u \in V(G)} \left(
             \tilde{d}(s, u)  + \alpha (1+\eps)d_G(u, v)  
             \right)
             + \frac{\eps}{10^4} \cdot \min_{u \in V(G)} \left( 
         (\{\tilde{d}(s, u)\}_i  + \omega) + \alpha (1+\eps)d_G(u, v)         \right)+ \frac{2 \eps}{1000}\omega 
    \end{align}
    Finally, we bound the term proporitional to $\eps/10^4$ by using that 
    \begin{align}
    \label{eq:final2}
    \min_{u \in V(G)} \left(
    (\{\tilde{d}(s, u)\}_i  + \omega) + \alpha (1+\eps)d_G(u, v) \right) \le \{\td(s, v)\}_i + \omega \le 2\omega 
    \end{align}
    where we used the fact that the minimization includes $v$. 
    Putting \Cref{eq:tricky2,eq:final,eq:final2} together, we conclude that
    \begin{align}
    \label{eq:finalfinal}
    \td'(s, v) \le  \min_{u \in V(G)} \left(
             \tilde{d}(s, u)  + \alpha (1+\eps)d_G(u, v)  
             \right) + \frac{\eps}{10^4} \cdot 2\omega+ \frac{2 \eps}{1000}\omega . 
    \end{align}
    Finally, the fact $\frac{\eps}{10^4} \cdot 2\omega+ \frac{2 \eps}{1000}\omega  < \delta/2$ concludes the proof.

    \end{proof}

\begin{proposition}\label{prop:smooth_apsp} 
    If \Cref{alg:smoothing_lemma_apsp} is called with an $(\alpha, \delta)$-smooth distance estimate $\tilde{d}$, then the output $\tilde{d}'$ is $(\alpha(1+\eps), \delta/2)$-smooth.
\end{proposition}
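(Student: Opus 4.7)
The plan is to mirror the proof of \Cref{prop:smooth} essentially verbatim, using \Cref{prop:sandwich_apsp} as the key input. All the non-trivial work (the level graph analysis, the smoothing-path structure lemma, and the careful slackness/rounding accounting) is already encapsulated in \Cref{prop:sandwich_apsp}; the remaining step is to extract the pointwise smoothness inequality via a triangle-inequality argument in $d_G$.

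Concretely, fix a source $s$ and any two vertices $v, w \in V(G)$. Set
\[
u^* := \argmin_{u \in V(G)} \left( \tilde d(s, u) + \alpha(1+\eps)\, d_G(u, v) \right).
\]
Applying the lower bound of \Cref{prop:sandwich_apsp} at $v$ with source $s$, and the upper bound at $w$ with source $s$ (and using that the minimum defining the upper bound at $w$ is at most its value at $u^*$), we obtain
\begin{align*}
\tilde d(s, u^*) + \alpha(1+\eps)\, d_G(u^*, v) &\le \tilde d'(s, v), \\
\tilde d'(s, w) &\le \tilde d(s, u^*) + \alpha(1+\eps)\, d_G(u^*, w) + \delta/2.
\end{align*}
Subtracting the first inequality from the second and invoking the triangle inequality $d_G(u^*, w) \le d_G(u^*, v) + d_G(v, w)$ in the underlying graph $G$, the $\tilde d(s, u^*)$ terms cancel and the $d_G(u^*, v)$ terms do as well, leaving
\[
\tilde d'(s, w) - \tilde d'(s, v) \le \alpha(1+\eps)\, d_G(v, w) + \delta/2,
\]
which is exactly the $(\alpha(1+\eps), \delta/2)$-smoothness condition for $\tilde d'(s, \cdot)$. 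Since $s, v, w$ were arbitrary, this holds simultaneously for all sources and all pairs, completing the proof.

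There is no real obstacle here: once \Cref{prop:sandwich_apsp} is available, the only ingredient is the triangle inequality for the \emph{exact} graph distances $d_G$, which is unaffected by the approximation. The argument is oblivious to whether $\tilde d'(s, v)$ is realized by $\tilde d(s, v)$ itself or by one of the $\tilde d_i(s, v)$ values, since \Cref{prop:sandwich_apsp} sandwiches it between the same two quantities either way. The additive slack $\delta/2$ appears in exactly one of the two bounds, so the final additive term is $\delta/2$ as required, and the multiplicative factor $\alpha(1+\eps)$ is transferred unchanged.
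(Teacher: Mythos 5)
Your proof is correct and follows exactly the same route as the paper: the paper's own argument also fixes the minimizer $u$ for $v$, applies the lower bound of \Cref{prop:sandwich_apsp} at $v$ and the upper bound at $w$, and finishes with the triangle inequality in $d_G$. No gaps.
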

\begin{proof}

The proof for each $s$ is very similar to the proof of \Cref{prop:smooth}, we rewrite it here for completeness. 

Let $v, w$ be any vertices in $G$. Letting $u := \argmin_{u' \in V(G)} \tilde{d}(u')+\alpha(1+\eps)d_G(u', v)$ it follows by \Cref{prop:sandwich_apsp} that
$$ \tilde{d}(u) + \alpha(1+\eps)d_G(u, v) \leq \tilde{d}'(v),$$
and
$$ \tilde{d}'(w) \leq \tilde{d}(u)+\alpha(1+\eps)d_G(u, w)+\delta/2.$$

Combining these, we get that
\begin{align*}\tilde{d}'(w)-\tilde{d}'(v) &\leq \alpha(1+\eps)d_G(u, w) + \delta/2 - \alpha(1+\eps)d_G(u, v)\\
&\leq \alpha(1+\eps)\left(d_G(u, v)+d_G(v, w)\right) + \delta/2 - \alpha(1+\eps)d_G(u, v) \\ 
&=  \alpha(1+\eps) d_G(v, w) + \delta/2.
\end{align*}
as desired.

\end{proof}

\begin{proposition}
\label{prop:treelike}
The function $\td'(s, \cdot)$ is tree-like for any $s$, provided that the approximate shortest path oracle $\fO$ and the distance function $\td$ are tree-like. 
\end{proposition}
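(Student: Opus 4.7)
My plan is to follow the same template as the tree-likeness step at the end of the proof of \Cref{thm:smooth}, with the extra bookkeeping forced by the APSP construction (which keeps every edge of $G$ in $H_i$ rather than cutting across levels, and which only populates $\td_i(s,\cdot)$ on the lower half of each level). Fix a source $s \in V(G)$ and a vertex $v \neq s$; the goal is to exhibit $u \in \Nin_G(v)$ with $\td'(s,u) \le \td'(s,v) - \ell_G(u,v)$.

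First, I would dispose of the easy case $\td'(s,v) = \td(s,v)$ by invoking tree-likeness of $\td(s,\cdot)$ to obtain a predecessor $u$ with $\td(s,u) \le \td(s,v) - \ell_G(u,v)$, and then using $\td'(s,u) \le \td(s,u)$. In the remaining case $\td'(s,v) = \td_i(s,v) < \td(s,v)$ for some $i \in \{1,2,3\}$, we must have $\{\td(s,v)\}_i < \omega/2$ and $\td_i(s,v) = \lf \td(s,v)\rf_i + \hd_i(\sigma_s,v) - \omega$. I then apply tree-likeness of $\hd_i$ inside $H_i$ to extract a neighbour $u$ of $v$ with $\hd_i(\sigma_s,u) \le \hd_i(\sigma_s,v) - \ell_{H_i}(u,v)$. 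The possibility $u = \sigma_s$ will be ruled out by the same arithmetic as in the proof of \Cref{thm:smooth}: since $\hd_i(\sigma_s,v) \ge \ell_{H_i}(\sigma_s,v) = (1+\eps/10^3)\{\td(s,v)\}_i + \omega$, the definition of $\td_i$ forces $\td_i(s,v) \ge \td(s,v)$, contradicting the choice of $i$. Hence $u \in V(G)$ and $(u,v) \in E(G)$ with $\ell_{H_i}(u,v) = \alpha(1+\eps)\ell_G(u,v)$.

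To close the argument I need to certify that $\td_i(s,u)$ is finite, i.e., that $\{\td(s,u)\}_i < \omega/2$ and $\lf \td(s,u)\rf_i = \lf \td(s,v)\rf_i$. Once that is in hand, the final computation is clean:
\[
\td_i(s,u) - \td_i(s,v) = \hd_i(\sigma_s,u) - \hd_i(\sigma_s,v) \le -\alpha(1+\eps)\ell_G(u,v) \le -\ell_G(u,v),
\]
so $\td'(s,u) \le \td_i(s,u) \le \td'(s,v) - \ell_G(u,v)$. To establish the structural claim, I plan to combine \Cref{cl:structure_apsp} applied to the tree-like reverse path from $v$ to $\sigma_s$ in $H_i$ (whose last leg $\{\sigma_s,u'\}$ lands in $v$'s level, so the chain cannot leave $v$'s level at the very last step) with the $(\alpha,\delta)$-smoothness of $\td$ and the $+\omega$ offset on every $\sigma_s$-edge of $H_i$: any single-edge detour of the tree-like trace into a neighbouring level would either overshoot the $\omega/2$ budget permitted by $(\alpha,\delta)$-smoothness across a single $G$-edge, or become strictly more expensive than the direct route that stays inside $v$'s level, violating the tree-like inequality at that step.

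The hard part will be precisely this last structural claim. In the directed setting of \Cref{thm:smooth} the analogous statement was automatic because $H_i$ carried no cross-level $G$-edges; here it must be recovered by the energy-type argument just sketched, and this is exactly why the APSP construction uses three overlapping level graphs (shifts of $\omega/3$ rather than $\omega/2$) together with the extra $(1+\eps/10^3)$ slowdown on the $\sigma_s$-edges, so that enough slack remains after accounting for the $(1+\eps/10^4)$-approximation of $\hd_i$ to rule out cross-level tree-like predecessors.
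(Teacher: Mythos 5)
Your outline matches the paper's template in the easy parts (the case $\td'(s,v)=\td(s,v)$, and ruling out $u=\sigma_s$ via $\hd_i(\sigma_s,v)\ge\ell_{H_i}(\sigma_s,v)$), but the step you defer — certifying that the oracle's tree-like predecessor $u$ of $v$ satisfies $\lfloor\td(s,u)\rfloor_i=\lfloor\td(s,v)\rfloor_i$ and $\{\td(s,u)\}_i<\omega/2$ — is essentially the entire content of the proposition, and the mechanism you sketch for it does not work. The dangerous case is a predecessor $u$ in the \emph{same} level as $v$ but with $\{\td(s,u)\}_i\ge\omega/2$ (so $\td_i(s,u)=+\infty$ by \Cref{alg:smoothing_lemma_apsp}). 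Such a $u$ can perfectly well satisfy the oracle's tree-like inequality $\hd_i(\sigma_s,u)\le\hd_i(\sigma_s,v)-\ell_{H_i}(u,v)$ (take $\{\td(s,u)\}_i$ just above $\omega/2$, $\{\td(s,v)\}_i$ just below, $\ell_G(u,v)$ tiny); there is no ``overshoot of the $\omega/2$ budget'' and nothing about this step is ``more expensive than the direct route,'' so no local contradiction with the tree-like inequality at that step is available. The paper does not rule this case out at the level of the oracle at all: it shows instead that if it occurs, then for the shifted index $j=(i+1)\bmod 3$ one has $\td_j(s,v)<\td_i(s,v)$, contradicting the choice of $i$ with $\td'(s,v)=\td_i(s,v)$. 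That argument is global: it applies \Cref{cl:structure_apsp} and \Cref{cl:contained_apsp} to the whole tree-like predecessor path $P=\{\{\sigma_s,u'\}\}\cup P'$ down to $\sigma_s$, and compares $\hd_i(\sigma_s,v)$ against $\hd_j(\sigma_s,v)$ using the fact that the $(1+\eps/10^3)$ slowdown on $\sigma$-edges penalizes $H_i$ (where $\{\td(s,u')\}_i\ge\omega/2-6\delta/\eps$ is large) more than $H_j$ (where $\{\td(s,u')\}_j=\{\td(s,u')\}_i-\omega/3$), with enough slack to absorb the $(1+\eps/10^4)$ oracle error. None of this appears in your sketch, and it is exactly where the three shifted level graphs earn their keep.

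A second, smaller issue: your insistence on finiteness of $\td_i(s,u)$ also forces you to exclude a predecessor lying in a strictly \emph{lower} level, which is unnecessary. The paper handles that case constructively rather than by contradiction: from $\hd_i(\sigma_s,u)\ge d_{H_i}(\sigma_s,u)\ge\omega$ one gets $\td_i(s,v)\ge\lfloor\td(s,v)\rfloor_i+\ell_{H_i}(u,v)$, and since $\td(s,u)<\lfloor\td(s,v)\rfloor_i$ this gives $\td'(s,u)\le\td(s,u)\le\td'(s,v)-\ell_G(u,v)$, so this $u$ certifies tree-likeness of $\td'$ directly even though $\td_i(s,u)=+\infty$. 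So the correct proof neither needs nor attempts your ``structural claim''; it splits on where the oracle's predecessor lands and treats the upper-half same-level case by the $H_i$-versus-$H_j$ comparison described above. As written, your proposal leaves precisely that case unproved.
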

\begin{proof}
Fix $s$. Let $v \not=s$ be an arbitrary node; recall that we need to prove existence of $w_0$ with 
\begin{align}
\label{eq:tree}
\td'(s, w_0) \le \td'(s, v) - \ell_G(w_0, v)    
\end{align}
Also, recall that $\td'(s, v) = \min(\td(s, v), \td_1(s, v), \td_2(s, v), \td_3(s, v))$; if $\td'(s, v) = \td(s, v)$, \Cref{eq:tree} is implied by the assumption that $\td(s, \cdot)$ is tree-like and the fact that $\td'(s, w) \le \td(s, w)$. Next, assume $\td'(s, v) < \td(s, v)$ and let $i$ be such that $\td'(s, v) = \td_i(s, v)$; in particular we have $\td_i(s,v) < +\infty$ in this case. The tree-likeness of our oracle implies that there exists $w$ such that
\begin{align}
\label{eq:tree_have}
\hd_i(s, w) \le \hd_i(s, v) - \ell_{H_i}(v, w). 
\end{align}

We will next go through all possible cases of where the node $w$ possibly lies and argue that we can always set $w_0 = w$ in \Cref{eq:tree}.  

Let us define a path $P$ as in \Cref{cl:structure_apsp}: we start with the node $v$ and each time go from the current node $u$ to its neighbor $u'$ with $\hd_i(s, u') \le \hd_i(s, u) - \ell_{H_i}(u', u)$ until we reach $\sigma_s$. We also write $P = \{\{\sigma_s, u'\}\} \cup P'$ for $P' \subseteq G$. 

Let us start with the case when $P' = \{\}$, i.e., $u' = w$. 
In that case, we use \Cref{eq:tree_have} to observe that $\hd_i(\sigma_s, v) \ge \ell_{H_i}(\sigma_s, v) =  (1+\eps/10^3)\{\td(s, v)\}_i + \omega$. When we plug this in the definition $\td_i(s, v)$, we get
\begin{align}
\td_i(s,v) 
&= \lf \td(s, v) \rf_i + \hd_i(\sigma_s, v) - \omega   \\
&= \td(s,v) + \eps\{\td(s,v)\}_i/100\\
& \ge \td(s,v),
\end{align}
a contradiction with our assumption that $\td_i(s, v) = \td'(s,v) < \td(s,v)$.

Hence, in the rest of the proof we consider the case when $w \in P' \subseteq V(G)$. Here we need to consider several subcases. 
First, assume that $\lfloor \td(s, w) \rfloor_i < \lfloor \td(s, v) \rfloor_i$. In this case we use \Cref{eq:tree_have} to conclude that 
\begin{align}
\hd_i(s, v) 
&\ge \ell_{H_i}(v,w) + \hd_i(s, w)\\
&\ge \ell_{H_i}(v, w) + d_{H_i}(\sigma_s, w) \\
&\ge \ell_{H_i}(v, w) + \omega    
\end{align}
where we used the tree-likeness of $\hd$, the fact $\hd$ is proper, and that any edge from $\sigma_s$ to $V(G)$ has length at least $\omega$. Hence 
\begin{align}
\td_i(s, v)
&=  \lf \td(s, v) \rf_i + \hd_i(\sigma_s, v) - \omega   \\
&\ge \lf \td(s, v) \rf_i + \ell_{H_i}(v, w)  \\
\end{align}
and as our assumption on $w$ implies $ \td(s, w) < \lfloor \td(s, v) \rfloor_i$, we conclude that
\begin{align}
    \td_i(s, v) - \td_i(s, w) 
    &\ge \lf \td(s, v) \rf_i + \ell_{H_i}(v, w)  - \td_i(s, w)
    \ge \ell_{H_i}(v, w),
\end{align}
i.e., we can choose $w_0 = w$ in \Cref{eq:tree} as desired. 

Next, consider the case when $\lfloor \td(s, w) \rfloor_i > \lfloor \td(s, v) \rfloor_i$. This case is in direct contradiction with \Cref{cl:contained_apsp}: If we define $w_{\max} = \argmax_{w\in P}\,\tilde{d}(w)$  and $w_{\min} = \argmin_{w\in P}\,\tilde{d}(w)$, we would have $\td(s, w_{\max}) \ge \td(s, w) \ge \lfloor \td(s, w)\rfloor_i$ and $\td(s, w_{\min}) \le \td(s, v) \le \lfloor \td(s, v) \rfloor_i + \omega/2$ which implies $\td(s, w_{\max}) - \td(s, w_{\min}) \ge \omega/2$ contradicting the bound from \Cref{cl:contained_apsp}. 

So, we have $\lfloor \td(s, w) \rfloor_i = \lfloor \td(s, v) \rfloor_i$. 
As a final case distinction, we consider the cases when $0 \le \{\td(s, w)\}_i < \omega/2$ (and hence $\td(s,w)_i < +\infty$) and when $\omega/2 \le \{\td(s, w)\}_i < \omega$ (and hence $\td(s,w)_i = +\infty$). 

In the first case the tree-likeness follows directly from definition:
\begin{align*}
\td_i(s, v) - \td_i(s,w)
&= \left( \lfloor \td(s,v) \rfloor_i + \hd(s, v) - \omega \right) - \left(  \lfloor \td(s,w) \rfloor_i + \hd(s, w) - \omega\right)\\
&= \hd(s, v) - \hd(s, w)\\
&\ge \ell_{H_i}(v, w) \ge \ell_G(v, w)
\end{align*}

It remains to consider the case when $\omega/2 \le \{\td(s, w)\}_i < \omega$. We will show that this case cannot happen since for $j = (i+1) \mod 3$ we have $\td_j(s, v) < \td_{i}(s, v)$ which contradicts our choice of $i$ as the index for which $\td'(s, v) = \td_i(s, v)$. 
The intuitive reason for this is that the ``slow down'' $(1+\eps/10^3)$ of edges from $V(H_i) \setminus V(G)$ to $V(G)$ affects more the graph $H_i$ than $H_j$. 
For the formal proof we first observe that $\{\td(s, w)\}_i \ge \omega/2$ implies via \Cref{cl:contained_apsp} that $\{\td(s, v)\}_i \ge \omega/2 - 6\delta/\eps$. This in turn implies that $\{\td(s, v)\}_j \ge \omega/2 - 6\delta/\eps - \omega/3 \ge 0$ and thus $\td'(s, v) < + \infty$. In particular, we can write
\begin{align}
\label{eq:almost_done}
    \td_i(s, v) - \td_j(s, v)
    &= \left( \lfloor \td(s, v) \rfloor_i + \hd_i(\sigma_s, v) - \omega \right)
    - \left( \lfloor \td(s, v) \rfloor_j + \hd_j(\sigma_s, v) - \omega \right)\\
    &= -\omega/3 + \hd_i(\sigma_s, v) - \hd_j(\sigma_s, v)
\end{align}
The value $\hd_i(\sigma_s, v)$ is bounded as 
\begin{align*}
    \hd_i(\sigma_s, v) 
    &\ge \ell_{H_i}(P)
    = \left(1 + \frac{\eps}{10^3}\right) \{\td(s, u')\}_i + \omega + \alpha(1+\eps)\ell_{G}(P')
\end{align*}
Similarly, we bound the value $\hd_j(\sigma_s, v)$ by
\begin{align*}
    \hd_j(\sigma_s, v) 
    &\le\left(1 + \frac{\eps}{10^4}\right)  \ell_{H_j}(P)
    = \left(1 + \frac{\eps}{10^4}\right) \left(
        \left(1 + \frac{\eps}{10^3}\right) \{\td(s, u')\}_j + \omega + \alpha(1+\eps)\ell_{G}(P')
    \right)
\end{align*}
Note that we have  $\{\td(s, u')\}_i \ge \omega/2 - 6\delta/\eps$ by \Cref{cl:contained_apsp}, hence $\{\td(s, u')\}_j \ge 0$ and, in particular, $\{\td(s, u)\}_j = \{\td(s, u)\}_i - \omega/3$. This allows us to simplify \Cref{eq:almost_done} as
\begin{align}
     \td_i(s, v) - \td_j(s, v)
    &= - \omega/3 +  \left(1 + \frac{\eps}{10^3}\right) \omega/3 - \frac{\eps}{10^4} \left(
        \left(1 + \frac{\eps}{10^3}\right) \{\td(s, u')\}_j + \omega + \alpha(1+\eps)\ell_{G}(P')
    \right)
\end{align}
We now only need to bound the right hand side error term as
\begin{align*}
        \left(1 + \frac{\eps}{10^3}\right) \{\td(s, u')\}_j + \omega + \alpha(1+\eps)\ell_{G}(P')
    &\le (1+ \eps/10^3) \omega + \omega + \alpha(1+\eps)\cdot 2\delta/(\alpha\eps)
    \le 3\omega
\end{align*}
and we conclude that $     \td_i(s, v) - \td_j(s, v) > 0$, a contradiction with our choice of $i$. 

\end{proof}

\section{Applications in Different Computational Models}
\label{sec:applications_computational_model}

In this section, we formally prove our main results in the PRAM and CONGEST model. Some of the results are slightly more general compared to their counterparts stated in the introduction. 
The results more or less straightforwardly follow by combining the reductions given in this paper together with the state-of-the-art parallel and distributed algorithms for computing approximate distance estimates. 
There are however some low-level technicalities that have to be dealt with and which we now discuss in more detail.

\subsection{Directed Graphs}

\Cref{alg:folklore_directed_smoothing} reduces the computation of exact distances in a graph with polynomially bounded integer edge weights to computing smoothly $2$-approximate distance estimates in graphs with nonnegative real weights.
Also, \Cref{alg:smoothing} reduces the computation of smoothly $2$-approximate distance estimates in graphs with real edge weights in $[1,\poly(n)]$ to the computation of $(1 + O(1/\log n))$-approximate distance estimates.

We would now like to combine these two reductions. That is, answer each oracle call in \Cref{alg:folklore_directed_smoothing} by using \Cref{alg:smoothing}.
The issue that has to be dealt with is that \Cref{alg:folklore_directed_smoothing}  invokes the smoothly $2$-approximate distance oracle on graphs with polynomially bounded maximum edge length, however, the minimum edge weight can be much smaller than one. In fact, the aspect ratio, i.e. the largest edge weight divided by the smallest edge weight, might not be polynomially boundeded.
However, this is merely a technicality.
One can easily adapt \Cref{alg:folklore_directed_smoothing} to ensure that it only invokes the distance oracle on graphs with polynomially bounded aspect ratio by adding $1/poly(n)$ to each edge when invoking the distance oracle, which ensure that the aspect ratio is polynomially bounded. Then, by scaling the weights one can additionally ensure that it only invokes the distance oracle on graphs with real edge weights in $[1,poly(n)]$. By performing these minor modifications, we obtain a reduction from computing exact distances in graphs with nonnegative polynomially bounded integer weights to $O(\log n)$ computations of $(1 + O(1/\log n))$-approximate distance estimates in graphs with nonnegative real weights.

We would now like to compute the $(1 + O(1/\log n))$-approximate distance estimates with the state-of-the art parallel and distributed algorithms. However, they only work in graphs with nonnegative polynomially bounded integer edge lengths.
One can slightly modify \Cref{alg:folklore_directed_smoothing} to ensure that it only invokes the approximate distance oracle on such graphs. First, we can ensure that it only invokes the approximate distance estimates on graphs with polynomial aspect ration by again adding an additive $1/poly(n)$ to each edge. Then, by discretizing and scaling the edge weights, we can ensure that the approximate distance oracle is only invoked on graphs with nonnegative polynomially bounded integer edge weights.
By implementing these minor modifications, we therefore obtain a reduction from computing exact distances in directed graphs with nonnegative polynomially bounded integer edge weights to $O(\log n)$ computations of $(1 + O(1/\log n))$-approximate distance estimates on directed graphs with nonnegative polynomially bounded integer edge weights.
Using the approximate distance algorithm of Cao, Fineman and Russell \cite{cao2020paralleldirected} and observing that all of the reductions can be efficiently implemented in the PRAM model, we therefore obtain the following result in the PRAM model.

\directedpram*

By observing that the reductions can also efficiently implemented in the \congest model, and that the state-of-the-art \congest algorithm for computing approximate distances in directed graphs also works if the source node is a virtual node, we obtain the following result in the \congest model.

\directedcongest*

\subsection{Undirected Graphs}

We already have argued that we can modify \Cref{alg:smoothing} to compute smoothly approximate distance estimates in graphs with polynomially bounded integer edge weights such that is only relies on computing approximate distance estimates in graphs with positive polynomially bounded integer edge weights. We can additionally modify it in such a way that it still preserves tree-likeness. We can therefore answer each oracle call of \Cref{alg:smoothing} using \Cref{alg:tree_constructing}. Moreover, we can also slightly modify \Cref{alg:tree_constructing}, along the lines we modified the other reductions, such that it only relies on approximate distance estimates on graphs equipped with positive polynomially bounded integer edge weights. By doing these modifications, we obtain a reduction for computing strongly $(1+\eps)$-approximate distance estimates in graphs with nonnegative polynomially bounded integer edge weights to computing $(1+\eps)$-approximate distance estimates in graphs with nonnegative polynomially bounded integer weights.

Using the state-of-the-art deterministic approximate distance algorithm for undirected graphs and observing that all of the reductions can be efficiently implemented in the PRAM model, we therefore obtain the following result in the PRAM model.

\undirectedpram*

By observing that the reductions can also efficiently implemented in the \congest model, and that the state-of-the-art \congest algorithm for computing approximate distances in undirected graphs also works if the source node is a virtual node, we obtain the following result in the \congest model.

\undirectedcongest*

\end{document}